\newcommand{\typof}{1} %
\newcommand{\longv}[1]{\ifthenelse{\equal{\typof}{0}}{}{#1}}
\newcommand{\shortv}[1]{\ifthenelse{\equal{\typof}{0}}{#1}{}}
\newcommand{\longshortv}[2]{\ifthenelse{\equal{\typof}{0}}{#2}{#1}}
\newcommand{\drop}[0]{\ifthenelse{\equal{\typof}{0}}{}{}}
\newtheorem{definition}{Definition}[section]
\newtheorem{notation}{Notation}
\newtheorem{theorem}{Theorem}[section]
\newtheorem{corollary}{Corollary}[section]
\newtheorem{lemma}[theorem]{Lemma}
\newtheorem{proposition}[theorem]{Proposition}
\newtheorem{remark}{Remark}[section]
\newcommand{\B}[1]{\mathbf{#1}}
\newcommand{\SF}[1]{\mathsf{#1}}
\newcommand{\TT}[1]{\mathtt{#1}}
\newcommand{\C}[1]{\mathcal{#1}}
\newcommand{\BB}[1]{\mathbb{#1}}
\newcommand{\OV}[1]{\mathbin{\overline{#1}}}
\newcommand{\F}[1]{\mathfrak{#1}}
\newcommand{\To}{\Rightarrow}
\newcommand{\model}[1]{\llbracket#1 \rrbracket}
\newcommand{\monuno}{\mathsf{DyIL_{\To,\forall^{1}}}}
\newcommand{\MON}{\mathrm{1Mon}^{\To,\forall}}
\newcommand{\Bool}{\mathsf{Bool}}
\newcommand{\Nat}{\mathsf{Nat}}
\newcommand{\bb}[1]{{\color{black}#1}}
\newcommand{\rr}[1]{{\color{black}#1}}
\newcommand{\Fone}{{\mathrm{F}_{1}}}
\newcommand{\STLC}{{\mathrm{ST\lambda C}}}
\newcommand{\Nda}{{\mathrm{F_{at}}}}
\newcommand{\Ndac}{{\mathrm{F_{at}^{\clubsuit}}}}
  \newcommand{\rank}[1]{\mathsf{r}(#1)}
  \newcommand{\LET}[3]{\mathsf{let}\ #1 \ \mathsf{be} \ #2 \ \mathsf{in} \ #3}
  \newcommand{\IOand}{\mathsf{IO}^{\times}}
  \newcommand{\IOor}{\mathsf{IO}^{+}}
  \newcommand{\ML}{\mathrm{ML}}
\definecolor{color0}{HTML}{4682B4}
\newsavebox{\mypti}
\newsavebox{\mypto}
\newsavebox{\mypta}
\newsavebox{\myptu}
\newsavebox{\mypte}
\title{What's Decidable about (Atomic) Polymorphism?\\ (Extended Version)}
\author{Paolo Pistone\footnote{Universit\`a di Bologna, \texttt{paolo.pistone2@unibo.it}} \and Luca Tranchini\footnote{Eberhard Karls Universit\"at T\"ubingen, \texttt{luca.tranchini@gmail.com}}
}
\date{}
\begin{document}

\maketitle

%TODO mandatory: add short abstract of the document
\begin{abstract}
Due to the undecidability of most type-related properties of System F like type inhabitation or type checking, restricted polymorphic systems have been widely investigated (the most well-known being ML-polymorphism). 
In this paper we investigate System Fat, or atomic System F, a very weak predicative fragment of System F whose typable terms coincide with the simply typable ones. 
We show that the type-checking problem for Fat is decidable and we propose an algorithm which sheds some new light on the source of undecidability in full System F. 
Moreover, we investigate free theorems and contextual equivalence in this fragment, and we show that the latter, unlike in the simply typed lambda-calculus, is undecidable.
\end{abstract}

%
%
%\title{The undecidability of atomic polymorphism}%\keywords{This is not required.}
%%\subjclass{This is not required.}
%\author{Paolo Pistone, Luca Tranchini, Mattia Petrolo}
%\date{}
%\revauthor{Paolo, Pistone}
%\address{Wilhelm-Schickard-Institut, Universit\"at T\"ubingen\\
%Sand 13,
%D-72076 \\ Tübingen (Germany) 
%}
%\email{paolo.pistone@uni-tuebingen.de}
%
%\author{Luca Tranchini}
%\revauthor{Luca, Tranchini}
%\address{Wilhelm-Schickard-Institut, Universit\"at T\"ubingen\\
%Sand 13,
%D-72076 \\ Tübingen (Germany) 
%}
%\email{luca.tranchini@gmail.com}
%
%\author{Mattia Petrolo}
%\revauthor{Mattia, Petrolo}
%\address{Wilhelm-Schickard-Institut, Universit\"at T\"ubingen\\
%Sand 13,
%D-72076 \\ Tübingen (Germany) 
%}
%\email{luca.tranchini@gmail.com}

%\paragraph{Keywords} Atomic polymorphism, instantiation overflow, Russell-Prawitz translation, naturality condition, Yoneda lemma, invertible connective, second order logic

%\tableofcontents

\section{Introduction}
% !TEX root = Undecidability.tex

Polymorphism has been a central topic in programming language theory since the late sixties. 
Today, most general purpose programming languages employ some kind of polymorphism.
At the same time, under the Curry-Howard correspondence, quantification over types corresponds to quantification over propositions, that is, to second-order logic. In particular, 
System F, the 
archetypical  
type system for polymorphism, can be seen as a proof-system for (the $\To,\forall$-fragment of) second-order intuitionistic logic.

%- Polymorphism is very attractive: 
%it is at the heart of several concrete programming languages

In spite of the numerous applications of polymorphism, practically all interesting type-related properties of (Curry-style) System F (e.g.~type checking, type inhabitation, etc.) are undecidable, making this language impractical for any reasonable implementation.
This is one of the reasons why a wide literature has investigated more manageable subsystems of System F. 
Notably, \emph{$\ML$-}\emph{polymorphism} \cite{Milner1978, Milner1985, Milner} has found much success due to its  decidable type-checking.

Another direction of research was that of investigating \emph{predicative} subsystems of System~F \cite{Leivant1989, Leivant1991, Leivant1994, Danner99}. In particular, the so-called \emph{finitely stratified polymorphism} \cite{Leivant1991} yields a  stratification of System F through a sequence of predicative systems $(\mathrm F_{n})_{n\in \BB N}$ of growing expressive power (notably, $F_{0}$ is the simply typed $\lambda$-calculus $\STLC$, and 
$\ML$-polymorphism coincides with the rank-1 part of $\Fone$). 
Yet, in spite of such limitations, type checking becomes undecidable already at level 1 of this hierarchy \cite{Schubert2012}.

Could one tell exactly at which point, in the range from the simply typed $\lambda$-calculus and $\ML$ to full System F, the type-related properties of polymorphism become undecidable?

\subparagraph*{Atomic Polymorphism}
In more recent times Ferreira et al.~have undertaken the investigation of what can be seen as the least expressive predicative fragment of F, System $\Nda$, or \emph{atomic} System F \cite{Ferreira2013, Ferreira2009, Ferreira2015, Ferreira2016, Ferreira2018, Ferreira2020, EspiritoSanto2020b}.  The predicative restriction of $\Nda$ is such that a universally quantified type $\forall X.A$ can be instantiated solely with an atomic type, i.e.~a type variable.
In this way $\Nda$ sits in between level 0 (i.e.~$\STLC$) and level 1 of the finitely stratified hierarchy. 
Actually, $\Nda$ can be seen as a \emph{type refinement} system (in the sense of \cite{Mellies2015}) of $\STLC$, since all terms typable in $\Nda$ are simply typable (cf.~Lemma \ref{lemma:stlc}). 

In spite of its very limited expressive power, Ferreira et al.~have shown that, thanks to polymorphism, $\Nda$ enjoys some proof-theoretic properties that $\STLC$ lacks. In particular, they defined a predicative variant of the usual encoding of sum and product types inside F, yielding an embedding of intuitionistic propositional logic inside $\Nda$. 
However, while propositional logic is decidable, 
provability in second-order propositional intuitionistic logic, even with the atomic restriction, is undecidable \cite{Sobolev}. This argument (as recently observed in \cite{Protin2021}) can be extended to show that the \emph{type inhabitation} property, which is decidable for $\STLC$, is undecidable for $\Nda$.

%
%
%
%As the decidability of propositional logic implies the decidability of \emph{type inhabitation} in $\STLC$, it was recently observed \cite{Protin2021} that 
%

%an  
%undecidability argument for System F from \cite{CH} also applies to $\Nda$,
%% 
%% provability (i.e.~type inhabitation, see below) is undecidable also in $\Nda$ 
% (i.e.~to the $\forall,\to$-fragment of $\mathrm{LJ2_{at}}$).

%atomic second-order intuitionistic   type inhabitation in System F from \cite{CH} also applies to $\Nda$  \cite{Protin2021}.

\paragraph*{Contributions}

In this paper we investigate the following type-related properties of System $\Nda$:
%
%In spite of the aforementioned results, the decidability of most type-related properties of $\Nda$ is still open. In this paper we provide answers to some of these problems, yielding new insights on the broader question of understanding where the source of undecidability in full System F lies.
%We will consider the following properties:
\begin{center}
\begin{tabular}{ l l }
Type inhabitation (TI):  & given $A$, is there $t$ such that $\vdash t:A$? \\
Type-checking (TC): & given $\Gamma, A, t$, does $\Gamma\vdash t:A$?\\
Typability (T): & given $\Gamma, t$, is there $A$ such that $\Gamma\vdash t:A$?\\
Contextual equivalence (CE): & given $A, t,u$ such that $\vdash t,u:A$, 
do $\TT C[t]$ and $\TT C[u]$ reduce \\
& to the same boolean,
for all context $\TT C[\ ]: A\To \mathsf{Bool}$?
% \\
%   &  do $vt$ and $vu$ reduce to the same boolean?
\end{tabular}

\end{center}
 \noindent
 In Fig.~\ref{fig:table} we sum up what is already known and what is established in this paper (in bold) about such properties in predicative fragments of System F.
  Our main results are that in $\Nda$ (TC) and (T) are both decidable, and that (CE) is decidable if one restricts oneself to numerical functions, and undecidable in the general case.

Several decidability properties of $\Nda$ are tight, meaning that they all fail already for $\Fone$. 
%In particular, to the best of our knowledge, $\Nda$ is the \emph{only} known first-class fragment of F admitting a decidable type-checking (if one excludes $\ML$-style systems in which first-class polymorphism is encoded through type-coercions, e.g.~\cite{Remy2003, Sulzmann2007}).
In these cases, our arguments can be used to shed some new insights on the broader question of understanding where the source of undecidability for such properties in full System F lies.

\begin{figure}
\begin{center}
\adjustbox{scale=0.9, center}{
\begin{tabular}{ |c || c c c c c|}
\hline 
 &  $\mathrm F_{0}=\STLC$ &  $\Nda$ & $\ML$ & $\Fone$ & F \\
 \hline\hline
TI &
decidable \cite{Statman1979} & undecidable \cite{Protin2021} & decidable & undecidable \cite{CH} & undecidable \cite{Lob1976} \\
\hline
TC &decidable  \cite{Hindley1969} & \textbf{decidable}& decidable \cite{Milner} & undecidable \cite{Schubert2012}&  undecidable \cite{Wells}
\\ \hline
T  &decidable  \cite{Hindley1969} & \textbf{decidable} & decidable \cite{Milner} &  undecidable \cite{Schubert2012}&  undecidable \cite{Wells}
\\ \hline
$
\begin{matrix}
\text{CE} \\ 
\text{\small (for numerical} \\ \text{\small functions)}
\end{matrix}
$
 &decidable \cite{Padovani}
& \textbf{decidable} & \textbf{undecidable} &  \textbf{undecidable} &  undecidable$^{*}$  \\
 \hline
$
\begin{matrix}
\text{CE} \\ 
\text{\small (full)}
\end{matrix}
$
 &decidable  \cite{Padovani} &  \textbf{undecidable}&  \textbf{undecidable}&  \textbf{undecidable}&  undecidable$^{**}$\\
\hline
\end{tabular}
}
\end{center}
\caption{Decidable and undecidable properties of System F and some predicative fragments (in bold the properties established in the present paper).\\
{\small
$^{*}$: easy consequence of Rice's theorem and the typability of all primitive recursive functions in F.\\
$^{**}$: consequence of the undecidability of (CE) for numerical functions.
}}
\label{fig:table}
\end{figure}

\paragraph*{Plan of the paper}
After recalling the syntax of F and its fragment in Curry-style and Church-style, we address the properties (TI), (TC), (T) and (CE).

\subparagraph*{Type Inhabitation}
In Section \ref{sec:undeci} we shortly discuss the undecidability of (TI), by showing how the argument in \cite{CH} for System F applies to $\Nda$ too. This argument yields an encoding inside $\Nda$ of an undecidable fragment of \emph{first-order} intuitionistic logic. We also observe that $\Nda$ is actually equivalent to a first-order system, namely to the $\To,\forall$-fragment $\MON$ of first-order \emph{monadic} intuitionistic logic in a language with a unique monadic predicate. 
To our knowledge, the undecidability of $\MON$ has not been previously observed (although some slightly more expressive fragments - e.g.~including a primitive disjunction \cite{Gabbay1981} or  finitely many monadic predicates \cite{Schubert2016} - have been proven undecidable).

% 

%
%- A more general approach is given by Leivant's stratified hierarchy ...  in particular ML is the rank-1 fragment of 
%the first level $F_1$ of Leivant's hierarchy
%
%- However, already for $F_{1}$ type checking and typability are undecidable. A natural question is where the source of undecidability lies. 
%
%- In recent times Ferreira et al. have started the investigation of what can be seen as the least expressive predicative fragment of F, System Fat. Actually Fat is strictly less expressive than ANY fragment in Leivant's hierarchy.
%However, they have shown that Fat is expressive enough that one can embed IPC in it.
%
%- The decidability of most type-related properties of Fat is still open. In this paper we provide answers to these questions.
%Notably, we show that the study of Fat highlights where things start to ``go wrong'' with polymorphism. 
%Moreover, while most undecidability results exploit encodings of first-order systems, Fat is ITSELF be seen as a first order system.
%In fact we show that
%
%(1) Fat and $F_{1}$ are equivalent at the level of provability (and in particular both undecidable by an argument of Urzyczyn and Soerensen). Hence predicative instantiations are not necessary to make provability undecidable. 
%Undecidability is a standard property of FIRST-ORDER systems. 

\subparagraph*{Type-Checking and Typability}
%We then consider the type checking problem. 

In Section \ref{sec:deci} we consider the type-checking problem. 
The undecidability of (TC) for System F was established by Wells in \cite{Wells}, and was later extended to all predicative systems $\mathrm F_{n}$, for $n>0$ \cite{Schubert2012}. In all these cases this result was obtained by reducing an undecidable variant of \emph{second-order unification} (SOU) to the type-checking problem. On the other hand, the decidability of (TC) for $\ML$ (and $\mathrm F_{0}=\STLC$) is based on the famous Hindley-Milner algorithm \cite{Milner}, which reduces this problem to \emph{first-order unification} (FOU), which is decidable.   

%This is where the picture starts to change: we show that (TC) is decidable for (Curry-style) $\Nda$. Our decision algorithm is somehow illuminating on where 
%(TC) and (T) start to become undecidable.

The fundamental source of undecidability of SOU is 
the presence of \emph{cyclic} dependences between second order variables, expressed in the simplest case by equations of the form $X(t) = f(v_{1},\dots, v_{k-1},X(u), v_{k+1},\dots,v_{n})$. In fact, \emph{acyclic} SOU is decidable \cite{Jordi1998}.
When type-checking polymorphic programs, such cyclic dependencies are generated by \emph{self-applications}, i.e.~terms of the form $\lambda \vec x. x t_{1}\dots t_{k-1}x t_{k+1}\dots t_{n}$. In fact, in this case the type $\forall X.A$ assigned to the variable $x$ must satisfy a cyclic equation of the form 
$$A[X\mapsto C_{1}]= B_{1}\To \dots \To B_{k-1}\To A[X\mapsto C_{2}]\To B_{k+1}\To \dots \To B_{n}$$
(where $C_{1},C_{2}$ are suitable type instantiations of $X$). 
By constrast, no term containing a self-application can be typed in $\STLC$, since cyclic equations cannot be solved by FOU.

Since the terms typable in $\Nda$ can also be typed in $\STLC$, it follows that   
self-applications cannot be typed in $\Nda$ either. Using this observation, 
we describe a type-checking algorithm for $\Nda$ which works in two phases: first, it checks (using FOU) the presence of cyclic dependencies, and returns $\mathtt{failure}$ if it detects one; then, if phase 1 succeeds, it applies (a suitable variant of) acyclic SOU to decide type-checking. 
From the decidability of (TC), we deduce the decidability of (T) by a standard argument (see \cite{Barendregt92}).

\subparagraph*{Contextual Equivalence}

Studying the typable terms of $\Nda$ might not seem very interesting from a computational viewpoint, 
as these terms are already typable in $\STLC$. 
However, due to the presence of some form of polymorphism, investigating programs in $\Nda$ can be interesting  for \emph{equational} reasoning, as we investigate in Sections \ref{sec:equivalence0} and \ref{sec:equivalence}. In standard type systems, beyond the standard notions of program equivalence arising from the operational semantics (i.e.~$\beta\eta$-equivalence), there may exist several other {congruences} arising from either denotational models or from some notion of \emph{contextual equivalence}.
In $\STLC$, it is well-known that $\simeq_{\beta\eta}$ coincides with the congruence induced by any infinite extensional model \cite{Statman1982}, as well as with several notions of contextual equivalence (see \cite{Barendregt2013}, \cite{Dosen2001}).
%
%
%
%
% \cite{}: two programs $t,u:A$ are equated when for all context $\TT C[\ ]:A \vdash O$ (where $O$ is a fixed type of \emph{observables}, $\TT C[t]\simeq_{\beta\eta}\TT C[u]:O$.  
%We will here focus on two standard variants of contextual equivalence, the one given by $O=\Bool=\forall X.X\To X\To X$, noted $\simeq_{\Bool}$, and the one given by $O=\Nat=\forall X.(X\To X)\To (X\To X)$, noted $\simeq_{\Nat}$.
In polymorphic type systems the picture is rather different, since $\beta\eta$-equivalence is usually weaker than the congruences arising from extensional models (see \cite{Bainbridge1990,Hasegawa2009}), and also weaker than standard notions of contextual equivalence.
Moreover, while $\beta\eta$-equivalence is decidable, contextual equivalence is undecidable.
%
%
%It is well-known that in System F usual $\beta\eta$-equivalence does not coincide with contextual equivalence, that is, the coarsest congruence on typable terms which does not equate booleans and/or natural numbers. In fact, the latter equates much more terms than the former. Moreover, unlike $\beta\eta$-equivalence, which is decidable as a consequence of the strong normalization property of System F, contextual equivalence is undecidable. 
%
Since in many practical situations (see \cite{Voigt2008, Ahmed2017}) it is more convenient to reason up to notions of equivalence stronger than $\beta\eta$-equivalence,
several techniques to compute (approximations of) contextual equivalence have been investigated, e.g.~\emph{free theorems} \cite{Wadler1989}, \emph{parametricity} \cite{Reynolds1983}, \emph{dinaturality} \cite{Bainbridge1990}. 

Our investigation of contextual equivalence starts in Section \ref{sec:equivalence0} with an exploration of equational reasoning in $\Nda$ using free theorems.
%
%, when two programs of $\Nda$ are equivalent as programs in F, they are contextually equivalent also in $\Nda$. This justifies the use of free theorems to deduce equivalences in $\Nda$. 
% By reasoning in this way
% 
We show that the predicative encodings of sum and product types of Ferreira et al.~produce products and coproducts in $\Nda$ in the categorical sense, provided terms are considered up to (CE) (a fact which is known to hold in F for the usual, impredicative, encodings \cite{Hasegawa2009, StudiaLogica}). 
We then investigate (CE) for typable numerical functions. Using the fact that the primitive recursive functions are \emph{uniquely} defined in System F up to (CE), 
%(that is, for all terms $t,u$ defining the same function, one can prove that $t$ is equivalent to $u$), 
we show that (CE) for the representable numerical functions is decidable in $\Nda$, and undecidable in $\ML$.
Such results rely on the observation that (CE) becomes undecidable as soon as some super-polynomial function (like bounded multiplication) becomes representable.  
 As a consequence, (CE) is undecidable in all fragments $F_{n}$, for $n>0$, of the finitely stratified hierarchy as well.
% 
%To the best of our knowledge, the decidability of (CE) for predicative fragments of F has not been previously investigated. %We show that (CE) is undecidable in $\Nda$, $\ML$ and, as a consequence, in all fragments $\mathrm F_{n}$ with $n>0$. For $\Nda$, our argument is based on a reduction to (CE) of the type inhabitation problem of $\Nda$.  

Finally, in Section \ref{sec:equivalence}
we establish that (CE) is undecidable also in $\Nda$, by showing that the type inhabitation problem for a suitable extension of $\Nda$ can be reduced to it.
This result, together with the previous ones, shows that there is no hope to get a decidable contextual equivalence for polymorphic programs through a predicative restriction, and one has rather to look for other kinds of restrictions (see for instance \cite{Pistone2021}).

\section{Predicative Polymorphism and System $\Nda$}\label{sec:pre}
%\input{axioms}

% !TEX root = Undecidability.tex

%\subsection{Two Systems for Predicative Polymorphism}
The systems we consider in this paper are all restrictions of usual Church-style and Curry-style System F. The types are defined in both cases by the grammar 
$$
A,B::= X\mid A\To B \mid \forall X.A
$$
starting from a countable set $\TT{Var}^{2}$ of type variables $X_{1},X_{2},\dots$. 
The terms of Church-style System F are defined by the grammar below:
\begin{align*}
t^{A},u^{A}& ::= 
x^{A}\mid (\lambda x^{A}.t^{B})^{A\To B} \mid t^{B\To A}u^{B}\mid (\Lambda X.t^{A})^{\forall X.A} \mid (t^{\forall X.A}C)^{A[C/X]}
\end{align*}
For readability, we will often omit type annotations, when these can be guessed from the context.
The terms of Curry-style System F are standard $\lambda$-terms, with typing rules defined as in Fig.~\ref{fig:fcurry}, where $\Gamma$ indicates a partial function from term variables to types with a finite support, and by $X\notin \mathrm{FV}(\Gamma)$ we indicate that $X$ does not occur free in any type in $\mathrm{Im}(\Gamma)$.
We will call the type $C$ occurring in $(t^{\forall X.A}C)^{A[C/X]}$ and in the rule $\forall$E in Fig.~\ref{fig:fcurry} the \emph{witness} of the type instantiation.

We will indicate \emph{term contexts} (i.e.~terms with a unique occurrence of the \emph{hole} $[\ ]$) as 
$\TT C[\ ], \TT D[\ ]$. Moreover, we let $\TT C[\ ]: A \vdash B$ be a shorthand for $x\mapsto A\vdash \TT C[x]:B$.

%
%
%In the variant \emph{\`a la Church} the terms of System F are obtained by adding to standard $\lambda$-term constructors the \emph{type abstraction} $\Lambda X.t$ and \emph{type application} $tA$ constructors, $\beta$-reduction is enriched with the rule $(\Lambda X.t)A\longrightarrow_{\beta}t[A/X]$, and $\eta$-expansion is enriched with the rule 
%$t \longrightarrow_{\eta} (\Lambda X.t)X$. The typing rules are recalled in Fig.~\ref{fig:fchurch}.
%In the variant \emph{\`a la Curry} the terms of System F are standard $\lambda$-terms, and the typing rules 
%$\forall$I$_{\mathrm{Ch}}$ and $\forall$E$_{\mathrm{Ch}}$ are replaced by the rules
%$\forall$I$_{\mathrm{Ch}}$ and $\forall$E$_{\mathrm{Ch}}$ in Fig.~\ref{fig:fcurry}. 

%
%\begin{figure}
%jfnasjfs
%\caption{Type rules for System F \emph{\`a la Church}.}
%\label{fig:fchurch}
%\end{figure}
\begin{figure}
\fbox{
\begin{minipage}{\linewidth}
$$
\begin{matrix}
\AXC{$\Gamma(x)= A$}
\RL{$\mathrm{Var}$}
\UIC{$\Gamma \vdash x: A$}
\DP
\\ 
\\
%
%\end{align*}
%\begin{align*}
\AXC{$\Gamma, x\mapsto A\vdash t: B$}
\RL{$\mathrm{Abs}$}
\UIC{$\Gamma \vdash \lambda x.t: A\To B$}
\DP
\qquad 
\AXC{$\Gamma \vdash t: A\To B$}
\AXC{$\Gamma \vdash u:A$}
\RL{$\mathrm{Appl}$}
\BIC{$\Gamma \vdash tu: B$}
\DP\\
  \\
\AXC{$\Gamma\vdash t: B$}
\AXC{$X\notin \mathrm{FV}(\Gamma)$}
\RL{$\forall$I}
\BIC{$\Gamma \vdash t:\forall X.A$}
\DP
 \qquad 
\AXC{$\Gamma \vdash t: \forall X.A$}
\RL{$\forall$E}
\UIC{$\Gamma \vdash t: A[C/X]$}
\DP
\end{matrix}
$$
\end{minipage}
}
\caption{Typing rules for Curry-style System F.}
\label{fig:fcurry}
\end{figure}

System F is impredicative: \emph{any} type can figure as a witness. In particular, one can construct ``circular'' instantiations, in which a term of type $\forall X.A$ is instantiated with the same type as witness.
A \emph{predicative} fragment of System F is one in which witnesses are restricted in such a way to avoid such circular instantiations. %Any such fragment comes in both \emph{\`a la Church} and \emph{\`a la Curry} flavor.

We will focus on three predicative fragments of System F, both in Church- and Curry-style. 
The first is System $\Fone$, which is the fragment of F in which witnesses are \emph{quantifier-free}. 
%More precisely, Church-style $\Fone$ is obtained by restricting $(t^{\forall X.A}C)A[C/X]$ to $C$ being quantifier-free type, and 
%Curry-style $\Fone$ is obtained by restricting the $\forall$E rule to $C$ being a quantifier-free type.
The second is System $\Nda$, which is the fragment of F in which witnesses are \emph{atomic}, that is, type variables. 
%which is defined in a similar way but with type instantiations being type variables.
%
%System $\Fone$ is actually the first level of a more general {stratification} of System F in stronger and stronger predicative fragments \cite{Leivant1991}. 
%System $\Nda$ is a yet more restricted system, that has been recently explored since \cite{Ferreira2013}.
The third is system $\ML$ \cite{Milner1978, Milner}, which essentially coincides with the \emph{rank 1} fragment of $\Fone$.
For any type $A$, the {rank} $\rank A$ is the maximum number of nesting between $\To$ and $\forall$, and is defined inductively by $\rank X=0$, $\rank{A\To B}= \max\{\rank{A}+1, \rank{B}\}$ and $\rank{\forall X_{1}\dots X_{n}.A}= \rank{A}+1$ (where $n>0$ and $A$ does not start with a quantifier).
%
%\begin{cases}  \rank{A}+1 & \text{ if }A\neq \forall Y.A' \\ \rank{A} & \text{ otherwise}    \end{cases}$.
To define $\ML$ (since type-checking is decidable in $\ML$, we limit ourselves to Curry-style) one first has to enrich the set of $\lambda$-terms with the $\mathsf{let}$-constructor, and add a rule
$$
\AXC{$\Gamma, x\mapsto A\vdash t:B$}
\AXC{$\Gamma \vdash u:A$}
\RL{$\mathrm{let}$}
\BIC{$\Gamma \vdash \LET{x}{u}{t}:B$}
\DP
$$
$\ML$ is the fragment of the resulting system in which typing rules only contain judgements $\Gamma \vdash t:A$, where $\rank A\leq 1$ and for all $B\in \mathrm{Im}(\Gamma)$, $\rank B\leq 1$.

Observe that in $\Fone$ one can encode $\LET{x}{u}{t}$ by $(\lambda x.t)u$, so that the rule above becomes derivable. This is not possible in $\ML$, due to the rank restriction.
%\begin{remark}
%Due to the rank restriction, for all types $A,B$ of $\ML$, the type $A\To B$ need not be a type of $\ML$. For this reason the usual syntax of $\ML$ includes a $\mathsf{let}$ constructor, so that if $\Gamma, x:A\vdash t:B$ and $\Gamma \vdash u:A$ hold, then $\Gamma \vdash \LET{x}{u}{t}: B$ holds.
%\end{remark}
%
%
%
%- Introduce (Curry) system F, stratification and Fat
%
%	- Recall System F Curry and Church
%	- Predicativity: restrict forall E-rule to avoid circular instances
%	- F1 and Fat--> quantifier-free instances/atomic instances. 
%	- A more general stratification...

\subparagraph*{Impredicative and Predicative Encodings}

It is well-known that sum and product types can be encoded inside System F by 
\begin{align*}
A \widetilde{+}B&= \forall X.(A\To X)\To (B\To X)\To X \\
A \widetilde{\times}B& = \forall X.(A\To B\To X)\To X
\end{align*}
where the type variable $X$ is fresh. The encoding of term constructors $\iota_{i}(\cdot)$, $\langle \cdot, \cdot\rangle$ 
and term destructors $\mathrm{Case}_{C}(\cdot, x^{A}.\cdot, x^{B}.\cdot)$ and $\pi_{i}(\cdot)$ 
is given (in Church-style) by:
\begin{align*}
\iota_{1}(t)  &= \Lambda X.\lambda f^{A\To X}. \lambda g^{B\To X}.ft 
& \qquad \mathrm{Case}_{C}(t, x^{A}.u, x^{B}.v) & =  tC( \lambda x^{A}.u)(\lambda x^{B}.v)\\
  \iota_{2}(t)  & = \Lambda X.\lambda f^{A\To X}. \lambda g^{B\To X}.gt 
 & \qquad \pi_{1}(t)& = t A \lambda x^{A}.\lambda y^{B}.x\\ 
\langle t,u \rangle &  = \Lambda X. \lambda f^{A\To B\To X}.ftu
& \qquad\pi_{2}(t)  & = t B \lambda x^{A}.\lambda y^{B}.y
\end{align*}
%The encoding of term destructors $\mathrm{Case}_{C}(\cdot, x^{A}.\cdot, x^{B}.\cdot)$ and $\pi_{i}(\cdot)$ is given by:
%\begin{align*}
%\end{align*}
At the level of provability, the encoding is \emph{faithful}: a type is inhabited in the extension of System F with sum and product types iff the encoded type is inhabited in System F. Moreover, the encoding of $\widetilde +$ satisfies the  \emph{disjunction property}: $A\widetilde +B$ is inhabited iff either $A$ or $B$ are inhabited. 

At the level of conversions, the encoding translates $\beta$-reduction rules for sum and product types into $\beta$-reduction in F. Instead, the $\eta$-rules for sums and products are not translated by the $\beta$- and $\eta$- rules of System F. Yet, the equivalence generated by $\beta$- and $\eta$-rules is preserved by \emph{contextual equivalence} in System F (more on this in Section \ref{sec:equivalence0}).

The encoding of sum and product types is impredicative: the encoding of term destructors requires witnesses of arbitrary complexity. Notably, given a term $t$ of type $A\widetilde + B$, the term 
$\mathrm{Case}_{A\widetilde +B}(t, x^{A}. \iota_{1}(x), x^{B}.\iota_{2}(x))$, of type $A\widetilde + B$, has a circular instantiation of $A\widetilde + B$. 

In \cite{Ferreira2013} an alternative, predicative, encoding was defined within System $\Nda$. The fundamental observation is that 
the unrestricted $\forall$E rule is derivable from the restricted one for the types of the form $A\widetilde + B$ and $A\widetilde \times B$ (the authors call this phenomenon \emph{instantiation overflow}). In fact\footnote{We essentially follow here an encoding from \cite{EspiritoSanto2020b}.}, for any type $C$ of System F one can define contexts $ \IOor_{C}[ \ ]:A\widetilde + B\vdash (A\To C)\To (B\To C)\To C$ and 
$\IOand_{C}[\ ]: A\widetilde \times B\vdash (A\To B\To C)\To C$ by induction on $C$:
\begin{align*}
\IOor_{X}[\ ]&=\IOand_{X}[\ ] = [\ ]X \\
\IOor_{C_{1}\To C_{2}}[\ ]& = \lambda f^{A\To C_{1}\To C_{2}}.\lambda g^{B\To C_{1}\To C_{2}}.\lambda y^{C_{1}}. \IOor_{C_{2}}[\ ] (\lambda z^{A}.fzy)(\lambda z^{B}.gzy)\\
\IOand_{C_{1}\To C_{2}}[\ ]& = \lambda f^{A\To B\To C_{1}\To C_{2}}.\lambda y^{C_{1}}. \IOor_{C_{2}}[\ ] (\lambda z^{A}\lambda w^{B}.fzwy)\\
\IOor_{\forall Y.C'}[\ ]& = \lambda f^{A\To \forall Y.C'}.\lambda g^{A\To \forall Y.C'}.\Lambda Y. \IOor_{C'}[\ ] (\lambda z^{A}.fzY)(\lambda z^{B}.gzY) \\
\IOand_{\forall Y.C'}[\ ]& = \lambda f^{A\To B\To \forall Y.C'}.\Lambda Y. \IOor_{C'}[\ ] (\lambda z^{A}.\lambda w^{B}.fzwY)
\end{align*}
One can thus encode the type destructors as for F, by replacing the type application $xC$ in $\mathrm{Case}_{C}(t, x^{A}.u,x^{B}.v)$ with either $\IOor_{C}[x]$ or 
$\IOand_{C}[x ]$.

At the level of provability, this embedding is {faithful} when restricted to {simple types}, i.e.~for the intuitionistic propositional calculus (see \cite{Ferreira2015}): a simple type (possibly containing finite sums and products) is inhabited iff its encoding is inhabited in $\Nda$. However, faithfulness does not hold for the extension of $\Nda$ with sum and product types (see \cite{SL2}). In particular, one can construct types $C,D$ of F such that $C\widetilde + D$ is inhabited in $\Nda$ while $C+D$ is not inhabited in the extension of $\Nda$ with sums and products.
This also implies that the disjunction property fails for  $C\widetilde + D$ in $\Nda$, since neither $C$ nor $D$ are inhabited.

Interestingly, at the level of conversions, this encoding is stronger than the usual one: it translates not only $\beta$-reductions, but also the 
permutative conversions and a restricted form of $\eta$-conversion for sums, into
 $\beta$ and $\eta$-reductions of $\Nda$ (see \cite{Ferreira2009, Ferreira2016b, EspiritoSanto2020b}).
%
%
%- Instantiation overflow: quick overview. Embedding IPC--> Fat. Observe that it is NOT faithful.
%
%	- Usual translation of IPC in F is IMPREDICATIVE: you need circular instances to translate or-E
%	- InstaOv: the translated formulas impredicatve instances are derivable
%	- Failure of faithfulness and of disjunction property. 
%	
%	

\section{Type Inhabitation}\label{sec:undeci}
%\input{undenew}
% !TEX root = Undecidability.tex

In this section we discuss type inhabitation in the systems $\Nda$ and $\Fone$. We briefly recall the undecidability argument for (TI) in System F from \cite{CH}, and observe that this applies to $\Nda$ (a more detailed reconstruction can be found in \cite{Protin2021}).

The argument in \cite{CH} (which was later simplified in \cite{Dudenhefner}) is based on an embedding inside F of an undecidable fragment of first-order logic. 
We recall the argument in a few more details, so that it will be clear that the same argument shows the undecidability of type inhabitation in both $\Nda$ and $\Fone$.

We consider the $\To,\forall$-fragment of \emph{dyadic} intuitionistic first-order logic $\monuno$, that is, in a language with no function symbol and a finite number of at most binary relation symbols. We consider sequents of the form $\Gamma\vdash \bot$ where $\Gamma$ consists of three type of assumptions:
\begin{enumerate}[i.]
\item atomic formulas different from $\bot$; \label{itemi}
\item closed formulas of the form $\forall \vec \alpha. ( \varphi_{1}\To \dots \To \varphi_{n}\To \psi)$, where $\varphi_{1},\dots, \varphi_{n},\psi$ are atomic formulas and each variable in $\psi$ occurs in some the $\varphi_{i}$;\label{itemii}
\item closed formulas of the form $\forall \alpha(\forall \beta(p(\alpha,\beta)\To \bot)\To \bot)$.
\label{itemiii}
\end{enumerate}

We fix a finite number of distinguished type variables:
\begin{itemize}
\item for each relation symbol $p$, three variables $p_{1},p_{2},p_{3}$;
\item five more variables $\spadesuit,\bullet, \circ_{1},\circ_{2},\star$.
\end{itemize}
We let, for any type $A$, $A^{\bullet}:=A\To \bullet$, and we define, for all types $A,B$: 
\begin{align*}
p_{AB}&= (A^{\bullet}\To p_{1})\To (B^{\bullet}\To p_{2})\To p_{3} \\
p(A,B)&= p_{AB}\To \star
\end{align*}
For any type $A$, we let $\mathcal U(A)$ be the set of all types $(A^{\bullet}\To p_{i})\To \circ_{1}, A^{\bullet}\To \circ_{2}$, where $i=1,2$. Given a finite list of types  $A_{1},\dots, A_{n}$, we let $\mathcal U(A_{1},\dots, A_{n})\To B$ be a shorthand for $C_{1}\To \dots \To C_{k}\To B$, where $C_{1},\dots, C_{k}$ are the types in 
$\bigcup_{i} \mathcal U(A_{i})$. 

Each formula $\varphi$ of $\monuno$ is translated into a type $\OV \varphi$ as follows:
\begin{align*}
\OV{p(\alpha_{i},\alpha_{j})} & = p(X_{i},X_{j})  \qquad  \OV\bot= \spadesuit \\
\OV{\varphi \To \psi}&= \OV\varphi \To \OV \psi \\
\OV{\forall \alpha_{i}.\varphi} & = \forall \vec X_{i}.(\mathcal U(X_{1},\dots, X_{n}) \To \OV \varphi)
\end{align*}

One can easily check the following by induction:
\begin{proposition}\label{prop:urzy0}
If $\varphi_{1},\dots, \varphi_{n} \vdash \varphi$ is provable in  $\monuno$ and $\alpha_{i_{1}},\dots, \alpha_{i_{k}}$ are the variables that occur in $\mathrm{FV}(\varphi)$ but not in $\mathrm{FV}(\varphi_{1},\dots,\varphi_{n})$, then 
$x_{1}\mapsto\OV\varphi_{1},\dots, x_{n}\mapsto\OV\varphi_{n}, \vec y\mapsto\mathcal U(X_{i_{1}},\dots, X_{i_{k}}) \vdash t: \OV\varphi$ holds in $\Nda$ for some term $t$.
\end{proposition}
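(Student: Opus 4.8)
The plan is to prove the statement by induction on the derivation of $\varphi_{1},\dots,\varphi_{n}\vdash\varphi$ in $\monuno$, following the shape of the translation $\OV{(\cdot)}$ closely. The key technical device is the auxiliary family of contexts $\IOor$, $\IOand$ (instantiation overflow) from Section~\ref{sec:pre}: whenever the translation of a $\forall$-introduction or $\forall$-elimination step needs to instantiate a universally quantified type with a non-atomic witness, we cannot do so directly in $\Nda$, but we can use instantiation overflow to simulate it. So before starting the induction I would isolate a lemma stating that for the specific shape of types produced by $\OV{(\cdot)}$ — in particular those of the form $\forall\vec X.(\mathcal U(\vec X)\To\OV\varphi)$ — there is, for every tuple of type-witnesses, a $\Nda$-context performing the instantiation, consuming exactly the extra $\mathcal U$-hypotheses $\vec y$. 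This is where the bookkeeping of the $\mathcal U(A)$ sets and the $\circ_1,\circ_2$ variables pays off: the hypotheses in $\mathcal U(A_i)$ are precisely what is needed to push an atomic instantiation through the encoding of $p(\cdot,\cdot)$.

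Concretely, the induction has one case per rule of $\monuno$. For the axiom rule $\Gamma,\varphi\vdash\varphi$, take $t$ to be the corresponding variable $x_i$ (here the set of "new" free variables $\alpha_{i_1},\dots,\alpha_{i_k}$ is empty, so there are no $\vec y$). For $\To$-introduction, if by IH we have a term $t$ for $\Gamma,\varphi\vdash\psi$, then $\lambda x.t$ witnesses $\Gamma\vdash\varphi\To\psi$, using $\OV{\varphi\To\psi}=\OV\varphi\To\OV\psi$; one must check the free-variable side-condition on $\vec y$ is unaffected. For $\To$-elimination, combine the two IH terms by application, after first reconciling the $\mathcal U$-contexts of the two premises — this is a routine but slightly fiddly merging step. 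For $\forall$-introduction $\Gamma\vdash\varphi$ yielding $\Gamma\vdash\forall\alpha_i.\varphi$, we use $\Lambda\vec X_i$ together with $\lambda$-abstractions binding the $\vec y$ of type $\mathcal U(X_1,\dots,X_n)$, matching $\OV{\forall\alpha_i.\varphi}=\forall\vec X_i.(\mathcal U(\vec X)\To\OV\varphi)$; the eigenvariable condition of $\forall$I in $\monuno$ translates to the side-condition $X\notin\mathrm{FV}(\Gamma)$ of rule $\forall$I in Fig.~\ref{fig:fcurry}. For $\forall$-elimination instantiating $\alpha_i$ with some variable $\alpha_j$, we instantiate the $\Nda$-type $\forall\vec X_i.(\dots)$ with $X_j$ (which is atomic, hence legal in $\Nda$) and then feed it the appropriate $\mathcal U(X_j)$-hypotheses, which are available among the $\vec y$ guaranteed by the IH exactly because $\alpha_j$ is among the relevant free variables on the right.

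The main obstacle I expect is the correct propagation of the extra hypotheses $\vec y\mapsto\mathcal U(X_{i_1},\dots,X_{i_k})$ through the induction: the statement quantifies only over the variables free in $\varphi$ but not in the $\varphi_i$, and this set changes in a non-monotone way across the rules (it shrinks under $\forall$I of a variable that was free, and can grow under $\To$E when merging premises whose right-hand sides have different free variables). Getting the invariant exactly right — and verifying that in each case the required $\mathcal U$-hypotheses are either already present, freshly abstracted, or can be discarded by weakening — is the delicate part, whereas the $\lambda$-term constructions themselves are straightforward once the instantiation-overflow lemma is in hand. A secondary point of care is that $\monuno$ permits exactly the three restricted forms of assumptions (i)--(iii), so in the base cases one should check that assumptions of type (iii), i.e.\ $\forall\alpha(\forall\beta(p(\alpha,\beta)\To\bot)\To\bot)$, translate to legitimately usable $\Nda$-types — but since the proposition's hypothesis already assumes provability in $\monuno$, this is really just a consistency check on the translation rather than new work.
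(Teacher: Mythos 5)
Your induction is set up on the right object (the $\monuno$ derivation), and the Var, $\To$I and $\forall$I cases are fine, but there are two problems. The first is a misdiagnosis: instantiation overflow plays no role here. Since $\monuno$ has no function symbols, the witness of every $\forall$E step is an individual variable $\alpha_{j}$, so the corresponding type instantiation uses the atomic $X_{j}$ and is already legal in $\Nda$; the preliminary lemma you propose about pushing non-atomic witnesses through $\forall \vec X.(\mathcal U(\vec X)\To \OV\varphi)$ is never needed, as your own $\forall$E case tacitly confirms. The role of the $\mathcal U(\cdot)$ hypotheses is not to \emph{enable} instantiation but to be \emph{consumed} after it: having instantiated at $X_{j}$ one must supply inhabitants of the types in $\mathcal U(X_{j})$, and since these types end in the otherwise unused variables $\circ_{1},\circ_{2}$ they are not derivable from the $\OV{\varphi_{i}}$ --- they must literally sit in the typing context.

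The genuine gap is in the $\To$E case (and, for the same reason, in $\forall$E when the witness $\alpha_{j}$ already occurs free in some $\varphi_{i}$). You correctly observe that the set of ``new'' variables $\mathrm{FV}(\varphi)\setminus\mathrm{FV}(\varphi_{1},\dots,\varphi_{n})$ is non-monotone across the rules, but your remedy --- that the surplus $\mathcal U$-hypotheses ``can be discarded by weakening'' --- is backwards: weakening lets you \emph{add} unused hypotheses to a context, not remove hypotheses on which the term you have already built depends. If $\alpha$ is free in the minor premise $\varphi$ of a $\To$E step but occurs in neither $\psi$ nor $\Gamma$, the induction hypothesis hands you terms $t_{1},t_{2}$ whose free variables include the $\vec y$ of types $\mathcal U(X_{\alpha})$; the application $t_{1}t_{2}$ still contains them, while the statement for the conclusion $\Gamma\vdash\psi$ supplies no such hypotheses, and the dependence cannot be eliminated. (Take $\Gamma=\{\forall\alpha.p(\alpha,\alpha),\ \forall\alpha.(p(\alpha,\alpha)\To\bot)\}$: here $\Gamma\vdash\bot$ is provable, yet no normal term of type $\spadesuit$ exists in context $\OV\Gamma$, because $\circ_{1},\circ_{2}$ occur only negatively.) This shows the invariant you are inducting on is not preserved by the rules, so the induction must be run with a different one: either carry $\mathcal U$-hypotheses for \emph{all} individual variables occurring in a suitably normalized derivation, or exploit the restricted assumption shapes \ref{itemi}--\ref{itemiii}, under which every individual variable appearing in a normal derivation of $\bot$ is introduced through an assumption of form \ref{itemiii}, whose translation binds exactly the $\mathcal U$-package for that variable. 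That your argument uses \ref{itemi}--\ref{itemiii} only as a ``consistency check'' is precisely the symptom that the inductive invariant is not yet the right one.
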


The less trivial part is the following:

\begin{theorem}[\cite{CH}, Theorem 11.6.14]\label{th:urzy}
If  $\varphi_{1},\dots, \varphi_{n}$ satisfy \ref{itemi}-\ref{itemiii}, and  
$x_{1}\mapsto\OV\varphi_{1},\dots, x_{n}\mapsto\OV\varphi_{n},\vec y\mapsto \mathcal U(X_{i_{1}},\dots, X_{i_{k}}) \vdash t: \spadesuit$ is deducible in System F, then
$\varphi_{1},\dots, \varphi_{n}\vdash \bot $ is provable in $\monuno$.
\end{theorem}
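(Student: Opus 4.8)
The plan is to prove the contrapositive of the "reflection" direction by induction on the structure of the $\lambda$-term $t$ witnessing $x_1\mapsto\OV{\varphi_1},\dots,x_n\mapsto\OV{\varphi_n},\vec y\mapsto\mathcal U(X_{i_1},\dots,X_{i_k})\vdash t:\spadesuit$ in System F (equivalently, by analysis of a cut-free/normal derivation). Since $\spadesuit$ is a type variable and the context only contains the specified types, a normal inhabitant of $\spadesuit$ must be obtained by a sequence of eliminations (applications and type-instantiations) starting from one of the hypotheses. The key technical device is to set up, by inspection of the coding, a dictionary between \emph{which hypothesis gets eliminated and how} and \emph{which rule of $\monuno$ is being applied}. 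Concretely: a use of a type-(\ref{itemii}) hypothesis $\forall\vec\alpha.(\varphi_1\To\cdots\To\varphi_m\To\psi)$, after instantiating the $\vec\alpha$ with type variables $X_{j_1},\dots$ and after discharging the auxiliary $\mathcal U$-arguments, corresponds exactly to applying that Horn-clause axiom in the first-order derivation; a use of the type-(\ref{itemiii}) hypothesis $\forall\alpha(\forall\beta(p(\alpha,\beta)\To\bot)\To\bot)$ corresponds to the $\forall$-left/$\forall$-right interplay that introduces a fresh element; and the atomic hypotheses of type (\ref{itemi}) are just the atomic facts. The auxiliary variables $\bullet,\circ_1,\circ_2,\star$ and the gadget types $p_{AB}$, $A^{\bullet}$, $\mathcal U(A)$ are there precisely to \emph{block} any derivation that does not arise this way — e.g.\ to prevent an inhabitant of $\spadesuit$ from being cooked up by mixing $p_1,p_2,p_3$ components incoherently, and to force the witnesses of the outer quantifiers to be genuine "first-order elements", i.e.\ type variables.

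The steps, in order, would be: (1) fix a normal (long normal) form of $t$; argue that its head is a variable from the context and classify by which of the three kinds of hypothesis the head variable has. (2) In each case, analyze the spine of eliminations: show that the type instantiations appearing along the spine must be \emph{variables} (this is automatic in $\Nda$, and in $\Fone$ one argues it from the shape of the target type $\spadesuit$ and the gadget types, which is exactly why the coding was rigged with $\bullet,\star$, etc.), and show that each argument supplied along the spine either feeds an $\mathcal U$-obligation (and is then handled by Proposition~\ref{prop:urzy0} run backwards, contributing nothing to the first-order content) or is a subterm that, by the same normal-form analysis, must itself be an inhabitant of some $\OV\psi$ for an atomic $\psi$, or of $p(X_a,X_b)\To\star$-shaped type. (3) Read off from this spine a one-step inference in $\monuno$ and recurse on the argument subterms, accumulating the first-order assumptions $\varphi_1,\dots,\varphi_n$ (and the freshly introduced variables from uses of the (\ref{itemiii})-axioms) into the antecedent. (4) Conclude that the collected derivation establishes $\varphi_1,\dots,\varphi_n\vdash\bot$. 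Since the argument only uses features available in $\Nda$ (indeed the instantiations forced by the analysis are atomic), the same proof yields the statement for $\Nda$, hence — by the inclusion $\Nda\subseteq\Fone\subseteq\mathrm F$ — for all three systems simultaneously; this is the point of phrasing the hypothesis "in System F" while the conclusion is about $\monuno$.

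The main obstacle, and where the real work lies, is step (2): showing that \emph{no spurious inhabitant} of $\spadesuit$ exists, i.e.\ that every normal derivation genuinely factors through the intended correspondence. One must rule out, for instance, derivations that exploit an instantiation of an outer $\forall X_i$ by a \emph{compound} type (possible a priori in $\Fone$) to short-circuit the gadgets, and derivations that build a term of one of the auxiliary atoms $p_1,p_2,p_3,\bullet,\star,\circ_1,\circ_2$ "for free" — which is impossible only because those variables occur in the context exclusively in strictly positive/negative positions dictated by $\mathcal U$ and by $p_{AB}$. This is a somewhat delicate bookkeeping argument about occurrences of the distinguished variables in subformulas of the context; it is precisely the content credited to \cite{CH} (and streamlined in \cite{Dudenhefner}), and I would either cite that analysis verbatim after checking that the predicative restriction does not disturb it, or redo it as a sequence of occurrence-counting lemmas: a "no inhabitant of $\bullet$" lemma, a "$\star$-balance" lemma, and a "witnesses are atomic" lemma, each proved by induction on normal forms. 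Everything else — the translation of spine shapes into $\monuno$-rules and the recursion — is routine once these lemmas are in place.
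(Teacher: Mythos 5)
The paper offers no proof of Theorem \ref{th:urzy} at all: it is imported verbatim from \cite{CH} (Theorem 11.6.14), and the only work the paper does around it is the observation that, being stated for full System F, it can be specialized to the fragments $\Nda$ and $\Fone$. So there is no in-paper argument to compare yours against. Your sketch does track the strategy of the cited proof (long-normal-form/spine analysis; the gadget atoms $\bullet,\circ_{1},\circ_{2},\star$ and the $\mathcal U$-types blocking spurious inhabitants; reading off $\monuno$-inferences from eliminations of the three kinds of hypotheses), but as written it is a plan rather than a proof: the entire load-bearing part --- the occurrence-counting lemmas showing that no inhabitant of $\spadesuit$ can be ``cooked up'' outside the intended correspondence --- is named and then deferred, either to \cite{CH} or to lemmas you do not prove. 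That is acceptable only insofar as you, like the paper, end up citing \cite{CH} for exactly that content.

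One genuine logical slip to fix: your closing claim that the analysis ``yields the statement for $\Nda$, hence --- by the inclusion $\Nda\subseteq\Fone\subseteq\mathrm F$ --- for all three systems'' runs the implication the wrong way. The theorem's hypothesis (``deducible in System F'') grows stronger as the system grows, so the $\Nda$-version is the \emph{weakest} of the three and implies neither the $\Fone$- nor the F-version. The normal-form analysis must be carried out for full F (in particular, ruling out compound witnesses, as your step (2) correctly anticipates), and the fragment versions then follow by restriction --- which is precisely how the paper derives Corollary \ref{cor:urzy}.
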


Since $\Nda$ and $\Fone$ are both fragments of F, we can freely substitute them for System F in the statement of Theorem \ref{th:urzy}. Then, together with Proposition \ref{prop:urzy0} we deduce:

\begin{corollary}\label{cor:urzy}
(TI) is undecidable in both $\Nda$ and $\Fone$.
\end{corollary}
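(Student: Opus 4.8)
The plan is to combine Proposition~\ref{prop:urzy0} and Theorem~\ref{th:urzy} into a single reduction from provability in $\monuno$ (restricted to sequents $\Gamma \vdash \bot$ with $\Gamma$ satisfying \ref{itemi}--\ref{itemiii}) to the type inhabitation problem, and then invoke the known undecidability of that fragment of dyadic intuitionistic first-order logic from \cite{CH}. Concretely, given such a sequent $\varphi_1,\dots,\varphi_n \vdash \bot$, we form the closed type
\[
A \;=\; \OV{\varphi_1} \To \dots \To \OV{\varphi_n} \To \bigl(\mathcal U(X_{i_1},\dots,X_{i_k}) \To \spadesuit\bigr),
\]
where $\alpha_{i_1},\dots,\alpha_{i_k}$ are the free variables occurring in the hypotheses, using the abbreviation $\mathcal U(\dots)\To\spadesuit$ for the corresponding string of arrows. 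I would then argue the equivalence: $A$ is inhabited in $\Nda$ iff $\varphi_1,\dots,\varphi_n \vdash \bot$ is provable in $\monuno$.

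For the forward direction I would use Theorem~\ref{th:urzy}: if $\vdash t : A$ in $\Nda$, then since $\Nda$ is a fragment of F we also have $\vdash t : A$ in F, and by applying the $\mathrm{Abs}$ rule backwards (or by $\beta$-expansion) we obtain a derivation of $x_1\mapsto\OV{\varphi_1},\dots,x_n\mapsto\OV{\varphi_n},\vec y\mapsto\mathcal U(X_{i_1},\dots,X_{i_k}) \vdash t' : \spadesuit$ in F, whence by Theorem~\ref{th:urzy} the sequent $\varphi_1,\dots,\varphi_n \vdash \bot$ is provable in $\monuno$. For the converse, Proposition~\ref{prop:urzy0} applied with $\varphi = \bot$ (noting $\OV\bot = \spadesuit$, and that $\bot$ has no free variables so the list $\alpha_{i_1},\dots,\alpha_{i_k}$ is exactly the free variables of the hypotheses) gives a term $t$ with $x_1\mapsto\OV{\varphi_1},\dots,x_n\mapsto\OV{\varphi_n},\vec y\mapsto\mathcal U(X_{i_1},\dots,X_{i_k}) \vdash t : \spadesuit$ in $\Nda$; abstracting over $x_1,\dots,x_n$ and the $\vec y$ yields a closed term of type $A$ in $\Nda$. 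Since the translation $\varphi \mapsto \OV\varphi$ and the construction of $A$ are plainly computable, this is an effective many-one reduction, so the undecidability of $\monuno$-provability (on this class of sequents) transfers to (TI) in $\Nda$. The identical argument, reading "$\Fone$" for "$\Nda$" throughout — which is legitimate because both Proposition~\ref{prop:urzy0} and Theorem~\ref{th:urzy} hold for $\Fone$ as well (the former because its witnessing term lies in $\Nda \subseteq \Fone$, the latter because $\Fone$ is a fragment of F) — gives undecidability of (TI) in $\Fone$.

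The one point that needs a little care, and which I expect to be the only real subtlety, is the bookkeeping of free type variables and the $\mathcal U$-contexts in matching up the hypotheses of Proposition~\ref{prop:urzy0} and Theorem~\ref{th:urzy}: one must check that the set of variables $\alpha_{i_1},\dots,\alpha_{i_k}$ appearing in the statement of Proposition~\ref{prop:urzy0} (those free in the conclusion but not in the hypotheses) coincides, when the conclusion is $\bot$, with the set of all variables free in $\varphi_1,\dots,\varphi_n$ — which is immediate since $\mathrm{FV}(\bot)=\emptyset$ forces the "occurs in the conclusion but not the hypotheses" clause to be vacuous, so actually the relevant $\mathcal U$-context is governed by the free variables of the hypotheses alone. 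Modulo fixing this indexing convention consistently, the argument is a routine composition of the two cited results, and the undecidability conclusion for both $\Nda$ and $\Fone$ follows.
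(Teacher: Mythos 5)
Your proposal is correct and follows essentially the same route as the paper, which likewise obtains the corollary by composing Proposition~\ref{prop:urzy0} with Theorem~\ref{th:urzy} (the latter applied to $\Nda$ and $\Fone$ in place of F, since both are fragments of F) to get a computable reduction from the undecidable fragment of $\monuno$ to (TI). Your write-up merely makes explicit the abstraction step turning the typing judgement with context into a closed inhabitation instance, and the free-variable bookkeeping for $\varphi=\bot$, both of which the paper leaves implicit.
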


%More precisely, Corollary \ref{cor:urzy} follows from the equivalence of provability in $\Nda$ and in its extension with a finite number of type constants, which is established in App.~\ref{appB}.

\begin{remark}
Although $\Nda$ and $\Fone$ are both undecidable, they are not equivalent at the level of provability. For instance, the type
$(\forall X.X\To Y)\To (X\To X)\To Y$ is inhabited in $\Fone$ (by the term $\lambda x^{\forall X.X\To Y}.\lambda y^{X\To X}.x(X\To X)y$), but not in $\Nda$ (as easily seen by a proof-search argument).
\end{remark}

\begin{remark}
The undecidability of the atomic fragment of (full) second-order intuitionistic logic has been known since (at least) \cite{Sobolev}. However, from this one cannot deduce the undecidability of $\Nda$, due to the fact that disjunction is not faithfully definable in $\Nda$ (see also \cite{SL2}). 
\end{remark}

\begin{remark}
It is not difficult to see that System $\Nda$ is equivalent to a first-order system, namely to the $\To,\forall$-fragment $\mathsf{1Mon}_{\To,\forall}$ of \emph{monadic} first-order intuitionistic logic in the language with no function symbol and a \emph{unique} monadic predicate. The equivalence is given by an obvious bijection between formulas and types given by 
$\widehat{p(\alpha_{i})}=X_{i}$, $\widehat{\varphi\To \psi}=\widehat\varphi\To \widehat \psi$ and $\widehat{\forall \alpha_{i}.\varphi}= \forall X_{i}.\widehat\varphi$. 
Hence, a consequence of Corollary \ref{cor:urzy} is that provability in $\mathsf{1Mon}_{\To,\forall}$ is undecidable. 
Provability in extensions of $\mathsf{1Mon}_{\To,\forall}$ with either finitely many monadic predicates, or with disjunction, is known to be undecidable \cite{Gabbay1981, Schubert2012}. To the best of our knowledge, the undecidability of $\mathsf{1Mon}_{\To,\forall}$ had not been observed before.

\end{remark}

%
%- Remark: undecidability with or is well-known since Sobolev, but it does not imply undecidability of Fat since the embedding of IPC is not faithful,
%
%- Remark: equivalence of Fat and 1Mon, deduce undecidability of 1Mon (this is new, while Mon is undecidable and 
%1Mon with or is undecidable)

\section{Typability and Type-checking}\label{sec:deci}
%\input{decide}
% !TEX root = Undecidability.tex

In usual implementations of polymorphic type systems the Church-style type discipline is generally considered inconvenient, due to the heavy amount of type annotations. Instead, Curry-style languages, for which a compiler can (either completely or partially) reconstruct type annotations, are generally preferred (two standard examples are the languages $\mathsf{ML}$ and $\mathsf{Haskell}$). 
This is the reason why type-checking algorithms for polymorphic type systems in Curry-style (or in some variants of Curry-style with \emph{partial} type annotations \cite{Pfenning1993}) have been extensively investigated \cite{Henglein89, Urzy93, Wells, Schubert2012}.

However, while $\ML$ admits a decidable type checking in Curry-style (a main reason for its success), type checking  has been shown to be undecidable
 for System F and most of its variants (including the predicative system $\Fone$ \cite{Schubert2012}), making the Curry-style version of such systems impractical for implementation.

For the simply typed $\lambda$-calculus (and crucially also for $\ML$), the type-checking problem can be reduced to \emph{first-order unification} (FOU), that is, to the problem of unifying first-order terms (in a language with a unique binary function symbol corresponding to $\To$). Typically, an application $tu:b$ will produce a first-order equation of the form 
$a_{t}= a_{u}\To b$, where $a_{t},a_{u}$ are variables indicating the type of $t$ and the type of $u$, respectively. 
As FOU is decidable, this suffices to show that type-checking is decidable in this case.

In the case of polymorphism FOU is not sufficient to solve type-checking. In fact, already in $\Fone$ one can type terms, like e.g.~$\lambda x.xx$, which contain \emph{self-applications}. Using FOU, this term yields the unsolvable equation 
$a_{x}=a_{x}\To b$, showing that $\lambda x.xx$ is not typable in either $\STLC$ or $\ML$. To type-check System F programs one can replace FOU with either \emph{semi-unification} \cite{Henglein89, Urzy93} or \emph{second order unification} (SOU) \cite{Pfenning1993, Schubert2012}.
Here we focus on the latter: in SOU one tries to unify equations involving terms constructed from first-order variables $a,b,c,\dots$ as well as second order variables $\SF F,\SF G,\dots$. For instance, the term $\lambda x.xx$ above yields the equations
\begin{align}\label{eq:cyclic}
\SF F a & = (\SF Fb)\To \SF G
\end{align}
where $\forall X.A=\forall X.\SF FX$ indicates the type of $x$, and the variables $a,b$
encode the possible witnesses which permit to type $xx$ (in Church-style one could indicate this with
$\lambda x^{\forall X.\SF FX} . ((xa)^{\SF Fa=\SF Fb\To \SF G}(xb)^{\SF Fb})^{\SF G}$). For instance, a (non-unique) solution to this problem is given by 
$\SF F\mapsto \lambda x.x$,  $\SF G\mapsto Z$, $ a\mapsto Y\To Z$, $b \mapsto Y$.%, corresponding to the explicit typing $(\lambda x:\forall X.X) (x(Y\to Z))(xY)$. 

 Unfortunately, SOU is undecidable \cite{Goldfarb1981}. Moreover, one can encode restricted (but still undecidable) variants of SOU in the type checking problem for $\Fone$ \cite{Schubert2012}, showing that (TC) is undecidable for $\Fone$.
A fundamental ingredient of these undecidability arguments is the appeal to \emph{variable cycles} (see the discussion in \cite{Jordi1998}) like the one in Eq.~\eqref{eq:cyclic}, that is, to unification problems from which one can deduce an equation of the form $\SF Fa_{1}\dots a_{n}= u[ \SF F]$ equating a second-order variable $\SF F$ with some term containing $\SF F$ itself.

Conversely, \emph{acyclic} SOU, that is, the problem of unifying SOU problems containing no variable cycles, is decidable \cite{Jordi1998}. These observations can be used to show that type-checking is actually decidable in $\Nda$. 
In fact, a fundamental property of $\Nda$ (and a reason for its very limited expressive power) is that any term typable in $\Nda$ is already typable in the simply-typed $\lambda$-calculus. Indeed, the following is easily checked by induction:

\begin{lemma}\label{lemma:stlc}
If $\Gamma \vdash t:A$ is derivable in the Curry-style $\Nda$, then $|\Gamma |\vdash t:|A|$ is derivable in the simply typed $\lambda$-calculus, where $|A|$ is defined by
$
|X|=o$, $|A\To B|=|A|\To |B|$, $|\forall X.A|=|A|$, and $|\Gamma|(x)=|\Gamma(x)|$.
\end{lemma}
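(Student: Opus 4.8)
The plan is to argue by induction on the derivation of $\Gamma\vdash t:A$ in the Curry-style $\Nda$, using the fact that the erasure $|\cdot|$ commutes with the type-forming operations in exactly the way needed to turn each rule of Fig.~\ref{fig:fcurry} into either a rule of $\STLC$ or a trivial identity. The single preliminary observation required is: for any type variable $Y$ one has $|A[Y/X]| = |A|$. This follows by a short induction on $A$; the only non-vacuous case is $A = X$, where $|X[Y/X]| = |Y| = o = |X|$, while all remaining cases are immediate since erasure is defined compositionally on $\To$ and simply discards $\forall$. (Equivalently: $|A|$ is the simple type obtained by deleting every quantifier of $A$ and replacing every type-variable occurrence by $o$, hence manifestly invariant under any variable-for-variable substitution.)

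With this in hand I would treat the five rules as follows. For $\mathrm{Var}$, if $\Gamma(x) = A$ then $|\Gamma|(x) = |A|$, so $|\Gamma|\vdash x:|A|$ by the variable rule of $\STLC$. For $\mathrm{Abs}$, the induction hypothesis gives $|\Gamma|, x\mapsto |A|\vdash t:|B|$, and since $|A\To B| = |A|\To |B|$ the abstraction rule of $\STLC$ yields $|\Gamma|\vdash \lambda x.t:|A\To B|$. For $\mathrm{Appl}$, the induction hypotheses give $|\Gamma|\vdash t:|A|\To |B|$ (again using $|A\To B| = |A|\To |B|$) and $|\Gamma|\vdash u:|A|$, whence $|\Gamma|\vdash tu:|B|$. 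For $\forall$I, the premise erases to $|\Gamma|\vdash t:|A|$ and the conclusion type $\forall X.A$ also erases to $|A|$, so there is nothing to prove (the side condition on free variables plays no role after erasure). Finally, for $\forall$E the premise erases to $|\Gamma|\vdash t:|\forall X.A| = |A|$; here the atomic restriction of $\Nda$ is used, since the witness $C$ is a type variable $Y$, and by the preliminary observation $|A[Y/X]| = |A|$, so the conclusion $|\Gamma|\vdash t:|A[Y/X]|$ is already available.

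There is no genuine obstacle: the argument is a routine structural induction. The only point worth flagging is that the $\forall$E case is precisely where predicativity is essential. If witnesses were allowed to be compound — as in $\Fone$ — then $|A[C/X]|$ would in general differ from $|A|$ (it would be $|A|$ with $o$ replaced throughout by $|C|$), and the statement would fail, consistently with the fact that self-applications such as $\lambda x.xx$ are typable in $\Fone$ but not in $\STLC$.
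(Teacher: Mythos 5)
Your proof is correct and matches the paper's intent exactly: the paper simply asserts the lemma is ``easily checked by induction,'' and your structural induction on the typing derivation, with the key observation that $|A[Y/X]|=|A|$ for atomic witnesses $Y$, is precisely that routine argument. (Only a cosmetic quibble with your closing aside: for a compound witness $C$, $|A[C/X]|$ replaces by $|C|$ only those occurrences of $o$ arising from $X$, not every occurrence of $o$ in $|A|$ --- but this does not affect the proof or the point you are making.)
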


%From Lemma \ref{lemma:stlc} we can easily infer that the typability problem is decidable:
%
%\begin{corollary}
%(T) for Curry-style $\Nda$ is decidable (in polynomial time).
%\end{corollary}
%\begin{proof}
%(T) is decidable in polynomial time for $\STLC$ \cite{}. Now, (T) for $\Nda$ can be reduced to $\STLC$ by observing that, given $\Gamma,t$, on the one hand, if $\Gamma\vdash t:A$ holds in $\Nda$ for some type $A$, then $|\Gamma|\vdash t:|A|$ holds in $\STLC$ and, on the other hand, if 
%$|\Gamma| 
%
%\end{proof}

An immediate consequence of Lemma \ref{lemma:stlc} is that one cannot type $\lambda x.xx$ in $\Nda$ and, more generally,  that any $\lambda$-term that would give rise to a variable cycle cannot be typed in $\Nda$.
Observe that the converse does not hold: from the fact that $|\Gamma |\vdash t:|A|$ holds, one cannot deduce $\Gamma \vdash t:A$ (take for instance $t=x$, $\Gamma(x)=X$ and $A=\forall X.X$).  

However, these observations suggest that type checking for $\Nda$ can be decided by reasoning in two phases: 
to check if $\Gamma \vdash t:A$ is derivable in $\Nda$, first check if $|\Gamma|\vdash t:|A|$ is derivable in $\STLC$ using FOU;
%
%
%
%given the type-checking problem $\mathrm{TC}({\Nda})(\Gamma, t,A)$ (i.e.~``is $\Gamma \vdash t:A$ derivable in $\Nda$?''), first solve $\mathrm{TC}(\STLC)(|\Gamma|, t, |A|)$ using FOU;
 if this first step fails, then the original problem must fail; if the first step succeeds, then the original type-checking problem for $\Nda$ yields an instance of (a suitable variant of) acyclic SOU, which must be decidable. By reasoning in this way, one can thus establish:
\begin{theorem}
(TC) for Curry-style $\Nda$ is decidable.
\end{theorem}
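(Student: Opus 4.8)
The plan is to make precise the two-phase procedure sketched in the text. Given an instance $\Gamma \vdash t : A$ of (TC) for $\Nda$, the first phase runs the Hindley–Milner algorithm on the simple-type skeleton: compute whether $|\Gamma| \vdash t : |A|$ is derivable in $\STLC$ by reducing it to a first-order unification problem over the single binary function symbol $\To$. By Lemma~\ref{lemma:stlc}, if this fails then the original $\Nda$ instance has no derivation, so we may return $\mathtt{failure}$. This step is routine and decidable; the real content is the second phase, where we must decide the $\Nda$ problem under the standing assumption that the skeleton \emph{is} simply typable — equivalently, that the "shape" of the derivation tree is fixed and contains no variable cycle.

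The key structural idea is that, once the simple type of every subterm of $t$ is known (from phase~1, we obtain a principal simple typing, so the simple type of each subterm is determined up to renaming of the base type $o$), an $\Nda$-derivation of $\Gamma \vdash t : A$ is obtained by decorating each occurrence of $o$ with a System~F type of the same skeleton, and inserting $\forall$I/$\forall$E steps. I would set up, by induction on $t$ following the fixed derivation skeleton, a unification problem in which: each "decoration slot" (each leaf $o$ in a skeleton) is a second-order metavariable $\SF F$ applied to the type variables in scope; the $\forall$E (instantiation) steps contribute equations of the form $\SF F\, C = (\text{body})$ where $C$ is the \emph{atomic} witness — crucially a variable, not a compound term — which is what keeps the instance inside acyclic SOU; the $\forall$I steps impose the usual eigenvariable side conditions $X \notin \mathrm{FV}(\Gamma)$, encodable as occurrence constraints; and the $\mathrm{Var}$, $\mathrm{Abs}$, $\mathrm{Appl}$ rules contribute first-order equations linking the slots. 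The assumption "the skeleton is cycle-free" (phase~1 succeeded) is precisely what guarantees that the resulting system of second-order equations has no variable cycle $\SF F a_1 \dots a_n = u[\SF F]$: a cycle in the $\Nda$-system would project, via the skeleton map $|\cdot|$, to a cyclic first-order equation, contradicting that $|\Gamma| \vdash t : |A|$ was solvable by FOU. Hence the system falls into the decidable class of acyclic SOU of \cite{Jordi1998} (modulo the eigenvariable constraints, which I would argue can be absorbed, e.g. by Skolem-style bookkeeping or by a direct bound on the search). We then accept iff this unification problem is solvable.

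I would organize the write-up as: (1) recall principal simple typings and fix the notation for skeletons; (2) define the translation from an $\Nda$ type-checking instance with a given skeleton into a (finite) second-order unification problem with eigenvariable side-conditions, and prove soundness and completeness (a solution corresponds exactly to an $\Nda$-derivation); (3) prove the acyclicity of the generated problem from the success of phase~1, by pushing a hypothetical cycle down through $|\cdot|$; (4) invoke decidability of acyclic SOU and handle the side conditions; (5) conclude.

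The main obstacle I anticipate is not the high-level architecture but step~(2)–(4): namely, that $\forall$I and $\forall$E can in principle be applied \emph{anywhere} in a derivation (they do not change the underlying term), so an $\Nda$-derivation is not uniquely determined by the simple-type skeleton, and one must show that only finitely much "quantifier bureaucracy" needs to be guessed — i.e. that it suffices to consider $\forall$I/$\forall$E steps in some normal position and of bounded multiplicity — and that, together with the eigenvariable conditions, this still yields a genuine instance of \emph{acyclic} SOU rather than something marginally outside the decidable fragment. Establishing that normal form for $\Nda$-derivations, and checking that the atomic-witness restriction really does preclude all variable cycles (and not merely the "obvious" ones from self-application), is where the careful work lies.
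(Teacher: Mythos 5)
Your two-phase architecture is exactly the paper's: use Lemma~\ref{lemma:stlc} and FOU on the skeleton to rule out cyclic instances, then solve the remaining problem as an acyclic second-order unification, with acyclicity justified by projecting a hypothetical cycle down through $|\cdot|$ to a first-order cycle (this projection argument is literally the paper's cycle lemma in App.~\ref{appB}). So the high-level plan is sound and matches the intended proof.

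However, the two steps you explicitly defer are precisely where all the content of the proof lives, and one of them is not merely ``careful work'' but requires a choice the paper makes differently from what you propose. First, the quantifier bureaucracy: the paper resolves the non-determinism of $\forall$I/$\forall$E placement by switching to \emph{synthetic} typing rules (Fig.~\ref{fig:synthetic}), in which generalization and instantiation are forced to occur only at variables, abstractions and applications via a subsumption relation $A\preceq B$; without some such canonical-form theorem your translation in step~(2) is not well-defined, since a single skeleton corresponds to infinitely many candidate derivations. Second, and more seriously, the resulting constraint system is \emph{not} an instance of the acyclic SOU of \cite{Jordi1998}: the quantifier prefixes have unknown length (the paper introduces \emph{sequence variables} with arity constraints for this), the atomic witnesses must be solved as either constants or projections onto bound variables of the enclosing prefix (the paper's \emph{projection variables}), and the eigenvariable conditions become side constraints $(\alpha:a)$, $(a:k)$ that are external to the unification language. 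The paper therefore does not cite \cite{Jordi1998} as a black box but defines a bespoke ``$\Nda$-unification'' problem and proves its decidability from scratch, including a separate argument bounding the search space for the witness choices. Your phrase ``which I would argue can be absorbed, e.g.\ by Skolem-style bookkeeping'' is where the proposal would break down if executed literally: absorbing those constraints is the bulk of App.~\ref{appB}, and it changes the problem enough that the off-the-shelf decidability result no longer applies.
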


In App.~\ref{appB} we describe in detail the decision algorithm for $\Nda$, which is based on a variant of second-order unification, that we call \emph{$\Nda$-unification}.
The fundamental idea is to consider SOU problems in a language with first-order \emph{sequence variables} $a,b,\dots$ and two kinds of second-order variables: \emph{projection variables} $\alpha,\beta,\dots$ and \emph{type schemes} $\SF F, \SF G,\dots$. The intuition is that a term of the form $\alpha a_{1}\dots a_{n}$ describes a (skolemized) witness; since the witnesses in $\Nda$ are type variables, solving for $\alpha$ means associating it with either a constant function or a projection. Instead, a term of the form $\SF F\F a_{1}\dots \F a_{n}$ stands for the application of suitable witnesses $\F a_{1},\dots, \F a_{n}$ to some type $\SF F$, hence solving for $\SF F$ means associating it with some function $\lambda X_{1}\dots X_{n}.A(X_{1},\dots, X_{n})$, where $A(X_{1},\dots, X_{n})$ is some type expression parametric on the type variable $X_{1},\dots, X_{n}$.  
Hence, for example, checking if $\Gamma \vdash xy: \forall Z.Z$ holds in $\Nda$, where $\Gamma(x)=\forall X.X\To X$ and $\Gamma(y)= \forall Y.Y$, yields the equations
%
%$$
%\SF F \F a_{1}\dots \F a_{n}=u 
%$$
%where $\F a_{i}$ are either sequence variables $a$ (i.e.~variables standing for a sequence of type variables - whose length must be guessed through the unification process), or an expression of the form $\alpha a_{1}\dots a_{n}$, which encode a possible type application ( formally, the variable $\alpha$ stands for some function that computes from $a_{1},\dots, a_{n}$ a type variable). So for example checking if the term $\lambda xy.xy$
%type-checks with $\Gamma(x)= \forall X.X\To X$, $\Gamma(y)=\forall Y.Y$ and $A=\forall Z.Z$ yields the equation
\begin{align*}
\SF F X & = X \To X & \qquad
\SF G Y & = Y \\
\SF F (\alpha Z) & = \SF G (\beta Z)  \To \SF H Z& \qquad
\SF H Z & = Z
\end{align*}
which admit the solution $\SF F\mapsto \lambda X.X\To X$, $\SF G, \SF H\mapsto \lambda X.X$ and $\alpha, \beta \mapsto \lambda X.X$.
Instead, checking if $\Gamma \vdash xy: \forall Z:Z$, where now $\Gamma(x)=\forall X.X\To X$ and $\Gamma(y)=Y$, yields the equations
\begin{align*}
\SF F X & = X \To X & \qquad
\SF G & = Y \\
\SF F (\alpha Z) & = \SF G   \To \SF H Z& \qquad
\SF H Z & = Z
\end{align*}
which have no solution (since one can deduce  $Z=\SF  HZ= Y$), showing that (TC) fails in this case (although $|\Gamma|\vdash xy: |\forall Z.Z|$ holds in the simply typed $\lambda$-calculus). 
%Observe that the dependency of $\alpha$ from $Z$ ensures the respect of the usual side condition of $\forall$I, for which the abstracted variable must be ``bindable'' in the context $\Gamma$: from the third and fourth equations it follows that, if one chooses $Z'$ not occurring free in $\Gamma$, one still has a solution by replacing $Z$ with $Z'$. 

%
% We show that any $\Nda$-unification problem translates  
%
%
%
%- Discuss type checking, first-order and second-order unification
%
%- Observe that typable terms are simply typable: \\
%Lemma. If $\Gamma \vdash t:A$ holds in Fat, then the first order unification for $|\Gamma|\vdash |t|:|A|$ is solvable.
%
%This suggests to first check typability by first-order unification. If it does not fail, then apply second-order unification.
%
%We exploit 
%- Levy's theorem: acyclic second-order unification is decidable.
%
%In particular, we show that TC can be reduced to a second-order unification problem which is of this form:
%- it can be associated with a first-order unification problem, so it is unsolvable if CYCLIC equations appear
%- otherwise, it reduces to a variant of acyclic second-order unification, for which a decision algorithm is provided.
%
%
%- Deduce Typability as usual
%

From the decidability of (TC) one can deduce the decidability of (T) by a standard argument: we can reduce (T) to (TC) by showing that a type $A$ such that $\Gamma \vdash \bb t:A$ holds exists iff $\Gamma  \vdash (\lambda xy.y)\bb t: \forall X. X\To X$ holds.
In fact, if $  \Gamma  \vdash \bb t:A$ holds in $\Nda$, then from $\Gamma \vdash \lambda xy.y: A\To \forall X.(X\To X)$ we deduce
$\Gamma  \vdash (\lambda xy.y)\bb t: \forall X.X\To X$. Conversely, from $\Gamma  \vdash (\lambda xy.y)\bb t: \forall X. X\To X$, we deduce that there exists a type $A$ such that 
%
%=\forall Y_{1}.\dots. \forall Y_{n}.B$, 
$\Gamma \vdash \lambda xy.y: A\To (X\To X)$ and 
$\Gamma  \vdash \bb t: A$ holds.
%, and
%%where for some variables $Z_{1},\dots, Z_{n}$, $C= B[Z_{1}/Y_{1},\dots,Z_{n}/Y_{n}]$; 
%since $x,y$ do not occur in $\bb t$, we can conclude that $\Gamma  \vdash \bb t:A$.
\begin{corollary}
(T) for Curry-style $\Nda$ is decidable.
\end{corollary}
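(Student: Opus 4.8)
The plan is to deduce the decidability of (T) from the already-established decidability of (TC) by a standard reduction, exactly along the lines sketched in the paragraph preceding the corollary. The key observation is that, given $\Gamma$ and a term $\bb t$, there exists a type $A$ with $\Gamma \vdash \bb t : A$ in $\Nda$ if and only if $\Gamma \vdash (\lambda x y. y)\, \bb t : \forall X.\,X \To X$ holds in $\Nda$. Since the right-hand side is a single instance of the type-checking problem — the term $(\lambda xy.y)\bb t$, the context $\Gamma$, and the type $\forall X.X\To X$ are all fixed and computable from the input — decidability of (TC) immediately yields a decision procedure for (T).

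First I would prove the left-to-right direction: assume $\Gamma \vdash \bb t : A$ is derivable in $\Nda$. The term $\lambda xy.y$ is typable with $\vdash \lambda x y. y : A \To \forall X.(X\To X)$ in $\Nda$ — one types $y$ with $X\To X$ under the assumption $y\mapsto X\To X$ (and a dummy assumption $x\mapsto A$), abstracts $y$ and then $x$, noting that $X$ does not occur free in $A\To(X\To X)$ after the abstractions except in the guarded position, so the $\forall$I rule applies to introduce $\forall X.(X\To X)$; since $\Nda$ contains all purely simply-typed derivations (Lemma \ref{lemma:stlc} runs the other way, but this particular derivation uses only $\mathrm{Var}$, $\mathrm{Abs}$, and one application of $\forall$I with an atomic/no witness, so it is directly available in $\Nda$), this is fine. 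Weakening $\Gamma$ into this judgement and applying the $\mathrm{Appl}$ rule to $\Gamma \vdash \lambda xy.y : A\To \forall X.(X\To X)$ and $\Gamma \vdash \bb t : A$ gives $\Gamma \vdash (\lambda xy.y)\,\bb t : \forall X.X\To X$.

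For the converse direction, assume $\Gamma \vdash (\lambda xy.y)\,\bb t : \forall X.X\To X$ is derivable in $\Nda$. I would analyse the shape of such a derivation: by inversion on the typing rules (the generation lemma for Curry-style $\Nda$, which is entirely routine since $\forall$I and $\forall$E do not change the term), a derivation of a judgement whose subject is an application $(\lambda xy.y)\,\bb t$ must, after stripping off trailing $\forall$I/$\forall$E steps, pass through an application of the $\mathrm{Appl}$ rule, hence there is a type $B$ with $\Gamma \vdash \lambda xy.y : B \To C$ and $\Gamma \vdash \bb t : B$ for appropriate $B,C$; in particular $\bb t$ is typable in $\Nda$ under $\Gamma$, which is exactly what (T) asks. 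So $A := B$ witnesses typability. Combining the two directions, $\bb t$ is typable under $\Gamma$ in $\Nda$ iff the fixed type-checking instance $\Gamma \vdash (\lambda xy.y)\,\bb t : \forall X.X\To X$ succeeds, and the latter is decidable by the preceding theorem; this reduction is computable, so (T) is decidable.

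I do not expect any serious obstacle here — the whole argument is the standard trick for reducing typability to type-checking (see \cite{Barendregt92}), and the only $\Nda$-specific point to double-check is that the auxiliary derivation of $\lambda xy.y$ against $A\To\forall X.(X\To X)$ genuinely lives inside the atomic fragment, which it does because that derivation never instantiates a universal quantifier at all. The mild bookkeeping in the inversion step (handling the interleaved $\forall$I and $\forall$E rules, which is where a naive generation lemma for systems with term-invariant rules needs a little care) is the only place one must be slightly attentive, but it is entirely mechanical.
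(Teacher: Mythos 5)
Your proof is correct and takes essentially the same route as the paper: the paper reduces (T) to the single type-checking instance $\Gamma \vdash (\lambda xy.y)\,\bb t : \forall X.X\To X$, arguing both directions exactly as you do (typability of $\lambda xy.y : A\To\forall X.(X\To X)$ for the forward direction, inversion of the application for the converse). The only cosmetic slip is in your auxiliary derivation: to obtain $\lambda y.y : X\To X$ one assumes $y\mapsto X$, not $y\mapsto X\To X$ (the latter would yield $\forall X.((X\To X)\To(X\To X))$); this does not affect the argument.
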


\section{Equational Reasoning in System $\Nda$}\label{sec:equivalence0}
%\input{ceq}
% !TEX root = Undecidability.tex

As a consequence of Lemma \ref{lemma:stlc} from the previous section, all terms which are typable in Curry-style $\Nda$ are simply typable. In other words, $\Nda$ can be seen as a \emph{type refinement} system for $\STLC$, in the sense of \cite{Mellies2015}.
In particular, as we show below, the numerical functions which can be typed in $\Nda$ are precisely the simply typable ones (i.e.~the so-called \emph{extended polynomials} \cite{Schwichtenberg1975, Ferreira2018}).

For this reason, investigating the typable terms of $\Nda$ might seem not very interesting from a computational viewpoint. 
However, in this section we show that studying such terms can be interesting for equational reasoning. In fact, similarly to System F, standard notions of contextual equivalence are stronger than $\beta\eta$-equivalence, and one can exploit well-known techniques like \emph{free theorems} \cite{Wadler1989}, to compute equivalences of $\Nda$-typable terms (which do not hold when viewing these terms as typed in $\STLC$). 
%
%{equationally}. In standard type systems, beyond the standard notions of program equivalence arising from the operational semantics (i.e.~$\beta\eta$-equivalence $\simeq_{\beta\eta}$), there may exist several other \emph{congruences} arising from either denotational models or from some notion of \emph{contextual equivalence}.
%In STLC, it is well-known that $\simeq_{\beta\eta}$ coincides with the congruence induced by any infinite extensional model \cite{Statman1982}, as well as with several notions of contextual equivalence (see \cite{Barendregt2013}, \cite{Dosen2001}).
%

%
%
%
%
% \cite{}: two programs $t,u:A$ are equated when for all context $\TT C[\ ]:A \vdash O$ (where $O$ is a fixed type of \emph{observables}, $\TT C[t]\simeq_{\beta\eta}\TT C[u]:O$.  
%We will here focus on two standard variants of contextual equivalence, the one given by $O=\Bool=\forall X.X\To X\To X$, noted $\simeq_{\Bool}$, and the one given by $O=\Nat=\forall X.(X\To X)\To (X\To X)$, noted $\simeq_{\Nat}$.

We first recall two standard notions of contextual equivalence:

%
%In polymorphic type systems the picture is rather different, since $\beta\eta$-equivalence is usually weaker than the congruences arising from extensional models (see \cite{Bainbridge1990,Hasegawa2009}), and also weaker than standard the notions of contextual equivalence, defined as follows:
%
\begin{notation}
We let $\Bool=\forall X.X\To X\To X$ and $\Nat=\forall X.(X\To X)\To (X\To X)$. We let $\B t=\lambda xy.x$ and $\B f=\lambda xy.y$ indicate the two normal forms of type $\Bool$, and 
for all $n\in \mathbb N$, we let $\B n= \lambda fx.(f)^{n}x$ indicate the $n$-th Church numeral.
\end{notation}

\begin{definition}[contextual equivalence]
Let $F^{*}\in\{\Nda, \ML, \Fone, \mathrm F\}$.
For all closed terms $t,u$ of type $A$ in $F^{*}$, we let 
\begin{itemize}
\item $t\simeq_{\Bool}^{F^{*}}u:A$ iff for any context $\TT C[\ ]: A\vdash \Bool$ in $F^{*}$, $\TT C[t]\simeq_{\beta\eta}\TT C[u]$;\item $t\simeq_{\Nat}^{F*}u:A$ iff for any context $\TT C[\ ]: A\vdash \Nat$ in $F^{*}$, $\TT C[t]\simeq_{\beta\eta}\TT C[u]$.

\end{itemize}
\end{definition}
%contextual equivalence usually equates many more term than $\beta\eta$-equivalence. 

It is easily seen that $\simeq_{\Bool}^{F^{*}}$ and $\simeq_{\Nat}^{F^{*}}$ are congruences of the terms of $F^{*}$. Moreover, in System F these two congruences coincide, due to the fact that the identity relation $\TT{id}: \Nat \To \Nat \To \Bool$ is typable. 
Since this function is also typable in $\ML$, the same holds for $\ML$ and $\Fone$. 
On the other hand, since the identity relation is \emph{not} simply typable, we can deduce (see Lemma \ref{lemma:stlc1} below) that it is not typable in $\Nda$. For this reason the congruences  $\simeq_{\Bool}^{\Nda}$ and $\simeq_{\Nat}^{\Nda}$ must be treated separately. In what follows we will mostly focus on the latter, since the former identifies distinct normal forms of type $\Nat$, which is not convenient for obvious computational reasons.

\begin{remark}
The typability of the identity relation $\TT{id}$ implies that any extensional model must be \emph{infinite}, since for all $n\in \BB N$, the interpretations of $\B n$ and $\B{n+1}$ cannot coincide. 
Instead, it is not difficult to construct an extensional model of $\Nda$ in which any type is interpreted by a \emph{finite} set (to give an idea, let $\mathcal C_{k}$ be a collection of sets of cardinality bounded by a fixed $k\in \mathbb N$; one can let then $\model X\in \mathcal C_{k}$, $A\To B= \model B^{\model A}$ and 
$\model{\forall X.A}= \prod_{S\in \mathcal C_{k}}\model A[X\mapsto S]$). 
\end{remark}
%
%Several techniques have been studied to compute congruences stronger than $\beta\eta$-equivalence in System F, e.g.~\emph{free theorems} \cite{Wadler1989}, \emph{parametricity} \cite{Reynolds1983, Hermida2014}.

The so-called free theorems are a class of syntactic equations for typable terms which can be justified by relying on either relational parametricity \cite{Reynolds1983} or {dinaturality} \cite{Bainbridge1990}. We let $t\approx u:A$  indicate that $t,u$ have type $A$ in System F, and that the equivalence $t \simeq u$ can be deduced using $\beta$-, $\eta$-rules, standard congruence rules (i.e.~reflexivity, symmetry, transitivity and context closure), as well as instances of free theorems for System F. 

%the smallest congruence including $\beta\eta$-equivalences and free theorems.
%
%MORE PRECISE HERE.

Free theorems can be used to deduce contextual equivalence of $\Nda$-terms, thanks to the following:

\begin{lemma}[free theorems in $\Nda$]\label{lemma:freethm}
Let $t,u$ be terms of type $A$ in $\Nda$. If $t \approx u:A$, where $t,u$ are seen as terms of System F, then 
$t\simeq_{\Nat}^{\Nda}u:A$. 
\end{lemma}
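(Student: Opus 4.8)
The plan is to reduce the statement to a robustness property of free-theorem derivations under the syntactic restrictions defining $\Nda$. The key observation is that $t \approx u : A$ is, by definition, the existence of a derivation using $\beta$-, $\eta$-, congruence rules and instances of System F free theorems. Since $t,u,A$ all live in $\Nda$ — hence in $\STLC$ by Lemma \ref{lemma:stlc} — the plan is to show that each such derivation can be "run inside" any $\Nda$-context $\TT C[\ ] : A \vdash \Nat$, i.e.\ that $\TT C[t] \simeq_{\beta\eta} \TT C[u]$ follows. The cleanest route is to establish soundness of $\approx$ with respect to $\simeq_{\Nat}^{\Nda}$ directly, rather than trying to track the restriction through every rule.

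First I would note that $\simeq_{\Nat}^{\Nda}$ is a congruence (already observed in the text), so it suffices to check that (a) it contains $\simeq_{\beta\eta}$, and (b) it validates every instance of a System F free theorem whose terms and type are in $\Nda$. Point (a) is immediate: if $t \simeq_{\beta\eta} u$ then $\TT C[t] \simeq_{\beta\eta} \TT C[u]$ for any context, in particular any $\Nda$-context of type $\Nat$. Point (b) is the real content. Here I would appeal to the semantic justification of free theorems: every System F free theorem instance between terms $t,u : A$ is validated by relational parametricity (or by dinaturality) in a suitable model $\mathcal{M}$ of System F in which all of $\Nda$ embeds and in which the interpretation of $\Nat$ is \emph{well-pointed by the Church numerals}, i.e.\ two closed terms of type $\Nat$ with the same denotation in $\mathcal{M}$ are $\beta\eta$-equal (this is the standard fact that the parametric/dinatural model is sound and that closed normal forms of type $\Nat$ are exactly the Church numerals, which are separated by the model). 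Then for any $\Nda$-context $\TT C[\ ] : A \vdash \Nat$, viewing it as a System F context, $t \approx u : A$ gives $\model{\TT C[t]}_{\mathcal M} = \model{\TT C[u]}_{\mathcal M}$ by compositionality and the assumed soundness of $\approx$ in $\mathcal M$; well-pointedness at $\Nat$ then yields $\TT C[t] \simeq_{\beta\eta} \TT C[u]$, which is exactly $t \simeq_{\Nat}^{\Nda} u : A$.

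The **main obstacle** is pinning down the precise model $\mathcal M$ and the precise statement "$\approx$ is sound in $\mathcal M$, and $\mathcal M$ is well-pointed at $\Nat$" in a way that is both true and cited correctly — this is where the references to \cite{Reynolds1983, Bainbridge1990} and to the uniqueness of primitive recursive functions up to contextual equivalence in F do the work. One delicate point is that we need soundness of $\approx$ not merely for closed equations but compositionally, so that plugging $t,u$ into an arbitrary $\Nda$-context still yields equal denotations; this is fine because free theorems are congruence-stable and the model interprets contexts as (parametric) morphisms. A second subtlety worth a remark is that we genuinely use $\TT C$ landing in $\Nat$ (not an arbitrary type): the conclusion $\TT C[t] \simeq_{\beta\eta} \TT C[u]$ is extracted from equality of denotations via separation of Church numerals, which would fail for types whose closed terms are not $\beta\eta$-classified by the model — but since $\simeq_{\Nat}^{\Nda}$ only quantifies over such contexts, this is exactly what is needed. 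The remaining steps (congruence of $\simeq_{\Nat}^{\Nda}$, induction on the derivation of $t \approx u$, stability of each rule) are routine and can be dispatched quickly once the model is fixed.
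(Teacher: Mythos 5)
Your proposal is correct and follows essentially the same route as the paper: both arguments factor through showing that $t \approx u : A$ implies $t \simeq_{\Nat}^{\mathrm F} u : A$ in full System F, and then observe that every $\Nda$-context of type $\Nat$ is in particular an F-context, so $\simeq_{\Nat}^{\mathrm F}$ refines $\simeq_{\Nat}^{\Nda}$. The only difference is how the first step is justified: the paper appeals to $\simeq_{\Nat}^{\mathrm F}$ being the coarsest congruence not equating distinct normal forms of type $\Nat$ (hence containing the consistent congruence $\approx$), whereas you unfold that same soundness claim semantically via a parametric model separating the Church numerals.
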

\begin{proof}
From $t\approx u:A$ it follows $t\simeq_{\Nat}^{\mathrm F}u:A$, since $\simeq_{\Nat}^{\mathrm F}$ is the coarsest congruence not equating normal forms of type $\Nat$. From $t\simeq_{\Nat}^{\mathrm F}u:A$ we deduce 
$t\simeq_{\Nat}^{\Nda}u:A$, since any context in $\Nda$ is a context in F.
\end{proof}
 
% Put together, 
%Lemma \ref{lemma:stlc} and
%Lemma \ref{lemma:freethm} suggest to see $\Nda$ as a refinement of $\STLC$ enabling equational reasoning by free theorems. 

We discuss below two applications of free theorems to study (CE) in $\Nda$.

%
%
%- Observe that a corollary of undecidability is that the maximum theory of Fat does not coincide with the maximum theory of$\STLC$ 
%
%- However, they are interesting EQUATIONALLY, due to polymorphism. 
%In fact, in System F betaeta is weak with respect to CE. For this reason several techniques exist to compute equivalences. FREE THEOREMS arstlce a case, and admit an elegant categorical semantics (dinaturality). 
% 
%Fat =$\STLC$ +  Free theorems
%
%While betaeta coincides with STLC, contextual equivalence is coarser (there are less tests!). 
%
%- Define two variants of CE. Show that they coincide in F, but NOT in Fat.
%
%
%- Justify FREE THEOREMS: if t=u holds in F, then it holds in Fat. 
%Discuss FREE THEOREMS in F, Dinaturality ecc.
%

\subparagraph*{Categorical Products and Coproducts}

As mentioned in Section \ref{sec:pre}, the usual encoding of products and coproducts in System F preserves $\beta$-equivalence but not $\eta$-equivalence.
For this reason, the encodings of $\times$ and $+$ do not form categorical products and coproducts in System F up to $\beta\eta$-equivalence (more precisely, in the syntactic category in which objects are the types of System F and arrows are the typable terms up to $\simeq_{\beta\eta}$).  
 Instead, it is well-known \cite{Plotkin1993, Hasegawa2009, StudiaLogica} that $\eta$-equivalence of $\times$ and $+$ is preserved in System F up to free theorems: hence $\times$ and $+$ do form categorical products and coproducts in System F up to $\simeq_{\Nat}^{\mathrm F}$ (more precisely, in the syntactic category whose arrows are the typable terms up to $\simeq_{\Nat}^{\mathrm F}$).
 
In a similar way, the predicative encodings of $\times$ and $+$ in $\Nda$, although preserving some restricted case of $\eta$-equivalence, still do not form categorical products and coproducts in $\Nda$ up to $\simeq_{\beta\eta}$. We will show that they similarly do form categorical products and coproducts in $\Nda$ up to $\simeq_{\Nat}^{\Nda}$, as a consequence of the application of free theorems.

For simplicity, we here only consider the case of $+$. However, our argument scales straightforwardly to the encoding of all finite polynomial types, i.e.~of all types of the form $\sum_{i=1}^{k}\prod_{j=1}^{k_{i}}A_{ij}$ (see the pre-print \cite{SL2} for a more detailed discussion).

The fundamental step is showing that the impredicative and predicative encodings are equivalent up to free theorems:
\begin{lemma}\label{lemma:perreiras}
For all types $A,B,C$ and terms $x\mapsto A\vdash u:C $ and $x\mapsto B\vdash v: C$, the equivalence 
$ \IOor_{C}[y](\lambda x.u)(\lambda x.v)\approx \mathrm{Case}_{C}(y, x.u, x.v) :C$ holds in System $\mathrm F$.

\end{lemma}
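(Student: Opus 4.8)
The plan is to prove, by induction on the structure of the type $C$, the slightly stronger statement that $\IOor_C[y]\approx yC$ as terms of type $(A\To C)\To (B\To C)\To C$ in System $\mathrm F$; the lemma then follows by applying both sides to $\lambda x.u$ and to $\lambda x.v$ (using the congruence rules of $\approx$) and unfolding $\mathrm{Case}_C(y,x.u,x.v)=yC(\lambda x.u)(\lambda x.v)$. The one non-trivial ingredient will be the free theorem for the variable $y:A\widetilde{+}B=\forall X.(A\To X)\To (B\To X)\To X$ — concretely, the instance of parametricity obtained by taking the graph of a function as the relating relation, which states that for all types $D,E$, every $h:D\To E$ and all $p:A\To D$, $q:B\To D$, one has $h(yDpq)\simeq yE(\lambda z.h(pz))(\lambda z.h(qz))$.

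For the base case $C=X$ there is nothing to do, since $\IOor_X[y]=yX$ by definition. For $C=C_1\To C_2$, I would unfold $\IOor_{C_1\To C_2}[y]=\lambda f\,g\,w.\,\IOor_{C_2}[y](\lambda z.fzw)(\lambda z.gzw)$ (renaming the bound variable of type $C_1$ to $w$ to avoid a clash with the hole-filler $y$), use the induction hypothesis on $C_2$ together with the congruence rules to get $\IOor_{C_1\To C_2}[y]\approx\lambda f\,g\,w.\,yC_2(\lambda z.fzw)(\lambda z.gzw)$, and then invoke the free theorem with $D=C_1\To C_2$, $E=C_2$ and $h=\lambda r.rw$ (evaluation at $w$): after $\beta/\alpha$-normalising $\lambda z.h(fz)$ to $\lambda z.fzw$ and similarly for $g$, this gives $\big(y(C_1\To C_2)fg\big)w\simeq yC_2(\lambda z.fzw)(\lambda z.gzw)$, and abstracting over $f,g,w$ and using $\eta$ three times collapses the right-hand side to $y(C_1\To C_2)=yC$. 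The case $C=\forall Y.C'$ runs in exactly the same way: from $\IOor_{\forall Y.C'}[y]=\lambda f\,g.\,\Lambda Y.\,\IOor_{C'}[y](\lambda z.fzY)(\lambda z.gzY)$ and the induction hypothesis on $C'$ one gets $\IOor_{\forall Y.C'}[y]\approx\lambda f\,g.\,\Lambda Y.\,yC'(\lambda z.fzY)(\lambda z.gzY)$, and the free theorem with $h=\lambda r.rY$ (instantiation at $Y$), followed by the $\forall$-$\eta$ rule $\Lambda Y.tY\simeq t$ and ordinary $\eta$, yields $y(\forall Y.C')=yC$.

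The routine but delicate parts will be the bound-variable bookkeeping — in particular checking that the term the free theorem produces, $\lambda z.h(fz)$, is $\alpha$-equal to the term $\lambda z.fzw$ (resp.\ $\lambda z.fzY$) produced by the induction hypothesis, so that the two sides match on the nose — and, in the universal case, noting that $h=\lambda r.rY$ mentions the locally bound type variable $Y$, which is harmless since free theorems hold uniformly over all type instantiations. I expect the genuine content, and the main obstacle, to be simply recognising that each inductive step reduces to a single application of the naturality free theorem of $y$ in its type argument: once that is spotted, everything else is pure $\beta\eta$-rewriting.
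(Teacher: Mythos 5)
Your proof is correct and follows essentially the same route as the paper: induction on $C$, with each inductive step reduced to one application of the naturality/dinaturality free theorem of $y:A\widetilde{+}B$ in its type argument (your $h(yDpq)\simeq yE(\lambda z.h(pz))(\lambda z.h(qz))$ is exactly the paper's Eq.~\eqref{eq:ft1} with the context $\TT C[\ ]=h[\ ]$, instantiated at $[\ ]z$ resp.\ $[\ ]Z$), followed by $\beta\eta$-bookkeeping. The only difference is cosmetic: you prove the point-free form $\IOor_C[y]\approx yC$ and recover the lemma by congruence, whereas the paper keeps the continuations $\lambda x.u$, $\lambda x.v$ applied throughout.
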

\begin{proof}
The free theorem associated with the type $A\widetilde + B$ is the schematic equation
\begin{equation}\label{eq:ft1}
\mathrm{Case}_{E}(t_{1}, x.\TT C[t_{2}],  x.\TT C[t_{3}]) \approx  \TT C\Big[\mathrm{Case}_{D}( t_{1}, x.t_{2},  x.t_{3})\Big]
\end{equation}
where $t_{1}:A\widetilde + B$, $t_{2}[x:A]:D$, $t_{2}[x:B]:D$ and $\TT C[\ ]: D\vdash E$. In fact, this equation is an instance of the \emph{dinaturality} condition for the type $A\widetilde + B$ (see \cite{Plotkin1993, Hasegawa2009, Pistone2021}). 

We argue by induction on $C$:
\begin{itemize}
\item if $C=Y$, then $\IOor_{C}[y](\lambda x.u)(\lambda x.v)= yY(\lambda x.u)(\lambda x.v)=\mathrm{Case}_{C}(y,x.u,x.v)$;

\item if $C=C_{1}\To C_{2}$, then
\begin{align*}
\IOor_{C}[y](\lambda x.u)(\lambda x.v) & = 
\Big(
\lambda fgz.\IOor_{C_{2}}[y] (\lambda x.fxz)(\lambda x.gxz)\Big )
(\lambda x.u)(\lambda x.v) \\ 
& \stackrel{\text{[I.H.]}}{\approx}
\Big (\lambda fgz. \mathrm{Case}_{C_{2}}(y, x.fxz, x.gxz)\Big)
(\lambda x.u)(\lambda x.v) \\ 
& \simeq_{\beta}
\lambda z. \mathrm{Case}_{C_{2}}(y, x.uz, x.vz)\\
& \approx 
\lambda z. \Big(\mathrm{Case}_{C}(y, x.u, x.v)\Big)z
\\
& \simeq_{\eta}
\mathrm{Case}_{C}(y, x.u, x.v)
%
%
% \lambda z. \IOor_{C_{2}}[y] (\lambda a.(\lambda x.u)az)(\lambda b.(\lambda x.u)bz) \\
% & = \lambda z. \IOor_{C_{2}}[y] (\lambda x.uz)(\lambda x. uz) \\
\end{align*}
where in the penultimate step we applied Eq.~\eqref{eq:ft1} with the context $\TT C[\ ]= [\ ]z: C \vdash C_{2}$.

\item if $C=\forall Z.C'$, then
\begin{align*}
\IOor_{C}[y](\lambda x.u)(\lambda x.v) & = 
\Big(
\lambda fg.\Lambda Z.\IOor_{C'}[y] (\lambda x.fxZ)(\lambda x.gxZ)\Big )
(\lambda x.u)(\lambda x.v) \\ 
& \stackrel{\text{[I.H.]}}{\approx}
\Big (\lambda fg.\Lambda Z. \mathrm{Case}_{C'}(y, x.fxZ, x.gxZ)\Big)
(\lambda x.u)(\lambda x.v) \\ 
& \simeq_{\beta}
\Lambda Z. \mathrm{Case}_{C'}(y, x.uZ, x.vZ)\\
& \approx 
\Lambda Z. \Big(\mathrm{Case}_{C}(y, x.u, x.v)\Big)Z
\\
& \simeq_{\eta}
\mathrm{Case}_{C}(y, x.u, x.v)
%
%
% \lambda z. \IOor_{C_{2}}[y] (\lambda a.(\lambda x.u)az)(\lambda b.(\lambda x.u)bz) \\
% & = \lambda z. \IOor_{C_{2}}[y] (\lambda x.uz)(\lambda x. uz) \\
\end{align*}
where in the penultimate step we applied Eq.~\eqref{eq:ft1} with the context $\TT C[\ ]= [\ ]Z: C \vdash C'$.
\end{itemize}
\end{proof}

\begin{proposition}
$A\widetilde + B$ is a categorical coproduct in $\Nda$ up to $\simeq_{\Nat}^{\Nda}$.
\end{proposition}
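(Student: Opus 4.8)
The plan is to realise $A\widetilde+B$, equipped with the injections $\iota_1\colon A\to A\widetilde+B$ and $\iota_2\colon B\to A\widetilde+B$ (these are the usual F-injections, which contain no type instantiation and so are trivially $\Nda$-typable), as the coproduct of $A$ and $B$ in the syntactic category whose objects are $\Nda$-types and whose arrows $C\to D$ are the closed $\Nda$-typable terms of type $C\To D$ quotiented by $\simeq_{\Nat}^{\Nda}$. Given a competing cocone $f\colon A\to C$, $g\colon B\to C$, the candidate mediating arrow is $[f,g]:=\lambda y.\IOor_C[y](\lambda x.fx)(\lambda x.gx)$, which is $\Nda$-typable because $\IOor_C[\ ]$ is a $\Nda$-context of type $A\widetilde+B\vdash(A\To C)\To(B\To C)\To C$. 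What remains is to check the cocone equations $[f,g]\circ\iota_i\simeq_{\Nat}^{\Nda}f,g$ and the uniqueness of $[f,g]$ among arrows with this property.

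The technical core I would prove first is a \emph{reconstruction principle}: for every closed $\Nda$-typable $h\colon A\widetilde+B\to C$,
$$h\ \simeq_{\Nat}^{\Nda}\ \lambda y.\IOor_C[y](\lambda x.\,h(\iota_1 x))(\lambda x.\,h(\iota_2 x)).$$
To prove this I would argue inside System F. By Lemma~\ref{lemma:perreiras}, the right-hand side is $\approx$ to $\lambda y.\mathrm{Case}_C(y,x.h(\iota_1 x),x.h(\iota_2 x))$. Instantiating the free-theorem Eq.~\eqref{eq:ft1} with the context $\TT C[\ ]=h(\ )\colon A\widetilde+B\vdash C$ and $t_1=y$, $t_2=\iota_1 x$, $t_3=\iota_2 x$, this becomes $\approx\lambda y.\,h(\mathrm{Case}_{A\widetilde+B}(y,x.\iota_1 x,x.\iota_2 x))$; and since the $\eta$-rule for sums holds up to free theorems in F (\cite{Plotkin1993,Hasegawa2009}), the inner $\mathrm{Case}$ collapses to $y$, leaving $\lambda y.h(y)\simeq_{\eta}h$. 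Thus the two terms are $\approx$-equal in F, and as both are $\Nda$-typable, Lemma~\ref{lemma:freethm} promotes this to $\simeq_{\Nat}^{\Nda}$.

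Given the reconstruction principle, the rest is bookkeeping. For the cocone equations, $[f,g]\circ\iota_1=\lambda a.[f,g](\iota_1 a)$ reduces (using $\beta$ and Lemma~\ref{lemma:perreiras}) to $\approx\lambda a.\mathrm{Case}_C(\iota_1 a,x.fx,x.gx)\simeq_{\beta}\lambda a.fa\simeq_{\eta}f$, hence $[f,g]\circ\iota_1\simeq_{\Nat}^{\Nda}f$ by Lemma~\ref{lemma:freethm}, and symmetrically $[f,g]\circ\iota_2\simeq_{\Nat}^{\Nda}g$. For uniqueness, if $h\colon A\widetilde+B\to C$ satisfies $h\circ\iota_1\simeq_{\Nat}^{\Nda}f$ and $h\circ\iota_2\simeq_{\Nat}^{\Nda}g$, then by reconstruction $h\simeq_{\Nat}^{\Nda}\lambda y.\IOor_C[y](\lambda x.h(\iota_1 x))(\lambda x.h(\iota_2 x))$; since $\lambda x.h(\iota_i x)$ is precisely a representative of $h\circ\iota_i$ and $\simeq_{\Nat}^{\Nda}$ is a congruence, I may rewrite the two arguments to $\lambda x.fx$ and $\lambda x.gx$, obtaining exactly $[f,g]$, so $h\simeq_{\Nat}^{\Nda}[f,g]$. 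The same argument scales verbatim to all finite polynomial types, but following the rest of the section I would only state the case of $+$.

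The only place where real work happens is the reconstruction principle, and within it the delicate point is the oscillation between the System F equational theory $\approx$ — where dinaturality (Eq.~\eqref{eq:ft1}) and the $\eta$-law for Church sums are available — and $\simeq_{\Nat}^{\Nda}$: one must keep every intermediate term within $\Nda$ so that Lemma~\ref{lemma:freethm} can be invoked to transfer the equalities back. Once that is secured, the categorical verification (the cocone identities and the uniqueness of the mediator) is immediate from the congruence property of $\simeq_{\Nat}^{\Nda}$.
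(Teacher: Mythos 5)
Your proof is correct and follows essentially the same route as the paper: the decisive step in both is that the coproduct $\eta$-rule $y \approx \mathrm{Case}_{A\widetilde + B}(y, x.\iota_{1}(x), x.\iota_{2}(x))$ holds in F up to free theorems and is transferred to the predicative encoding via Lemma~\ref{lemma:perreiras} and down to $\simeq_{\Nat}^{\Nda}$ via Lemma~\ref{lemma:freethm}; you merely unfold the paper's citation of Lambek--Scott by deriving the universal property explicitly (your ``reconstruction principle'') from $\eta$ together with the dinaturality equation~\eqref{eq:ft1}. One small over-caution: Lemma~\ref{lemma:freethm} only requires the two endpoints of the $\approx$-chain to be $\Nda$-typable, not every intermediate term, so the ``delicate point'' you flag at the end is not actually a constraint on the argument.
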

\begin{proof}
It suffices to check that the $\eta$-rule of the coproduct 
(see \cite{LambekScott}) holds in $\Nda$. By translating this rule in F one obtains the equation
$$
y \approx \mathrm{Case}_{A\widetilde + B}(y, x.\iota_{1}(x), x.\iota_{2}(x)) : A\widetilde + B
$$
which holds in F up to free theorems (see \cite{Plotkin1993, Hasegawa2009, StudiaLogica}). 
Using Lemma \ref{lemma:perreiras} we thus deduce that $y \approx \IOor_{A\widetilde + B}[y](\lambda x.\iota_{1}(x))(\lambda x.\iota_{2}(x)) : A\widetilde + B
$ holds in F, and by Lemma \ref{lemma:freethm} we deduce 
 $y \simeq_{\Nat}^{\Nda} \IOor_{A\widetilde + B}[y](\lambda x.\iota_{1}(x))(\lambda x.\iota_{2}(x)) : A\widetilde + B
$.
\end{proof}

%- (1) A natural application is Finite Initial Algebras:\\
%		- illustrate just the case of +. \\
%		- In F you have a coproduct (using free theorems). The encoding in Fat is equivalent to the one in F up to free theorems. So you have a coproduct in Fat.

\subparagraph*{Numerical Functions}

We now consider the representable numerical functions, that is, the typable terms of type 
$\Nat \To \dots \To \Nat \To \Nat$. In this case we can strengthen Lemma \ref{lemma:stlc}:
\begin{lemma}\label{lemma:stlc1}
For any $\beta$-normal $\lambda$-term $t$, 
$\vdash t:\Nat \To \dots \To \Nat \To \Nat$ holds in Curry-style $\Nda$ iff 
$\vdash t:|\Nat| \To \dots \To| \Nat |\To |\Nat| $ holds in $\STLC$.
\end{lemma}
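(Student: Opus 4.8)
The plan is to prove the two implications separately. The direction from $\Nda$ to $\STLC$ is an immediate consequence of Lemma~\ref{lemma:stlc}: if $\vdash t:\Nat\To\dots\To\Nat\To\Nat$ holds in Curry-style $\Nda$, then $\vdash t:|\Nat\To\dots\To\Nat\To\Nat|$ holds in $\STLC$, and since $|{\cdot}|$ commutes with $\To$ and $|\Nat|=|\forall X.(X\To X)\To(X\To X)|=(o\To o)\To(o\To o)$, this is exactly $\vdash t:|\Nat|\To\dots\To|\Nat|\To|\Nat|$ (here $\beta$-normality is not even needed).

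For the converse, assume $t$ is $\beta$-normal with $\vdash t:|\Nat|\To\dots\To|\Nat|\To|\Nat|$ in $\STLC$, where there are $k$ copies of $|\Nat|$ in domain position. The first step is to read off the shape of $t$. A closed $\beta$-normal term of arrow type must be a $\lambda$-abstraction, since a neutral term would have a free head variable; so $t=\lambda g_1.t_1$ with $g_1:|\Nat|\vdash t_1:|\Nat|\To\dots\To|\Nat|\To|\Nat|$ ($k-1$ copies). Iterating this, at each of the first $k-1$ peeling stages the variables in scope are among the $g_i$'s, all of type $|\Nat|=(o\To o)\To(o\To o)$, while the target type still has the form $|\Nat|\To(\cdots)$, i.e.\ an arrow whose source is $(o\To o)\To(o\To o)$. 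By injectivity of $\To$ none of the types $(o\To o)\To(o\To o)$, $o\To o$, $o$ of the partial applications $g_i$, $g_iN$, $g_iNN'$ is an arrow with that source, so the term at such a stage cannot be neutral and must again be an abstraction. Hence $t=\lambda g_1\dots g_k.\,s$ with $g_1:|\Nat|,\dots,g_k:|\Nat|\vdash s:(o\To o)\To(o\To o)$ in $\STLC$, $s$ still $\beta$-normal (whether $s$ is itself an abstraction or a neutral term headed by some $g_i$ will not matter).

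It then remains to re-type $s$ in $\Nda$, which I would do by a relabeling argument. Fixing the $\STLC$-derivation $D$ of $g_1:|\Nat|,\dots,g_k:|\Nat|\vdash s:(o\To o)\To(o\To o)$ and a type variable $X$, substituting the base type $o$ by $X$ throughout $D$ gives a valid derivation in the pure $\To$-fragment --- hence in $\Nda$ --- of $g_1:(X\To X)\To(X\To X),\dots,g_k:(X\To X)\To(X\To X)\vdash s:(X\To X)\To(X\To X)$. Since in $\STLC$ each $g_i$ carries the single type $|\Nat|$ throughout $D$, every axiom leaf for $g_i$ in the relabeled derivation concludes $g_i:(X\To X)\To(X\To X)$, and I replace each such leaf by the axiom $g_i:\Nat$ followed by a $\forall$E step with (atomic) witness $X$, which is legal in $\Nda$. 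This yields a $\Nda$-derivation of $g_1:\Nat,\dots,g_k:\Nat\vdash s:(X\To X)\To(X\To X)$; since $X\notin\mathrm{FV}(g_1:\Nat,\dots,g_k:\Nat)$ because $\Nat$ is closed, $\forall$I gives $g_1:\Nat,\dots,g_k:\Nat\vdash s:\forall X.(X\To X)\To(X\To X)=\Nat$, and $k$ applications of Abs give $\vdash\lambda g_1\dots g_k.\,s:\Nat\To\dots\To\Nat\To\Nat$ in $\Nda$, i.e.\ $\vdash t:\Nat\To\dots\To\Nat\To\Nat$. The only point requiring care is the shape analysis in the converse, where $\beta$-normality is genuinely used to force the $k$ leading abstractions; the conceptual crux of the relabeling step is that a \emph{single} atomic instantiation $X$ of each $g_i$ suffices, precisely because in $\STLC$ a variable is used at one type only --- in $\Fone$ or full $\mathrm F$, where a variable may be instantiated at many types, this argument would break down.
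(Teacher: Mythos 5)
Your proposal is correct and follows essentially the same route as the paper's proof: forward direction by Lemma~\ref{lemma:stlc}, and for the converse, peel off the leading abstractions, retype the body in the $\To$-fragment with $o$ renamed to a type variable $X$ (so each $g_i$ gets $\Nat[X]$), strengthen each $g_i$ to $\Nat$ via an atomic $\forall$E at $X$, and close with $\forall$I and Abs. You merely spell out two points the paper leaves implicit (that $\beta$-normality forces the $k$ leading abstractions, and the leaf-by-leaf relabeling), which is fine.
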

\begin{proof}
One direction follows from Lemma \ref{lemma:stlc}. For the converse one, let $t$ (which we can suppose w.l.o.g.~to be of the form $\lambda x_{1}\dots x_{k}.u$) be such that $\vdash t:|\Nat| \To \dots \To| \Nat |\To |\Nat| $. By letting $\Nat[X]=(X\To X)\To (X\To X)$ we deduce that 
$\{x_{i}\mapsto \Nat[X]\}\vdash u: \Nat[X]$ holds in $\Nda$, and thus that
$\{x_{i}\mapsto \Nat\}\vdash u: \Nat[X]$ holds too, from which we conclude
$\vdash u:\Nat \To \dots \To \Nat \To \Nat$.
\end{proof}

A consequence of Lemma \ref{lemma:stlc1} is that the representable numerical functions in $\Nda$ are precisely the extended polynomials, i.e.~the smallest class of functions arising from projections, constant functions, addition, multiplication and the $\TT{iszero}$ function. 
Instead, it is well-known that the predecessor function (which is not an extended polynomial) is
typable in $\ML$ \cite{Leivant1983b} and, more generally, the representable functions of $\ML$ 
are included in the class $\C E_{3}$ of the Grzegorczyk hierarchy \cite{Leivant1991}.%, but, to the best of the authors' knowledge, the exact characterization of this class is still an open problem.

Still, in both $\STLC$ and $\Nda$ the same extended polynomial can be represented by different normal forms.  For instance the two normal forms $\lambda xy fz. x (yf)z $ and $\lambda xyfz. y(xf)z$ (encoding the algorithms $n,m \mapsto 
\underbrace{m+\dots+m}_{n\text{ times}}$ and $n,m \mapsto 
\underbrace{n+\dots+n}_{m\text{ times}}$) both represent the multiplication function.

In System F, one can show that all primitive recursive functions are \emph{uniquely defined} up to free theorems, that is, that for any two terms $t,u$ representing the same primitive recursive function, one can prove $t\approx u$ (see \cite{Pistone2020b}, Section 7.5). Using Lemma \ref{lemma:freethm} we deduce then:

\begin{lemma}\label{lemma:eqreas}
For all $t,u: \Nat\To \dots \To \Nat \To \Nat$ in $\mathrm F^{*}\in\{\Nda, \ML,\Fone\}$, if for all $p_{1},\dots, p_{k}\in \BB N$, 
$t\B p_{1}\dots \B p_{k}\simeq_{\beta\eta} u\B p_{1}\dots \B p_{k}:\Nat$, then $t\simeq_{\Nat}^{\mathrm F^{*}}u$.
\end{lemma}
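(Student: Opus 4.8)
The plan is to reduce the statement to the uniqueness‑up‑to‑free‑theorems result for primitive recursive functions that is already available for full System~$\mathrm F$, and then to transfer contextual equivalence from $\mathrm F$ down to $\mathrm F^*$ exactly as in the proof of Lemma~\ref{lemma:freethm}.

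First I would observe that, since $\mathrm F^*\subseteq\mathrm F$ and every closed term of type $\Nat$ in System~$\mathrm F$ normalizes to a unique Church numeral, both $t$ and $u$ (read as terms of $\mathrm F$) represent one and the same numerical function $f:\BB N^{k}\to\BB N$; this is precisely what the hypothesis $t\B p_{1}\dots\B p_{k}\simeq_{\beta\eta}u\B p_{1}\dots\B p_{k}$ for all $p_{1},\dots,p_{k}$ asserts. The next point is that $f$ is primitive recursive. For $\mathrm F^*=\Nda$ this follows from Lemma~\ref{lemma:stlc1}, which identifies the representable functions of this type with the extended polynomials; for $\mathrm F^*=\ML$ the representable functions of this type lie in the class $\C E_{3}$ of the Grzegorczyk hierarchy \cite{Leivant1991}; and for $\mathrm F^*=\Fone$ the representable numerical functions are likewise primitive recursive (this is the only point at which a property of $\Fone$ beyond its being a fragment of $\mathrm F$ is used).

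With $f$ primitive recursive, $t$ and $u$ are two System~$\mathrm F$ terms of type $\Nat\To\dots\To\Nat\To\Nat$ representing the same primitive recursive function, so the uniqueness result of \cite{Pistone2020b} (Section~7.5) yields $t\approx u:\Nat\To\dots\To\Nat\To\Nat$. From here the argument mirrors the last two lines of the proof of Lemma~\ref{lemma:freethm}: since $\approx$ is a congruence that does not equate distinct normal forms of type $\Nat$ (neither the $\beta\eta$‑rules nor the free theorems of $\mathrm F$ prove $\B n\simeq\B m$ for $n\neq m$) and $\simeq_{\Nat}^{\mathrm F}$ is the coarsest such congruence, we get $t\simeq_{\Nat}^{\mathrm F}u$; and since every context $\TT C[\ ]:A\vdash\Nat$ of $\mathrm F^*$ is a context of $\mathrm F$ — for $\mathrm F^*=\ML$ after replacing each $\mathsf{let}$ by the corresponding $\beta$‑redex, which does not change the numeral eventually produced — this gives $\TT C[t]\simeq_{\beta\eta}\TT C[u]$, i.e.\ $t\simeq_{\Nat}^{\mathrm F^*}u$, as required.

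The part carrying the real mathematical weight is the appeal to \cite{Pistone2020b}: that any two System~$\mathrm F$ representations of the same primitive recursive function are provably equal up to free theorems. Reproving this from scratch would require an induction on the structure of a primitive‑recursive definition of $f$, invoking at each step the free theorem (dinaturality/parametricity) for the iterator type $\Nat$ — in particular, the free theorem for $\Nat$ is what lets one commute a recursion with the function being iterated, and threading these applications of dinaturality along the definition tree of $f$ is the delicate bookkeeping. The remaining ingredients — normalization of $\mathrm F$, the characterizations of the representable numerical functions of $\Nda$, $\ML$ and $\Fone$, and the coarseness of $\simeq_{\Nat}^{\mathrm F}$ — are either established earlier in the paper or standard.
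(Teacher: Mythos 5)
Your proposal is correct and follows essentially the same route as the paper, which derives this lemma directly from the uniqueness-up-to-free-theorems of primitive recursive functions in System F (\cite{Pistone2020b}, Section 7.5) combined with the transfer argument of Lemma~\ref{lemma:freethm}. The extra details you supply --- verifying that the functions representable at type $\Nat\To\dots\To\Nat\To\Nat$ in each of $\Nda$, $\ML$, $\Fone$ are indeed primitive recursive, and handling $\mathsf{let}$ in $\ML$ --- are exactly the points the paper leaves implicit, relying on the characterizations of representable functions recalled just before the lemma.
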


The problem $\mathrm{Eq}_{\C C}$ of deciding $f=g$, where $f,g$ belong to some subclass $\C C$ of the primitive recursive functions, is well-investigated. In particular, it is known that:
\begin{itemize}
\item if $\C C$ is the class of extended polynomials, then $\mathrm{Eq}_{\C C}$ is decidable \cite{Zaionc1997};
\item if $\C C$ contains projections, constants, +, $\times$ and bounded multiplication, then $\mathrm{Eq}_{\C C}$ is undecidable \cite{Lee1969}.
%\item if $\C C$ contains projections, constants, $+$, $\times$ and $\dotdiv$, then $\mathrm{Eq}_{\C C}$ is undecidable \cite{Lee1969, Okada1999}.
\end{itemize}

From these facts, using Lemma \ref{lemma:eqreas}, we deduce then:
\begin{proposition}
\begin{itemize}
\item[i.] The problem of deciding $\simeq_{\Nat}^{\Nda}$ over numerical functions in $\Nda$ is decidable.
\item[ii.] The problem of deciding $\simeq_{\Nat}^{\mathrm F^{*}}$ over numerical functions in $\mathrm F^{*}\in\{\ML, \Fone\}$ is undecidable.
\end{itemize}
\end{proposition}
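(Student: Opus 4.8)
The plan is to reduce both statements to the corresponding decision problems for equality of numerical functions, exploiting the fact that, on terms of type $\Nat\To\dots\To\Nat\To\Nat$ in $\mathrm F^{*}$, the relation $\simeq_{\Nat}^{\mathrm F^{*}}$ is nothing but extensional equality of the represented functions. First I would establish this bridge. Given $t,u$ of this type, one direction is exactly Lemma \ref{lemma:eqreas}: agreement on all tuples of Church numerals implies $t\simeq_{\Nat}^{\mathrm F^{*}}u$. For the converse, instantiating the hole of a context with $[\ ]\B p_{1}\dots\B p_{k}:\Nat$ shows that $t\simeq_{\Nat}^{\mathrm F^{*}}u$ forces $t\B p_{1}\dots\B p_{k}\simeq_{\beta\eta}u\B p_{1}\dots\B p_{k}$, and since these are closed terms of type $\Nat$ this exactly says that the two represented functions take the same value at $(p_{1},\dots,p_{k})$. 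Hence deciding $\simeq_{\Nat}^{\mathrm F^{*}}$ over numerical functions \emph{is} the problem $\mathrm{Eq}_{\C C}$ for $\C C$ the class of numerical functions representable in $\mathrm F^{*}$.

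For item (i), recall (Lemma \ref{lemma:stlc1} and the ensuing remark) that the numerical functions representable in $\Nda$ are precisely the extended polynomials. Given $\Nda$-typed terms $t,u$ of numerical type, I would compute a canonical description of the extended polynomial each denotes, and then appeal to the decidability of $\mathrm{Eq}_{\C C}$ for extended polynomials \cite{Zaionc1997}. Extracting the polynomial is routine: its degree is bounded by the size of the normal form of $t$, so it is determined by its values on a computable finite grid of tuples of Church numerals, each value being obtained by normalization.

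For item (ii), I would instead transfer the undecidability result of \cite{Lee1969}. It suffices to check that projections, constant functions, $+$, $\times$ and bounded multiplication are all representable in $\ML$ (hence in $\Fone$, since every $\ML$-representable numerical function is also $\Fone$-representable, the $\mathsf{let}$-constructor being definable in $\Fone$): these functions all lie in the class $\C E_{3}$ of the Grzegorczyk hierarchy, which by \cite{Leivant1991} is contained in the class of $\ML$-representable functions; alternatively one fixes explicit terms (the predecessor term of \cite{Leivant1983b} being the relevant non-polynomial ingredient). Since representability in $\ML$ (resp.\ $\Fone$) is closed under composition in an effective way — one simply composes the witnessing terms — the whole class $\C C$ on which $\mathrm{Eq}_{\C C}$ is undecidable by \cite{Lee1969} is effectively representable in $\mathrm F^{*}$. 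Through the bridge above, any decision procedure for $\simeq_{\Nat}^{\mathrm F^{*}}$ would yield one for $\mathrm{Eq}_{\C C}$, a contradiction.

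The main obstacle, in both cases, is not the underlying logic but the effectivity of the translations: in (i), turning a term into a description of its extended polynomial in a form digestible by Zaionc's algorithm; in (ii), exhibiting (or invoking via $\C E_{3}\subseteq$ $\ML$-representable functions) concrete $\ML$-terms for the generators of Lee's class and checking that composition of representations is effective, so that Lee's undecidable instance genuinely reduces to ours. Once these are in place, the rest is bookkeeping around Lemma \ref{lemma:eqreas} and the bridge observation.
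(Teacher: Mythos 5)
Your overall strategy is the same as the paper's: reduce $\simeq_{\Nat}^{\mathrm F^{*}}$ on numerical types to the equality problem $\mathrm{Eq}_{\C C}$ for the class of representable functions via Lemma \ref{lemma:eqreas} (plus the easy converse obtained by plugging $[\ ]\B p_{1}\dots\B p_{k}$ into a context), and then invoke \cite{Zaionc1997} for decidability and \cite{Lee1969} for undecidability. Item (i) is handled exactly as in the paper (the ``canonical description'' step is an unnecessary but harmless elaboration: Zaionc's result already applies to extended polynomials presented by simply typed normal forms).

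For item (ii), however, there is a genuine gap. Your primary route rests on the inclusion $\C E_{3}\subseteq\{\ML\text{-representable functions}\}$, but the result of \cite{Leivant1991} cited in the paper goes in the \emph{opposite} direction: the $\ML$-representable functions are \emph{included in} $\C E_{3}$ (it is the union of the whole finitely stratified hierarchy, not its rank-1 part, that captures all of $\C E_{3}$), so this inclusion cannot be invoked. Your fallback --- ``fix explicit terms, the predecessor being the relevant non-polynomial ingredient'' --- misidentifies where the difficulty lies: the predecessor is not among Lee's generators, and what must actually be shown is that $\ML$-representability is closed under the bounded product operator, i.e.\ that if $g$ is representable then so is $n\mapsto\prod_{k\leq n}g(k)$. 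This is the entire nontrivial content of the paper's proof (Appendix \ref{appD}): it requires an encoding of $n$-ary tuples in $\STLC$, the characterization of $\ML$-typable numerical functions as diagonalized $\STLC$-terms $t^{*}:\Nat[A_{1}]\To\dots\To\Nat[A_{k}]\To\Nat[o]$, and an iteration computing the triple $\langle\prod_{k\leq n}g(k),\langle\B n,\dots,\B n\rangle,g(n)\rangle$ simultaneously, in the style of the predecessor construction but substantially more involved. Without this construction (or a correct citation replacing it), the reduction from Lee's problem does not go through.
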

\begin{proof}
i. is immediate from Lemma \ref{lemma:stlc1} and Lemma \ref{lemma:eqreas}. To prove ii. it suffices to show that the representable functions in $\ML$ are closed under bounded multiplication. We show this fact in detail in App.~\ref{appD}. 
\end{proof}

An immediate corollary is that (CE) is undecidable in both $\ML$ and $\Fone$.

%
% \emph{equational reasoning} on numerical functions. In other words, one can 
%
%
%
%- (2) Decidability of equational reasoning:\\
%		- justify equational reasoning: e.g. x+y=y+x etc. Use (CE) and mention works on dinaturality (Roman Pare and us). This allows to draw another distinction between $\Nda$, $\ML$ and $\Fone$ 
%		
%		
%		- Observe that (1) +,$\times$ and $-$ are undecidable (Scott Okada), (2) +,$\times$ and $\Pi$ are undecidable (Tizio), (3) extended polynomials are decidable (Zaionc). 
%		
%		
%				- Zaionc's decidability theorem\\
%		- ML-representable functions and skew representable functions. Undecidability theorem (bounded multiplication is ML-representable).
% (i.e. the closed terms of type $\Nat\To \dots \To \Nat \To \Nat$, where $\Nat=\forall X.(X\To X)\To (X\To X)$) 

\section{Contextual Equivalence is Undecidable}\label{sec:equivalence}

% !TEX root = Undecidability.tex

In this section we show that the congruences $\simeq_{\Nat}^{\Nda}$ and $\simeq_{\Bool}^{\Nda}$ are both undecidable. 
To do this, we will reduce the type inhabitation problem for a suitable extension of $\Nda$ to contextual equivalence.
We discuss in some detail the undecidability argument for $\simeq_{\Bool}^{\Nda}$, and we postpone the  similar argument for $\simeq_{\Nat}^{\Nda}$ to App.~\ref{appE}.

Let $\Ndac$ be $\Nda$ extended with a type constant $\clubsuit$ and a term constant $\star:\clubsuit$.
It is not difficult to see that the undecidability argument for (TI) from Section 3 also applies to $\Ndac$. 

Let ${\widetilde \top}:\forall X.X\To X$ and 
$\bb{\mathsf{Id}}:= \bb{\Lambda X.\lambda x.x}$ be the unique closed $\beta$-normal term of type $\widetilde \top$.
%Moreover, for each type $A$ of $\Ndac$, we let $A^{*}= Y\To A[Y/\clubsuit]$.
%Moreover, we let $\Bool$ indicate the type $\forall X.X\To X\to X$ and $\bb{\B t}:= \bb{\Lambda X.\lambda x_{1}x_{2}.x_{1}}$ and $\bb{\B f}:= \bb{\Lambda X.\lambda x_{1}x_{2}.x_{2}}$ indicate the unique two closed $\beta$-normal terms of type $\Bool$.
%In App.~\ref{appB} it is shown that (TI) for $\Ndac$ is equivalent to (TI) for $\Nda$, and thus, undecidable.

The fundamental idea will be to construct, for each type $A$ of $\Ndac$, two terms $t_{A},u_{A}$ of type 
$(A^{*}\widetilde  + \widetilde \top)\To \Bool$ (where $A^{*}=Y\To A[Y/\clubsuit]$, for some fresh $Y$), such that $t_{A}\simeq_{\Bool}^{\Nda }u_{A}$ holds in $\Nda$ iff 
$A$ is inhabited in $\Ndac$.

Let us fix a type $A$ of $\Ndac$,
a variable $Y$ not occurring free in $A$, and let $A^{*}= Y\To A[Y/\clubsuit]$. We let $\bb{u_{A}}, \bb{v_{A}}$ be the terms below:
\begin{align*}
\bb{u_{A}}    = \bb{ \lambda x. \B{f} } \qquad 
\bb{v_{A} }   = \lambda x.  \IOor_{\Bool}[\ ] (  \lambda x. \B t) ( \lambda x. \B f)
\end{align*}

%In the following we will indicate as $\TT C[\ ]$ a term with a designated free variable $x$; for any term $u$, we let $\TT C[u]$ be a shorthand for $\TT C[u/x]$. 
In the following, for a term context $\TT K[\ ]$, we let $\TT K[\ ]: A \vdash^{\Gamma} B$ be a shorthand for $\Gamma, x\mapsto A\vdash\TT K[\ ]: B$. % and $\TT C[\ ]: A \vdash B$ be a shorthand for $\TT C[\ ]: A \vdash^{\emptyset} B$.

We let $\BB G_{1}$-$\BB G_{4}$ be the families of term contexts defined by mutual recursion as shown in Fig.~\ref{fig:contexts}.
One can check that these contexts cannot separate $u_{A}$ and $v_{A}$:

\begin{figure}
\fbox{
\begin{minipage}{0.88\textwidth}
\vskip-0.3cm
\begin{align*}
\BB G_{1}: \qquad \rr{\TT C}[\ ] &:: =   \bb{x_{i}} \mid \rr{\TT E}[\ ] Y\bb{\TT C[\ ]\TT C[\ ]} \\
\BB G_{2}:\qquad \rr{\TT D}[\ ] & ::=  \bb{ z}(\Lambda W.\lambda w.\rr{\TT F}[\ ])\mid \bb{\TT E}[\ ]\bb{W} \TT D[\ ] \TT D[\ ]\\
\BB G_{3}: \qquad \rr{\TT E}[\ ]& :: =   \B t\mid \B f\mid 
%\Lambda Z.\lambda x.\lambda x'. ([\ ] \Lambda Y.\lambda y.\lambda z.\rr{\TT D}[\ ])\TT C_{1}[\ ]\TT C_{2}[\ ] 
x (\Lambda Y.\lambda y.\lambda z.\rr{\TT D}[\ ])
\\
\BB G_{4}: \qquad \rr{\TT F}[\ ]&::= w\mid \TT E[\ ]W\TT F[\ ]\TT F[\ ]
%\bb{ \mathsf{Id}} \mid\bb{ \Lambda Z.\lambda z.\TT E}[ \  ]\bb{Zzz}
\end{align*}
%\end{minipage}
%\begin{minipage}{0.3\textwidth}
%\begin{equation*}
%\begin{split}
%&  (\rr{\TT E}[\ ]\in \BB G_{3})\\
% & (\rr{\TT E}[\ ]\in \BB G_{3}, \rr{\TT F}\in \BB G_{4}) \\
%& (\rr{\TT D}[\ ]\in \BB G_{2}
%%, \TT C_{i}[\ ]\in \BB G_{1}
%)\\
%& (\rr{\TT E}[\ ]\in \BB G_{3})
%\end{split}
%\end{equation*}
\end{minipage}
}
\caption{Contexts $\BB G_{1}$-$\BB G_{4}$.}
\label{fig:contexts}
\end{figure}

%\begin{figure}%, depending on two free variables $x_1,x_{2}$:
%\caption{Sets of contexts $\BB G_{1}-\BB G_{4}$.}
%\label{fig:contexts}
%\end{figure}

\begin{lemma}\label{lemma2}
\begin{enumerate}
\item For all $\rr{\TT C}[\ ]\in \BB G_1$, $\rr{\TT C}[\bb{u_{A}}]\simeq_{\beta\eta}\rr{\TT C}[\bb{v_{A}}]$.
\item for all $\rr{\TT D}[ \ ]\in \BB G_2$, $\rr{\TT D}[\bb{u_{A}}]\simeq_{\beta\eta}\rr{\TT D}[\bb{v_{A}}]\simeq_{\beta\eta} \bb{z_{i} \mathsf{Id}}$.
\item for all $\rr{\TT E}[ \ ]\in \BB G_3$, $\rr{\TT E}[\bb{u_{A}}]\simeq_{\beta\eta}\rr{\TT E}[\bb{v_{A}}]$.
%\bb{\B f}$.
\item for all $\rr{\TT F}[ \ ]\in \BB G_4$, $\rr{\TT F}[\bb{u_{A}}]\simeq_{\beta\eta}\rr{\TT F}[\bb{v_{A}}]\simeq_{\beta\eta} \bb{w_{i}}$.

\end{enumerate}
%
%For all $\rr{\TT C}[ \ ]\in \BB G_1$, $\rr{\TT C}[\bb{u_{A}}]\simeq_{\beta\eta}\rr{\TT C}[\bb{v_{A}}]$.
\end{lemma}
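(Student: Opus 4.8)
The plan is to prove all four items simultaneously by a mutual induction on the construction of the contexts in $\BB G_1$-$\BB G_4$, since they are defined by mutual recursion. The key structural fact that makes everything work is that $\bb{u_A}$ and $\bb{v_A}$ are both closed terms of type $(A^*\widetilde{+}\widetilde\top)\To\Bool$ whose only ``observable'' behaviour, when plugged into these contexts, collapses to a boolean (or to $\bb{z_i\,\mathsf{Id}}$, resp.\ $\bb{w_i}$) in a way that does not depend on which of the two terms we chose. Concretely, the essential computation to perform once and for all is: for any argument $r$ one can feed to $x$ in the grammars, both $\bb{u_A}\,r$ and $\bb{v_A}\,r$ $\beta\eta$-reduce to the \emph{same} boolean --- namely $\bb{v_A}$ unfolds the instantiation-overflow term $\IOor_{\Bool}[\ ](\lambda x.\B t)(\lambda x.\B f)$ applied to whatever sits in the hole, and on the relevant closed inputs this reduces to $\B f$, matching $\bb{u_A}=\lambda x.\B f$. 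I would isolate this as the base observation and then let the induction propagate it.

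First I would set up the mutual induction with the statement: for every context $\TT C[\ ]\in\BB G_1$ (and similarly $\TT D,\TT E,\TT F$), the two instantiations are $\simeq_{\beta\eta}$-equal, with the strengthened conclusions of items 2 and 4 (that $\TT D$ collapses to $\bb{z_i\,\mathsf{Id}}$ and $\TT F$ to $\bb{w_i}$) carried along, because those stronger conclusions are exactly what is needed in the recursive calls. For the base cases: in $\BB G_1$, $\TT C[\ ]=\bb{x_i}$ is a variable, so the hole does not occur and the claim is trivial; in $\BB G_3$, $\TT E[\ ]=\B t$ or $\B f$ is a constant with no hole; in $\BB G_4$, $\TT F[\ ]=w$ has no hole. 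In $\BB G_2$ there is no hole-free base case, but the first production $\bb{z}(\Lambda W.\lambda w.\TT F[\ ])$ reduces, after the surrounding redex is fired (this is where the shape $\bb{z_i\,\mathsf{Id}}$ comes from, with $\mathsf{Id}=\Lambda X.\lambda x.x$), to something controlled by the $\BB G_4$-clause applied to $\TT F$, so the $\simeq_{\beta\eta}$-equality and the normal form $\bb{z_i\,\mathsf{Id}}$ follow from item 4 on $\TT F$.

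For the inductive steps I would go through each remaining production. For $\TT C[\ ]=\TT E[\ ]\,Y\,\bb{\TT C[\ ]\TT C[\ ]}$: by I.H.\ on $\TT E$ the outer $\TT E$ behaves the same on $u_A,v_A$, by I.H.\ on $\TT C$ (inner occurrences) the arguments behave the same, and one additionally uses that whenever $\TT E$ reduces to a form that actually \emph{uses} its argument, that form is $x(\Lambda Y.\lambda y.\lambda z.\TT D[\ ])$, whose $\TT D$-part collapses to $\bb{z_i\,\mathsf{Id}}$ by item 2, so the dependence on $u_A$ vs.\ $v_A$ is washed out before the self-application matters. The productions $\bb{\TT E}[\ ]\bb{W}\TT D[\ ]\TT D[\ ]$ in $\BB G_2$ and $\TT E[\ ]W\TT F[\ ]\TT F[\ ]$ in $\BB G_4$ are handled identically, using I.H.\ on $\TT E$ together with I.H.\ on the two $\TT D$'s (resp.\ $\TT F$'s) and the collapse to $\bb{z_i\,\mathsf{Id}}$ (resp.\ $\bb{w_i}$). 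For $\TT E[\ ]=x(\Lambda Y.\lambda y.\lambda z.\TT D[\ ])$ in $\BB G_3$: here the hole feeds into $x$, and this is the one place where one must actually compute $\bb{u_A}(\Lambda Y.\lambda y.\lambda z.\TT D[u_A\text{ or }v_A])$ versus $\bb{v_A}(\dots)$; the former is $\B f$ outright, and for the latter one invokes the defining equations of $\IOor_{\Bool}$ (the $C=Y$, $C=C_1\To C_2$ clauses from Section \ref{sec:pre}) to see that $\bb{v_A}$ applied to any closed argument of type $A^*\widetilde{+}\widetilde\top$ also normalises to $\B f$, using item 2 to know $\TT D$ has already become $\bb{z_i\,\mathsf{Id}}$ so nothing further depends on the choice.

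The main obstacle I anticipate is the bookkeeping in the mixed recursion between $\BB G_1$ and $\BB G_3$ (a $\TT C$ may contain a $\TT E$ which contains a $\TT D$ which contains a $\TT F$ which contains a $\TT E$ again), so the induction must be on a well-founded measure --- e.g.\ the total size of the context term --- rather than on any single grammar, and one has to be careful that each recursive appeal is to a strictly smaller context. The delicate point inside that is confirming that in every branch where a reduct of $\TT E$ or $\TT C$ genuinely inspects the plugged-in term, the inspection has already been ``neutralised'': either the term is passed to $x$ and $\bb{u_A},\bb{v_A}$ agree on it after unfolding $\IOor_{\Bool}$, or it is passed into a $\TT D$/$\TT F$ which by the strengthened I.H.\ has collapsed to a closed normal form $\bb{z_i\,\mathsf{Id}}$/$\bb{w_i}$ independent of the choice. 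Once that invariant is stated precisely, each case is a short $\beta\eta$-calculation.
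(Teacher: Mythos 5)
Your proposal is correct and follows essentially the same route as the paper's proof: a mutual induction over the four grammars, carrying along the strengthened conclusions for $\BB G_{2}$ and $\BB G_{4}$, with the only genuine computation being that $\bb{u_{A}}$ and $\bb{v_{A}}$ both send the arguments arising from $\BB G_{3}$ (which by item 2 collapse to $\iota_{2}(\mathsf{Id})$) to $\B f$. One caution: your phrase that $\bb{v_{A}}$ applied to \emph{any} closed argument of type $A^{*}\widetilde{+}\widetilde{\top}$ normalises to $\B f$ is false as stated (when $A$ is inhabited, $\bb{v_{A}}(\iota_{1}(t^{*}))\simeq_{\beta\eta}\B t$, which is exactly what drives Proposition \ref{prop:sbo}); the claim you actually need, and which your appeal to item 2 correctly delivers, is only that the arguments fed to the hole by the grammar are $\beta\eta$-equal to $\iota_{2}(\mathsf{Id})$, on which $\IOor_{\Bool}$ selects the $\lambda x.\B f$ branch.
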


The key ingredient is a lemma stating that the families of contexts $\BB G_{1}$-$\BB G_{4}$ are exhaustive precisely when $A$ is not inhabited in $\Ndac$:

\begin{lemma}\label{lemma1}
Let $\rr{\TT K}[\ ]: (A^{*}\widetilde +\widetilde \top)\To \Bool \vdash ^{x\mapsto Z,x'\mapsto Z} Z$ be a $\beta$-normal term context. Then either $A$ is inhabited in $\Ndac$ or $\rr{\TT K}[\ ]\in\BB G_1$.
\end{lemma}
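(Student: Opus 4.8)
The plan is to analyze the shape of an arbitrary $\beta$-normal term context $\rr{\TT K}[\ ]: (A^{*}\widetilde +\widetilde \top)\To \Bool \vdash^{x\mapsto Z, x'\mapsto Z} Z$ by induction on its structure, simultaneously establishing the stronger statement which classifies all four families $\BB G_{1}$--$\BB G_{4}$ at once: every $\beta$-normal context of the appropriate type and with the appropriate free variables either witnesses the inhabitation of $A$ in $\Ndac$ (more precisely, embeds a $\Ndac$-derivation of $A$ in a subterm, via the correspondence $\clubsuit \leftrightarrow Y$, $\star \leftrightarrow y$) or belongs to the corresponding $\BB G_i$. The point of doing all four families together is that they are defined by mutual recursion (a $\BB G_1$-context may contain a $\BB G_3$-context, which may contain a $\BB G_2$-context, which may contain a $\BB G_4$-context, and so on), so the induction hypothesis must be available for all of them uniformly.

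The core of the argument is a case analysis on the head of the normal form. Since the target type is the type variable $Z$ (or $W$, or $\Bool$, depending on the family), a $\beta$-normal term of this type must be a variable applied to arguments, or one of the constants $\B t, \B f$ at type $\Bool$. First I would enumerate which variables are in scope at each level — initially $x, x'$ of type $Z$, plus the hole of type $(A^{*}\widetilde + \widetilde\top)\To \Bool$ — and observe that applying the hole forces us to supply an argument of type $A^{*}\widetilde + \widetilde\top = \forall X.(A^{*}\To X)\To(\widetilde\top\To X)\To X$; instantiating this at a type and feeding it the two continuations is exactly the shape $x(\Lambda Y.\lambda y.\lambda z.\TT D[\ ])$ appearing in $\BB G_3$, where now $y: A^{*} = Y\To A[Y/\clubsuit]$ and $z:\widetilde\top$ enter scope. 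The crucial observation is then: inside $\TT D[\ ]$, to produce anything of the target type we may use $z$ applied to an argument of type $\forall W.W\To W$, i.e.\ something of the form $\Lambda W.\lambda w.\TT F[\ ]$ — and here $\TT F[\ ]$ lives at type $W$ with $w:W$ in scope and *no other way to reach type $W$*, which is precisely why $\BB G_4$ is closed under nothing but $w$ and applications of $\BB G_3$-contexts. The only remaining possibility for reaching the target in $\TT D[\ ]$ that does *not* stay inside $\BB G_2$ is to use the variable $y: Y\To A[Y/\clubsuit]$: applying $y$ to something of type $Y$ produces a term of type $A[Y/\clubsuit]$, and from a normal term of type $A[Y/\clubsuit]$ using only $y$ (with $Y$ atomic) and $\lambda$-abstractions one extracts, by reading $Y$ as $\clubsuit$ and $y \cdot (\text{arg})$ as $\star$, a closed $\Ndac$-term of type $A$. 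This is the branch that produces inhabitation.

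The main obstacle, and where the real care is needed, is the bookkeeping of which type variables and term variables are in scope in each of the mutually recursive families, and checking that the witness-atomicity restriction of $\Nda$ genuinely prevents any "escape": one must verify that in a $\Nda$-context there is no way to instantiate $A^{*}\widetilde + \widetilde\top$ or $\widetilde\top$ at a non-atomic type so as to get a usable continuation of a different shape, and that the instantiation-overflow contexts $\IOor_{\Bool}[\ ]$ hidden in $v_A$ do not create new normal forms that fall outside $\BB G_1$. I would handle this by first normalizing $\rr{\TT K}[\ ]$ and then arguing that because the only type appearing negatively that could be instantiated is $\widetilde\top$ (whose body $X\To X$ forces the continuation back to the bound variable) and $A^{*}\widetilde + \widetilde\top$ (whose two continuations have fixed types $A^{*}\To X$ and $\widetilde\top\To X$), the recursion is forced into exactly the four patterns of Fig.~\ref{fig:contexts}; the sole place where the recursion can "break out" is the use of $y$, and breaking out there is exactly the assertion that $A$ is inhabited in $\Ndac$. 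Finally I would note that the base cases ($\TT K[\ ] = x_i$, or $= \B t$, or $=\B f$, or $=w$) land in the respective $\BB G_i$ trivially, closing the induction, and that for the statement as quoted — where the target is $Z$ and the free variables are $x,x'$ — the conclusion is either inhabitation or membership in $\BB G_1$, as required.
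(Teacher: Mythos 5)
Your overall architecture is the right one and matches the paper's: a simultaneous induction over the four mutually recursive families, driven by a case analysis on the head variable of the $\beta$-normal form, with the ``escape'' to inhabitation of $A$ localized at the use of the variable $y$ introduced by eliminating the sum $A^{*}\widetilde{+}\widetilde\top$. (The paper organizes this contrapositively --- it assumes no context of type $A^{*}$ exists and derives membership in $\BB G_{1}$--$\BB G_{4}$, with the $y$-case ``excluded by hypothesis'' --- whereas you carry the disjunction through the induction; that difference is harmless.) However, there is a genuine error in the one step that actually produces the inhabitant. When the hole is applied to a normal argument of type $A^{*}\widetilde{+}\widetilde\top=\forall X.(A^{*}\To X)\To(\widetilde\top\To X)\To X$, that argument has the shape $\Lambda Y.\lambda y.\lambda z.\TT D[\ ]$ with $y: A^{*}\To Y$ and $z:\widetilde\top\To Y$, not $y:A^{*}$ and $z:\widetilde\top$ as you write. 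Consequently the escape is \emph{not} ``apply $y$ to something of type $Y$ and obtain a term of type $A[Y/\clubsuit]$'' --- with the correct typing, $y$ consumes an argument of type $A^{*}$ and returns the atom $Y$. The inhabitation of $A$ comes from the opposite direction: to use $y$ at all one must \emph{construct} a context $\TT H[\ ]$ of type $A^{*}=Y\To A[Y/\clubsuit]$ over the ambient variables, and it is the existence of such an $\TT H[\ ]$ that yields a closed $\Ndac$-inhabitant of $A$. Your described extraction (``from a normal term of type $A[Y/\clubsuit]$ using only $y$ \dots one extracts \dots a closed $\Ndac$-term of type $A$'') would, if it worked, produce an inhabitant of $A$ from nothing, which cannot be right.

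Two further points need attention. First, the induction cannot be run over the fixed context $x\mapsto Z, x'\mapsto Z$: each nested elimination of the hole and of $\widetilde\top$ introduces fresh variables of types $Z_{i}$, $W_{i}$, $A^{*}\To Y_{i}$ and $\widetilde\top\To Y_{i}$, so the four claims must be stated and proved for arbitrary contexts $\Gamma,\Theta,\Delta,\Sigma$ of these shapes (you flag this as ``bookkeeping'' but it is part of the statement being inducted on, not an afterthought). Second, passing from $\TT H[\ ]$ to a \emph{closed} inhabitant of $A$ in $\Ndac$ requires an explicit closure step --- substituting $\clubsuit$ for all auxiliary type variables and $Y$, plugging $\lambda x.\B t$ into the hole, applying to $\star$, and replacing the remaining term variables by $\star$ or $\lambda x.\star$ according to their types --- which your sketch does not supply. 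Finally, the remark about the contexts $\IOor_{\Bool}[\ ]$ hidden in $v_{A}$ is out of place here: Lemma~\ref{lemma1} classifies the contexts $\TT K[\ ]$ alone and does not mention $u_{A}$ or $v_{A}$; those enter only in Lemma~\ref{lemma2} and Proposition~\ref{prop:sbo}.
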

\begin{proof}
%See App.~\ref{appE}.
%\end{proof}
%
%
%\begin{proof}[Proof of Lemma \ref{lemma1}]
We will prove the following claim: either there exists contexts $\Gamma,\Theta,\Delta, \Sigma$, where
\begin{align*}\label{eq:delta0}
\Gamma& = \{ x_{1} \mapsto Z_{1},x'_{1}\mapsto Z_{1},\dots, x_{p}\mapsto Z_{p},x'_{p}\mapsto Z_{p}\}&   
\Theta & = \{ w_{1}\mapsto W_{1},\dots, w_{q}\mapsto W_{q}\}\\
 \Delta& =\{ y_{1}\mapsto A^{*}\To Y_{1},\dots, y_{r}\mapsto A^{*}\To Y_{r}\} & 
 \Sigma & =\{ z_{1}\mapsto\widetilde \top \To Y_{1},\dots, z_{r}\mapsto  \widetilde \top\To Y_{r}\}
\end{align*}
for some $p,q,r\in \BB N$ and variables $Z_{1},\dots, Z_{p},W_{1},\dots, W_{q},Y_{1},\dots, Y_{r}$  pairwise distinct and disjoint from $A$, 
and a context $\rr{\TT H}[\ ]:  (A^{*}\widetilde +\widetilde \top)\To\Bool \vdash^{\Gamma,\Theta,\Delta, \Sigma} A^{*}$, or $\rr{\TT K}[\ ]\in \BB G_1$.
If the main claim is true we can deduce the statement of the lemma as follows: suppose $\rr{\TT K}[\ ]\notin \BB G_1$.
Then let $\theta$ be the substitution sending all variables in $\Gamma,\Theta,\Delta,\Sigma$ plus $Y$ onto $\clubsuit$ and being the identity on all other variables.
Then $\rr{\TT H\theta}[\ ]: ( (\clubsuit \To A)\widetilde + \widetilde \top)\To\Bool \vdash^{\Gamma\theta, \Theta\theta, \Delta\theta, \Sigma\theta}:  \clubsuit \To A$. Then we have $\Gamma\theta,\Theta\theta,\Delta\theta,\Sigma\theta\vdash \bb t:A$, where 
$
\bb t= \bb{\TT H\theta}[ \lambda x. \B t ]\bb{\star}
$ and we can conclude that $\vdash \bb{t'}:A$ holds where $\bb{t'}$ is obtained from $\bb t$ by 
substituting the variables in $\Gamma$ and $\Theta$ by $\star$ and those in $\Delta$ and $\Sigma$ by $\lambda x.\star$.

Let us prove the main claim.
Suppose by contradiction that for no $\Gamma,\Theta,\Delta, \Sigma$ there exists a context $\rr{\TT H}[\ ]: (A^{*}\widetilde +\widetilde \top)\To\Bool \vdash^{\Gamma,\Theta,\Delta,\Sigma} A^{*}$. We will show by simultaneous induction the following claims:
\begin{enumerate}
\item for all $\Gamma,\Theta,\Delta, \Sigma$ as above, if $\rr{\TT K}[\ ]: (A^{*}\widetilde +\widetilde \top)\To\Bool \vdash^{\Gamma,\Theta,\Delta, \Sigma} Z_{i}$, then $\rr{\TT K}[\ ]\in \BB G_1$;
\item for all $\Gamma,\Theta,\Delta, \Sigma$ as above, if $\rr{\TT K}[\ ]: (A^{*}\widetilde +\widetilde \top)\To\Bool \vdash^{\Gamma,\Theta,\Delta, \Sigma} Y_{i}$, then $\rr{\TT K}[\ ]\in \BB G_2$;

\item for all $\Gamma,\Theta,\Delta, \Sigma$ as above, if $\rr{\TT K}[\ ]: (A^{*}\widetilde +\widetilde \top)\To\Bool \vdash^{\Gamma,\Theta,\Delta, \Sigma} \Bool$ and $\rr{\TT K}[\ ]$ is an elimination context, then $\rr{\TT K}[\ ]\in \BB G_3$;

\item for all $\Gamma,\Theta,\Delta, \Sigma$ as above, if $\rr{\TT K}[\ ]: (A^{*}\widetilde +\widetilde \top)\To\Bool \vdash^{\Gamma,\Theta,\Delta, \Sigma} W_{i}$, then $\rr{\TT K}[\ ]\in \BB G_4$.

\end{enumerate}
The main claim then follows from 1. by taking $\Gamma=\{x\mapsto Z,x'\mapsto Z\} $ and $\Theta=\Delta=\Sigma=\emptyset$.

We argue for each case separately:% (and we suppose w.l.o.g. that the contexts are $\beta$-normal):
\begin{enumerate}
\item
There exist two possibilities for $\rr{\TT K}[\ ]$:
\begin{enumerate}
\item $\rr{\TT K}[\ ]=\bb{x_{i}}$, hence $\rr{\TT K}[\ ]\in \BB G_1$;

\item $\rr{\TT K}[\ ]=  \rr{\TT K'}[\ ] \bb{Z \TT K_{1}[\ ]\TT K_{2}[\ ]}$, where $\rr{\TT K'}[\ ]: (A^{*}\widetilde +\widetilde \top)\To\Bool \vdash^{\Gamma,\Theta,\Delta,\Sigma} \Bool$ and 
 $\rr{\TT K_{i}}[\ ]: (A^{*}\widetilde +\widetilde \top)\To\Bool \vdash^{\Gamma,\Theta,\Delta,\Sigma} Z$, and where $\TT K'[\ ]$ is an elimination context. By the induction hypothesis then $\rr{\TT K'}[\ ]\in \BB G_3, \TT K_{i}[\ ]\in \BB G_{1}$, hence $\rr{\TT K}[\ ]\in \BB G_1$.

\end{enumerate}

\item There exist three possibilities for $\rr{\TT D}[\ ]$: %Let $W,y,z$ be fresh variables and $\Delta'=\Delta\cup \{y: A^{*}\To W \}$, $\Sigma'=\Sigma\cup\{z: \widetilde \top\To W\}$.
\begin{enumerate}
\item $\rr{\TT K}[\ ]=\bb{y_{i}\rr{\TT K'}}[\ ]$,
where $\rr{\TT K'}[\ ]: (A^{*}\widetilde +\widetilde \top)\To\Bool \vdash^{\Gamma,\Theta,\Delta,\Sigma} A^{*}$, but this case is excluded by the hypothesis;

\item $\rr{\TT K}[\ ]=z_{i} (\Lambda W.\lambda w.\rr{\TT K'}[\ ])$, where $\rr{ \TT K'}[\ ]:  (A^{*}\widetilde +\widetilde \top)\To\Bool \vdash^{\Gamma,\Theta\cup \{w\mapsto W\}, \Delta,\Sigma} W$ and where $W$ does not occur in $\Gamma,\Theta,\Delta,\Sigma$. By the induction hypothesis then 
$\rr{\TT K'}[\ ]\in \BB G_4$, hence $\rr{\TT K}[\ ]\in \BB G_2$;

\item $\rr{\TT K}[\ ]=\rr{\TT K'} [\  ]   Y_{i} \rr{\TT K_{1}}[\ ] \rr{\TT K_{2}}[\ ]    $, where 
$\rr{\TT K'}[\ ]   : (A^{*}\widetilde +\widetilde \top)\To\Bool \vdash^{\Gamma,\Theta, \Delta,\Sigma} \Bool$, 
$\rr{\TT K_{i}}[\ ] :(A^{*}\widetilde +\widetilde \top)\To\Bool \vdash^{\Gamma,\Theta, \Delta,\Sigma}  Y_{i}$, and $\rr{\TT K'}[\ ]$ is an elimination context.
 By the induction hypothesis this implies $\rr{\TT K'}[\ ]\in \BB G_3$ and $\rr{\TT K_{i}}\in \BB G_2$, so we can conclude $\rr{\TT K}[\ ]\in \BB G_2$.

%\item[2.4] $\rr{\TT D}[\ ]=\bb{\Lambda W.\lambda yz. y} \Big (\rr{\TT G}\big[ \rr{\TT E}[\ ] /\BB G_2][ \ ] \Big)  $, for some contexts 
%\begin{equation*}
%\begin{split}
%\rr{\TT E}[\ ]  & : (A^{*}\widetilde +\widetilde \top)\To \Bool \vdash^{\Gamma, \Delta',\Sigma'} \Bool\\
%\rr{\TT G}[\ ] &:(A^{*}\widetilde +\widetilde \top)\To \Bool \vdash^{\Gamma, \Delta',\Sigma', \BB G_2:\Bool}  A^{*}
%\end{split}
%\end{equation*}
%This  case is excluded by the hypothesis, as by substituting a Boolean for $\BB G_2$ in $\rr{\TT G}[\ ]$ one obtains a context which contradicts the hypothesis.
%
%
%
%\item[2.5] $\rr{\TT D}[\ ]=\bb{\Lambda W.\lambda yz. z} \rr{\TT F}[\ ] $, for some contexts 
%\begin{equation*}
%\begin{split}
%\rr{\TT E}[\ ]  & : (A^{*}\widetilde +\widetilde \top)\To \Bool \vdash^{\Gamma, \Delta',\Sigma'} \Bool\\
%\rr{\TT F}[\ ] &:(A^{*}\widetilde +\widetilde \top)\To \Bool \vdash^{\Gamma, \Delta',\Sigma', \BB G_2:\Bool}  \widetilde \top
%\end{split}
%\end{equation*}
%By the induction hypothesis then $\rr{\TT F}[\ ]\in \BB G_4$. Hence either $\rr{\TT F}[\ ]= \mathsf{Id}$, which implies $\rr{\TT D}[\ ]\in \BB G_2$, or $\rr{\TT F}[\ ]=\bb{ \Lambda Z.\lambda z.}\rr{\TT E}[\ ]\bb{Zzz}$. In this case $\rr{\TT E}[\ ]$ must be an elimination context, hence by the induction hypothesis, $\rr{\TT E}[\ ]\in \BB G_3$, whence $\rr{\TT D}[\ ]\in\BB G_2$.

\end{enumerate}
\item If $\TT K[\ ]$ is an elimination context, then it must be $\rr{\TT K}[\ ]= x\rr{\TT K'}[\ ]$, where $\rr{\TT K'}[\ ]:
(A^{*}\widetilde +\widetilde \top)\To\Bool \vdash^{\Gamma\cup\{x_{1}\mapsto Z',x_{2}\mapsto Z''\},\Theta, \Delta,\Sigma} A^{*}\widetilde +\widetilde \top$.
%
%$\rr{\TT C_{i}}[\ ]:
%(A^{*}\widetilde +\widetilde \top)\To\Bool \vdash^{\Gamma\cup\{x_{1}\mapsto Z',x_{2}\mapsto Z''\},\Theta, \Delta,\Sigma} Z'$, 
% and $Z'$ does not occur in either of $\Gamma,\Theta,\Sigma,\Delta$;   
Moreover, $\TT K'$ must be of the form $\Lambda Y.\lambda y.\lambda z.\TT K''[\ ]$, where 
 $\rr{\TT K''}[\ ]:
(A^{*}\widetilde +\widetilde \top)\To\Bool \vdash^{\Gamma\cup\{x_{1}\mapsto Z',x_{2}\mapsto Z''\},\Theta, \Delta\cup \{y\mapsto A^{*}\To Y\},\Sigma\cup \{z\mapsto \widetilde \top\To Y\}} Y$, and where $Y$ is distinct from all variables in $\Gamma\cup\{x_{1}\mapsto Z',x_{2}\mapsto Z''\},\Theta,\Delta, \Sigma$; then by the induction hypothesis we deduce 
$\rr{\TT K''}[\ ]\in \BB G_2$, and thus $\rr{\TT K}[\ ]\in \BB G_3$.
%
%\item If $\TT E[\ ]\neq \B t, \B f$, then it must be $\rr{\TT E}[\ ]= \Lambda Z'.\lambda x'_{1}.\lambda x'_{2}.
%x\rr{\TT D}[\ ]\TT C_{1}[\ ]\TT C_{2}[\ ]$, where $\rr{\TT D}[\ ]:
%(A^{*}\widetilde +\widetilde \top)\To\Bool \vdash^{\Gamma\cup\{x_{1}\mapsto Z',x_{2}\mapsto Z''\},\Theta, \Delta,\Sigma} A^{*}\widetilde +\widetilde \top$,
%$\rr{\TT C_{i}}[\ ]:
%(A^{*}\widetilde +\widetilde \top)\To\Bool \vdash^{\Gamma\cup\{x_{1}\mapsto Z',x_{2}\mapsto Z''\},\Theta, \Delta,\Sigma} Z'$, 
% and $Z'$ does not occur in either of $\Gamma,\Theta,\Sigma,\Delta$;   
%moreover, $\TT D$ must be of the form $\Lambda Y.\lambda y.\lambda z.\TT D'[\ ]$, where 
% $\rr{\TT D}'[\ ]:
%(A^{*}\widetilde +\widetilde \top)\To\Bool \vdash^{\Gamma\cup\{x_{1}\mapsto Z',x_{2}\mapsto Z''\},\Theta, \Delta\cup \{y\mapsto A^{*}\To Y\},\Sigma\cup \{z\mapsto \widetilde \top\To Y\}} A^{*}\widetilde +\widetilde \top$, and where $Y$ is distinct from all variables in $\Gamma\cup\{x_{1}\mapsto Z',x_{2}\mapsto Z''\},\Theta,\Delta, \Sigma$; then by the induction hypothesis we deduce 
%$\rr{\TT D}'[\ ]\in \BB G_2$ and $\TT C_{i}\in \BB G_{1}$, from which we deduce $\rr{\TT E}[\ ]\in \BB G_3$.

\item There are two possible cases:
\begin{enumerate}
\item $\rr{\TT K}[\ ]=w_{i}$, hence $\rr{\TT K}[\ ]\in \BB G_4$;

\item $\rr{\TT K}[\ ]= \rr{\TT K'}[\ ]\bb{W_{i}\TT K_{1}\TT K_{2}}$, where $\rr{\TT K'}[\ ]: (A^{*}\widetilde +\widetilde \top)\To \Bool \vdash^{\Gamma,\Theta, \Delta,\Sigma} \Bool$, $\rr{\TT K_{i}}[\ ]: (A^{*}\widetilde +\widetilde \top)\To \Bool \vdash^{\Gamma,\Theta, \Delta,\Sigma} W_{i}$ and $\TT K'[\ ]$ is an elimination context. By the induction hypothesis this implies $\rr{\TT K'}[\ ]\in \BB G_3$ and $\TT K_{i}\in \BB G_{4}$, whence $\rr{\TT K}[\ ]\in \BB G_4$.
\end{enumerate}
\end{enumerate}
\end{proof}

\begin{proposition}\label{prop:sbo}
$\bb{u_{A}}\not\simeq_{\Bool}^{\Nda}\bb{v_{A}}$ iff $A$ is inhabited in $\Ndac$.
\end{proposition}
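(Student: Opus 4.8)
The plan is to prove the two implications separately; all the substantive work is already carried out in Lemmas~\ref{lemma1} and~\ref{lemma2}, and what remains is to bridge between the $\Bool$-valued, one-hole contexts appearing in the definition of $\simeq^{\Nda}_{\Bool}$ and the two-hole, $Z$-valued contexts handled by Lemma~\ref{lemma1}.

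\emph{If $A$ is inhabited in $\Ndac$, then $u_{A}\not\simeq^{\Nda}_{\Bool}v_{A}$.} From a closed derivation $\vdash t:A$ in $\Ndac$, replace the type constant $\clubsuit$ everywhere by the fresh variable $Y$ and every occurrence of the term constant $\star$ by a fresh term variable $z$: since every type instantiation in the derivation is at an atomic type (a type variable or $\clubsuit$), this yields a legal $\Nda$-derivation $z\mapsto Y\vdash t':A[Y/\clubsuit]$, so that $s:=\lambda z.t'$ satisfies $\vdash_{\Nda}s:Y\To A[Y/\clubsuit]=A^{*}$. Now take $\TT C[\ ]:=[\ ]\,\iota_{1}(s)$, which is a $\Nda$-context with $\TT C[\ ]:(A^{*}\widetilde + \widetilde\top)\To\Bool\vdash\Bool$. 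Then $\TT C[u_{A}]=(\lambda x.\B f)\,\iota_{1}(s)\simeq_{\beta\eta}\B f$, whereas $\TT C[v_{A}]\simeq_{\beta\eta}\IOor_{\Bool}[\iota_{1}(s)](\lambda x.\B t)(\lambda x.\B f)$, which $\beta\eta$-reduces to $(\lambda x.\B t)\,s\simeq_{\beta\eta}\B t$ by unfolding the definition of $\IOor_{\Bool}$ (an instantiation-overflow context applied to a canonical injection reduces exactly as the corresponding $\mathrm{Case}$; cf.~Lemma~\ref{lemma:perreiras}). Since $\B t\not\simeq_{\beta\eta}\B f$, the context $\TT C$ separates $u_{A}$ and $v_{A}$.

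\emph{If $A$ is not inhabited in $\Ndac$, then $u_{A}\simeq^{\Nda}_{\Bool}v_{A}$.} Let $\TT C[\ ]:(A^{*}\widetilde + \widetilde\top)\To\Bool\vdash\Bool$ be an arbitrary $\Nda$-context and assume, for contradiction, $\TT C[u_{A}]\not\simeq_{\beta\eta}\TT C[v_{A}]$. Both sides are closed terms of type $\Bool$, hence $\beta\eta$-equal to $\B t$ or to $\B f$; as they differ, up to swapping $u_{A}$ and $v_{A}$ we may assume $\TT C[u_{A}]\simeq_{\beta\eta}\B t$ and $\TT C[v_{A}]\simeq_{\beta\eta}\B f$. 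Put $\TT K[\ ]:=\TT C[\ ]\,Z\,x\,x'$ with $Z,x,x'$ fresh; then $\TT K[\ ]:(A^{*}\widetilde + \widetilde\top)\To\Bool\vdash^{x\mapsto Z,\,x'\mapsto Z}Z$, and $\TT K[u_{A}]\simeq_{\beta\eta}\B t\,Z\,x\,x'\simeq_{\beta\eta}x$ while $\TT K[v_{A}]\simeq_{\beta\eta}x'$, so $\TT K[u_{A}]\not\simeq_{\beta\eta}\TT K[v_{A}]$. Let $\TT K'[\ ]$ be a $\beta$-normal form of $\TT K[\ ]$ (treating the hole as a free variable; normalisation may duplicate or erase it, which the grammars $\BB G_{1}$--$\BB G_{4}$ permit): by subject reduction $\TT K'[\ ]$ is a $\beta$-normal term context of the same type, and still $\TT K'[u_{A}]\not\simeq_{\beta\eta}\TT K'[v_{A}]$. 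By Lemma~\ref{lemma1} applied to $\TT K'[\ ]$, either $A$ is inhabited in $\Ndac$, contradicting our hypothesis, or $\TT K'[\ ]\in\BB G_{1}$, in which case Lemma~\ref{lemma2}(1) gives $\TT K'[u_{A}]\simeq_{\beta\eta}\TT K'[v_{A}]$ --- again a contradiction. Hence $\TT C[u_{A}]\simeq_{\beta\eta}\TT C[v_{A}]$ for every $\Nda$-context $\TT C$, i.e.~$u_{A}\simeq^{\Nda}_{\Bool}v_{A}$.

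\emph{Expected main obstacle.} Given Lemmas~\ref{lemma1} and~\ref{lemma2}, the argument is short; the points requiring care are (i) checking that the $\beta$-normal form of the auxiliary context $\TT C[\ ]\,Z\,x\,x'$ is again a well-typed, possibly multi-hole term context, so that Lemma~\ref{lemma1} applies, and (ii) the explicit computation $\IOor_{\Bool}[\iota_{1}(s)](\lambda x.\B t)(\lambda x.\B f)\simeq_{\beta\eta}\B t$ used in the first direction, obtained by unfolding $\IOor_{\Bool}$ along $\Bool=\forall X.X\To X\To X$.
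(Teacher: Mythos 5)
Your proof is correct and follows essentially the same route as the paper's: the same separating context $[\ ]\,\iota_{1}(\cdot)$ built from a closed inhabitant of $A$ (with $\clubsuit$ renamed to $Y$ and $\star$ abstracted) for the forward direction, and the same reduction of an arbitrary $\Bool$-valued context to a $Z$-valued one (the paper via $\eta$-expansion $\Lambda Z.\lambda x_{1}x_{2}.\TT K'[\ ]$, you via application to $Z\,x\,x'$) followed by Lemmas~\ref{lemma1} and~\ref{lemma2} for the converse. Your explicit treatment of $\beta$-normalising the context, with the hole possibly duplicated or erased, is a point the paper glosses over with ``we can suppose $\TT K[\ ]$ to be $\beta$-normal,'' so it is a welcome clarification rather than a divergence.
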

\begin{proof}
Suppose $\vdash t:A$ holds in $\Ndac$, then by letting $\bb{t^{*}}= \bb{\lambda y. t}[\bb y/\bb{\star}]$ we deduce
$\vdash \bb{t^{*}}: A^{*}$ and we can show $\bb{u_{A}}\not\simeq_{\Bool}^{\Nda}\bb{v_{A}}$, by letting
$
\rr{\TT K}[\ ] =  x ( \bb{\iota_{1}(t^{*})})
$, 
since $\rr{\TT K}[ \bb{u_{A}}]\simeq_{\beta}\bb{\B f} $ and
$\rr{\TT K}[\bb{v_{A}}]\simeq_{\beta}
\bb{\IOor_{\Bool}[\iota_{1}(t^{*})]( \lambda x.\B t)( \lambda x.\B f)} \simeq_{\beta\eta}
\bb{(\lambda x.\B t) t^{*}} \simeq_{\beta} \bb{\B t}$.

Conversely, suppose $A$ is not inhabited in $\Ndac$.
Any context $\rr{\TT K}[\ ]: (A^{*}\widetilde +\widetilde \top)\To \Bool\vdash ^{\emptyset} \Bool$ can be written, up to $\eta$-equivalence, as $\rr{\TT K}[\ ]=\bb{ \Lambda Z.\lambda x_1x_{2}.}\rr{\TT K'}[\ ]$, with
 $\rr{\TT K'}[\ ]: (A^{*}\widetilde +\widetilde \top)\To \Bool\vdash ^{\bb{x_{1}}\mapsto Z, \bb{x_{2}}\mapsto Z} \Bool$. As we can suppose $\rr{\TT K}[\ ]$ to be $\beta$-normal, by Lemma \ref{lemma1}, it must be $\rr{\TT K'}[\ ]\in \BB G_{1}$. Hence, by Lemma \ref{lemma2} we deduce that $\rr{\TT K}[\bb{u_{A}}]\simeq_{\beta\eta} \rr{\TT K}[\bb{v_{A}}]$.
%  Moreover, as a closed $\beta$-normal term of type $\Bool$ must be either $\B t$ or $\B f$, from $\rr{\TT C}[\bb{u_{A}}]\simeq_{\beta\eta} \rr{\TT C}[\bb{v_{A}}]$ we can deduce
% $\rr{\TT C}[\bb{u_{A}}]\simeq_{\beta} \rr{\TT C}[\bb{v_{A}}]$.
%We have thus shown that $\bb{u_{A}}\simeq_{\Bool}^{\Nda}\bb{v_{A}}$.
\end{proof}

\begin{theorem}
The congruences $\simeq_{\Bool}^{\Nda}$ and $\simeq_{\Nat}^{\Nda}$ are both undecidable.
%The $\mathrm{(CE)}$ problem for any of the systems $\Nda_{\alpha,\beta}$ is undecidable.
\end{theorem}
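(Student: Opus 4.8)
The plan is to derive the theorem as a direct consequence of the machinery already assembled, namely Proposition \ref{prop:sbo} together with the undecidability of type inhabitation for the extended system $\Ndac$. First I would record the observation, already noted in the text, that the undecidability argument for (TI) in Section \ref{sec:undeci} (Corollary \ref{cor:urzy}) goes through verbatim when $\Nda$ is replaced by $\Ndac$: the encoding of Theorem \ref{th:urzy} only uses the $\To,\forall$-fragment and never needs the fresh constants $\clubsuit,\star$ to be absent, so adding an inert base type and an inert constant cannot make an undecidable inhabitation problem decidable. Hence ``given $A$, is $A$ inhabited in $\Ndac$?'' is undecidable.

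Next I would invoke Proposition \ref{prop:sbo}, which gives a \emph{computable} map $A \mapsto (u_A, v_A)$ such that $u_A \not\simeq_{\Bool}^{\Nda} v_A$ if and only if $A$ is inhabited in $\Ndac$. Since the map is effective (both terms are built by a simple syntactic recursion from $A$, using the predicatively-encoded $\IOor_{\Bool}$), this is precisely a many-one reduction from the complement of the $\Ndac$-inhabitation problem to the problem of deciding $\simeq_{\Bool}^{\Nda}$. As the former is undecidable (equivalently: co-r.e.-hard is not needed, plain undecidability suffices since undecidability is closed under complement), it follows that $\simeq_{\Bool}^{\Nda}$ is undecidable.

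For the $\simeq_{\Nat}^{\Nda}$ case the approach is the same, but the reduction is the one deferred to Appendix \ref{appE}. I would construct, again effectively from $A$, a pair of terms of a type of the form $(A^{*}\widetilde + \widetilde\top)\To \Nat$ that are separated by a $\Nat$-valued context exactly when $A$ is inhabited in $\Ndac$ — for instance by post-composing the $\Bool$-valued witnesses $u_A,v_A$ with a fixed $\Nda$-typable map $\Bool \to \Nat$ such as $\lambda b.\,b\,\Nat\,\B 0\,\B 1$, and checking that this post-composition neither creates nor destroys separability (one direction uses that the new context extends to a $\Bool$-context; the other uses that applying a fixed closed function preserves $\beta\eta$-equality, so Lemma \ref{lemma2} still yields indistinguishability on all contexts from the families $\BB G_1$–$\BB G_4$, while Lemma \ref{lemma1}'s exhaustiveness analysis is unaffected by replacing $\Bool$ with $\Nat$ in the codomain provided one re-runs the normal-form case analysis). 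Either way one obtains a reduction of $\Ndac$-inhabitation to $\simeq_{\Nat}^{\Nda}$, and undecidability follows.

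The main obstacle is not in the present theorem, which is essentially a corollary, but in the lemmas it rests on — in particular the exhaustiveness claim of Lemma \ref{lemma1}, whose proof requires a delicate simultaneous induction over $\beta$-normal elimination contexts showing that the only ways to build a $\Bool$-valued context over $(A^{*}\widetilde+\widetilde\top)\To\Bool$ either land in the family $\BB G_1$ (and hence cannot separate $u_A$ from $v_A$) or produce, after a collapsing substitution $\theta$ onto $\clubsuit$, an actual inhabitant of $A$ in $\Ndac$. Assuming those lemmas, the only point demanding a line of justification here is the stability of the (TI) undecidability argument under the conservative extension to $\Ndac$, which is routine since the encoding of Theorem \ref{th:urzy} is entirely within the common $\To,\forall$-fragment.
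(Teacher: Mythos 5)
Your argument is correct and matches the paper's: the theorem is read off from Proposition \ref{prop:sbo} (and its $\Nat$-analogue, Proposition \ref{prop:sbobis} in App.~\ref{appE}) as an effective reduction of the undecidable inhabitation problem for $\Ndac$ to the two congruences, which is exactly how the paper obtains it. The only slip is in your alternative route for the $\Nat$ case: $\lambda b.\, b\,\Nat\,\B 0\,\B 1$ is not $\Nda$-typable (the witness $\Nat$ is not atomic), so one must either use an instantiation-overflow context for $\Bool$ at $\Nat$ or, as the paper does, keep $u_{A},v_{A}$ unchanged and redo the context analysis for $\Nat$-valued observations.
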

%\begin{proof}
%By Proposition \ref{prop:sbo}, if $\simeq_{\Bool}^{\Nda}$ were decidable, we would be able to decide type inhabitation in $\Ndac$, which is impossible.\end{proof}

%%
%\section{$\Nda$ as a dependently typed system}\label{sec:dep}
%\input{dependent}

\section{Conclusion}
% !TEX root = Undecidability.tex

%In this paper we proved the decidability of type checking and typability in System $\Nda$, and 
%the undecidability of contextual equivalence in $\Nda$, $\ML$ and all systems of stratified polymorphism $\mathrm F_{n}$, for $n>0$. 

\subparagraph*{Related works}

The literature on ML-polymorphism, both at theoretical and applicative level, is vast. Several extensions of ML to account for first-class polymorphism while retaining a decidable type-checking have been investigated, mostly following two directions: first, that of considering type systems with explicit type annotations (as the system $\mathsf{PolyML}$ \cite{Garrigue1997}); second, that of encoding first-class polymorphism in a ML-style system by means of \emph{coercions} (as in System $\mathrm{Fc}$ \cite{Sulzmann2007} or in in $\ML^{\mathrm F}$ \cite{Remy2003}). 
In the last case, coherently with our discussion on FOU and SOU, the price to pay to remain decidable is that self-applications of $\lambda$-abstracted variables must come with explicit type annotations.
This approach is currently followed in the design of the Haskell compiler, which supports first-class polymorphism.

Predicative restrictions of System F and their expressive power have been also largely investigated \cite{Leivant1989, Leivant1991, Danner99}. For example, the numerical functions representable in Leivant's finitely stratified polymorphism are precisely those at the 3rd level of Grzegorczyk's hierarchy \cite{Leivant1991}, and transfinitely stratified systems have been shown to represent all primitive recursive functions \cite{Danner99}. In \cite{Leivant1994} a system with expressive power comparable to System $\Nda$ is shown to characterize the polytime functions.

Research by Ferreira and her collaborators on System $\Nda$ has mostly focused on predicative translations of intuitionistic logic and their reduction properties \cite{Ferreira2013, Ferreira2009, Ferreira2020}. As mentioned before, these translations rely on the observation that for certain types the unrestricted $\forall$E-rule is admissible in $\Nda$. The characterization of the class of types satisfying this property is an open problem (a partial characterization is described in \cite{PistoneInsta}).

Another way to obtain interesting subsystems of System F is by restricting the class of types which can be universally quantified (instead of the admissible witnesses). For instance, the system in \cite{Altenkirch2001} forbids quantifier nestings, while the system in \cite{Leivant2001} only allows quantification $\forall X.A$ when $X$ occurs at depth at most 2 in $A$ (i.e.~when $X$ occurs at most twice to the left of an implication). Interestingly, both systems have the expressive power of G\"odel's System T (which is not a first-order system).

Another kind of restrictions on the shape of types have been investigated by the authors in \cite{Pistone2021, Pistone2020b}, motivated by ideas from the categorical semantics of polymorphism \cite{Bainbridge1990}. The two resulting fragments $\mathrm F^{\kappa\leq 0},\mathrm F^{\kappa\leq 1}$ are equivalent, respectively, to the simply typed $\lambda$-calculus with finite sums and products, and to its extension with least and greatest fixpoints (in particular, (CE) is decidable in $\mathrm F^{\kappa\leq 0}$).

Finally, polymorphism in \emph{linear} type systems has been investigated too. Interestingly, (TI) \cite{lafontMLL, lafontMALL} and (CE) \cite{LICS} remain undecidable even in this case.

 %
%
%families of subsystem  kind of restrictions of System F have been investigated. 
%
%
% example, the systems from \cite{} and \cite{} 
%
%
%- Other restrictions of polymorphism: 
%there are many!
%	- Altenchirk --> complete polymorphism,
%	- Leivant Peano's Lambda Calculus, 
%	- Finite Characteristic --> decidability of CE and TI
%	- Linear Polymorphism (still TI and CE are undecidable)

\subparagraph*{Future work}
The main interest we found in investigating $\Nda$ was to shed some new light on the source of undecidability of type-related properties for full System F. Yet, one might well ask whether the decidability of type-checking makes $\Nda$ a reasonable candidate for implementations. 
Admittedly, our decision algorithm, which was only oriented to prove decidability, is not very practical: checking failure is $\mathsf{coNP}$ with respect to the number of type symbols. Yet, it does not seems unlikely that more optimized algorithms can be developed.

By the way, given that the terms typable in $\Nda$ are simply typable, would an implementation of atomic polymorphism be interesting at all? 
In contrast with $\ML$, type-checking atomically polymorphic programs is decidable at any rank. 
One could thus investigate extensions of $\ML$ with first class atomic polymorphism (realistically, in presence of other type constructors like e.g.~some restricted version of dependent types, see \cite{Xi1999}).

A more interesting direction, suggested by our decision algorithm, would be to investigate systems with full, impredicative, polymorphism, but obeying some condition ensuring \emph{acyclicity}, so that TC (based on SOU) remains decidable. One would thus retain some advantages of first-class polymorphism (e.g.~the modularity/genericity of programs) while admitting self-applications only in ``$\ML$-style'' (or with explicit type annotations, as in $\mathrm{ML}^{\mathrm F}$ \cite{Remy2003}).
For instance, a way to ensure acycliclity might be to require that a polymorphic $\lambda$-abstracted variable be used in an \emph{affine} way, i.e.~at most once.

\bibliography{UndecidabilityEXT.bib}

\longv{
\appendix

\section{$\Nda$-unification}\label{appB}
% !TEX root = Undecidability.tex

In this section we describe a decidable unification problem, that we call $\Nda$-unification, and we show that this problem captures type-checking for $\Nda$.

\subparagraph*{A decidable second-order unification problem}

We consider a second-order language composed out of three different sorts of variables: \emph{sequence variables} $a,b,c,\dots$,
\emph{projection variables} $\alpha^{n},\beta^{n},\gamma^{n},\dots$ and \emph{second-order variables} $\SF F^{n},\SF G^{n},\dots$  (where in the last two cases $n$ indicates the arity of the variable).
The language includes expressions of three sorts, noted $\langle *\rangle$, $*$ and $T(*)$; the expressions of each type are defined by the grammars below:
\begin{align*}
\F a, \F b, \F c & ::= \langle X_{1}\dots X_{n}\rangle\mid a  \mid
%\mid \pi^{l}(a)\mid 
% \pi^{l}(
 \alpha^{n} a_{1}\dots a_{n}
   &  \qquad (\text{sort }\langle *\rangle) \\
\phi,\psi & ::=  X\mid \pi^{l}(\F a) \mid
% \alpha_{n,i}a_{1}\dots a_{n}
% \mid
  \SF F^{n}\F a_{1}\dots \F a_{n} \mid \Phi \To \Psi
  %\mid (\forall a.\Phi[a])\To \psi  
  &  \qquad (\text{sort } *) \\
\Phi,\Psi & ::= \forall a.\phi%\mid \forall a.\Phi \To \Psi
 &  \qquad (\text{sort }T( *))
\end{align*}
%For all $\Phi$ of type $T(*)$ and sequence variable $a$, we define the expression $\Phi[a]$, of type $T(*)$, by induction as follows:
%\begin{align*}
%(\forall b.X)[a] & = \forall b.X \\
%(\forall b.\pi^{l}(c))[a]&= \forall b.\pi^{l}(c) \\
%(\forall b.\pi^{l}(\alpha a_{1}\dots a_{n}))[a]&= \forall b.\pi^{l}(\alpha a_{1}\dots a_{n}a) \\
%(\forall b.\SF F\F a_{1}\dots \F a_{n})[a] & = \forall b. \SF F\F a_{1}\dots \F a_{n}a \\
%(\forall b.\phi\To \Psi)[a]& = 
%\forall b.\phi[a]\to \Psi[a]
%\\
%(\forall b.\Phi\to \Psi)[a]& = 
%\forall b. \Phi[a]\to \Psi[a]
%%
%%((\forall b. \Phi[c])\to \psi)[a]& = (\forall b.\Phi[c])[a]\to \psi[a] 
%\end{align*}

A \emph{$\Nda$-unification problem} is a pair $(U,E)$, where $U$ is a set of equations of the form $t=u$ between expressions of  type $*$, and $E$ is a set of \emph{constraints} of the form $(\alpha: a)$ or $(a:k)$, where $k\in \BB N$. 

Given a $\Nda$-unification problem $(U,E)$, for all projection variable $\alpha^{n}$ occurring in $U$, let 
$\deg(\alpha)$ indicate the maximum value $u$ such that $\pi^{u}(\alpha^{n}a_{1}\dots a_{n})$ occurs in $U$.

A \emph{substitution} for a $\Nda$-unification problem $(U,E)$ is given by the following data:
\begin{itemize}
\item for each sequence variable $a$, a natural number $k_{a}^{S}\in \BB N$;
\item for each projection variable $\alpha^{n}$, a pair $(k_{\alpha}^{S}, S(\alpha))$ made of a natural number $k_{\alpha}^{S}\geq \deg(\alpha)$ and a sequence $S(\alpha)=\langle S(\alpha)_{1},\dots, S(\alpha)_{ k_{\alpha}^{S}}\rangle$, where $S(\alpha)_{i}$ is either of the form 
$\lambda x_{1}.\dots.x_{n}.X$ or of the form $\lambda x_{1}.\dots.x_{n}.\pi^{l}(x_{j})$, where $l$ is such that, whenever
$\pi^{u}(\alpha^{n} a_{1}\dots a_{n})$ occurs in $U$, $l\leq k_{a_{j}}^{S}$;

\item for each second-order variable $\SF F^{n}$, a function $S(\SF F)$ of the form 
$\lambda \rho_{1}.\dots.\rho_{n}.  A(\rho_{1},\dots, \rho_{n})$, where 
%\begin{itemize}
%\item  $n$ is such that, if
%$\SF F\F a_{1}\dots \F a_{m}$ occurs in $U$, then $n\geq m$;
%
%% \sum_{i=1}^{m}k_{\F a_{i}}^{S} $
%
%
%\item
 $A(\rho_{1},\dots, \rho_{n})$ is given by the grammar
$$
A,B::= X\mid \pi^{l}(\rho_{i}) \mid A\To B \mid \forall X.A
$$
%
%
% from type variables to types
%
% which is either of the form 
%$\lambda x_{1}.\dots.x_{n}.X$ or of the form $\lambda x_{1}.\dots.x_{n}.\pi^{l}(x_{j})$
with $i\leq n$ and $l$ such that, if $\SF F^{n}\F a_{1}\dots \F a_{n}$ occurs in $U$, then $l \leq k_{\F a_{i}}^{S}$   (where $k_{\F a}^{S}$ is 
$k$ if $\F a=\langle X_{1},\dots, X_{k}\rangle$, is $k_{a}^{S}$ if $\F a=a$, and is $k_{\alpha}^{S}$ if $\F a=\alpha^{r} a_{1}\dots a_{r}$).
%
%
%$i=( \sum_{j=1}^{r}k_{\F a_{j}}^{S})+q$, then $

%\end{itemize}

%
%
% $l$ are such that:
%	\begin{itemize}
%	\item if
%$\SF F\F a_{1}\dots \F a_{m}$ occurs in $U$, then $n\geq \sum_{i=1}^{m}k_{\F a_{i}}^{S} $ (where $k_{\F a}^{S}$ is 
%$k$ if $\F a=\langle X_{1},\dots, X_{k}\rangle$,  $k_{a}^{S}$ if $\F a=a$, and $k_{\alpha}^{S}$ if $\F a=\alpha a_{1}\dots a_{r}$),
%	\item if 
%$\F a_{i}= a$, then $l\leq k_{a}^{S}$,
%	\item  if $\F a_{i}= \alpha a_{1}\dots a_{r}$, then 
%$l\leq k_{\alpha}^{S}$.
%	\end{itemize}
\end{itemize}

Given a substitution $S$, we define (1) for any expression $\F a$ of sort $\langle *\rangle$, a sequence $S(\F a)$ of type variables, (2) for any expression $\phi$ of sort $*$, a type $S(\phi)$, and (3) for any expression $\Phi$ of sort $T(*)$, a type $S(\Phi)$ as follows:
%we define $S(a)$,$S(\F a)_{i}$,
%$S(\phi)$ and $S(\Phi)$ as follows:
\begin{itemize}
\item if $\F a=a$,  
 $S(a)$ is a sequence of pairwise distinct variables $\langle S(a)_{1},\dots, S(a)_{k_{a}}\rangle$ (chosen in such a way that if $a\neq b$, $S(a)$ and $S(b)$ are disjoint);

\item if $\F a=\langle X_{1},\dots, X_{r}\rangle$, then $S(\F a)=\langle X_{1},\dots, X_{r}\rangle$;
\item if $\F a= \alpha^{n} a_{1}\dots a_{n}$, then 
$S(\F a)= \langle U_{1},\dots, U_{k_{\alpha}^{S}}\rangle$ where for all $i\leq k_{\alpha}^{S}$:
	\begin{itemize}
	\item if $S(\alpha)_{i}= \lambda \vec x.X$, then $U_{i}=X$;
	\item if $S(\alpha)_{i}= \lambda \vec x.\pi^{l}(x_{j})$, then $U_{i}= S(a_{j})_{l}$;

	\end{itemize}
%$S(\F a)=
%
% $i\leq n$, then 
%$S(\F a)_{i}=X$;
%\item if $\F a=\alpha a_{1}\dots a_{n}$, $\alpha_{i}^{S}= \lambda \vec x.\pi^{l}(x_{j})$ and $i\leq n$, then 
%$S(\F a)_{i}= (a_{j})^{S,i}$;

%\item if $\phi= \alpha_{n,i}a_{1}\dots a_{n}$ and $\phi^{S}= \lambda x_{1}\dots x_{n}.X$, then 
%$\phi^{S}=X$;
%\item if $\phi= \alpha_{n,i}a_{1}\dots a_{n}$ and $\phi^{S}= \lambda x_{1}\dots x_{n}.\pi^{l}(x_{j})$, then 
%$\phi^{S}=a^{S,i}_{j}$;

\item if $\phi=X$, then $S(\phi)=X$;

\item if $\phi=\pi^{l}(\F a)$, then $S(\phi)= S(\F a)_{l}$;

\item if $\phi= \SF F\F a_{1}\dots \F a_{n}$, and $S(\SF F)=\lambda \vec \rho.A$, then $S(\phi)= A[ \pi^{l}(\rho_{i})\mapsto  S(\F a_{i})_{l}]$;
%
%\item if $\phi= \SF F^{n}\F a_{1}\dots \F a_{n}$, and $\SF F^{S}=\lambda \vec x.\pi^{l}(x_{j})$, then $\phi^{S}= 
%\pi^{l}(\F a_{i}^{S})$;

\item if $\phi=\Phi\To \Psi$, then $S(\phi)=S(\Phi)\To S(\Psi)$;
\item if $\Phi=\forall a.\phi$, then $S(\Phi)=\forall S(a).S(\phi)$.

%
%
%\item if $\Phi=\forall a.\Phi_{1}\to \Phi_{2}$, then $\Phi^{S}= \forall \vec{a^{S}}.\Phi_{1}^{S}\to \Phi_{2}^{S}$.
\end{itemize}

A substitution $S$ for $(U,E)$ is a \emph{unifier} of $(U,E)$ if the following hold:
\begin{enumerate}
\item for any equation $t=u\in U$, $S(t)=S(u)$ holds;

\item for any constraint of the form $\alpha: a\in E$, $k_{a}^{S}= k_{\alpha}^{S}$;

\item for any constraint of the form $a: k\in E$, $k_{a}^{S}=k$.

\end{enumerate}

%Observe that we can suppose w.l.o.g. that, if $S$ is a unifier of $(U,E)$, then $k_{\alpha}^{S}$ is less or equal than the maximum number of occurrences of $\alpha$ in $U$.

We let \emph{$\mathtt{Fat}$-$\mathtt{unification}$} indicate the problem of finding a unifier for a $\Nda$-unification problem.
The rest of this subsection is devoted to establish the following:

\begin{theorem}
$\mathtt{Fat}$-$\mathtt{unification}$ is decidable. 

\end{theorem}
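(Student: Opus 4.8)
The plan is to decide $\mathtt{Fat}$-$\mathtt{unification}$ by reducing it to ordinary \emph{acyclic} second-order unification, which is decidable by \cite{Jordi1998}. First I would observe that a $\Nda$-unification problem is, despite the notation, a highly constrained instance of second-order unification: the sequence variables $a,b,\dots$ range over \emph{tuples} of type variables of bounded length, the projection variables $\alpha^n$ produce (skolemized) type-variable witnesses, and the second-order variables $\SF F^n$ stand for type contexts $\lambda\vec\rho.A(\vec\rho)$. The key structural fact to extract is that there are no variable cycles: in a judgement of $\Nda$ the witness of a $\forall$E elimination is always an \emph{atomic} type, so a term of the form $\SF F\F a_1\dots\F a_n$ can only ever have its arguments $\F a_i$ built from sequence/projection variables, never from another occurrence of $\SF F$ itself. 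I would make this precise by an acyclicity lemma: in any $\Nda$-unification problem arising from a type-checking instance (and in fact in any well-formed one, by the grammar restrictions on the sorts $\langle*\rangle$, $*$, $T(*)$), one cannot derive an equation $\SF F\F a_1\dots\F a_n = u[\SF F]$.

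Next I would bound the search space so the problem becomes finitary. The crucial parameters are: the arities $n$ of all variables (fixed by the problem), the degrees $\deg(\alpha)$ (the maximum $u$ with $\pi^u(\alpha^n\vec a)$ occurring in $U$, hence bounded by the size of $U$), and the lengths $k_a^S$ of the tuples. The constraints in $E$ of the form $(a:k)$ pin some of these down directly, and $(\alpha:a)$ forces $k_\alpha^S = k_a^S$; for the remaining free sequence variables I would argue that without loss of generality $k_a^S$ may be taken no larger than $\deg$ of the projection variables applied to $a$ plus the number of explicit tuple entries appearing, since longer tuples only add unused components. Once all tuple lengths are bounded, each projection variable $\alpha^n$ has only finitely many candidate instantiations $S(\alpha)_i$ (each entry is either a constant $\lambda\vec x.X$ for one of the finitely many relevant $X$, or a projection $\lambda\vec x.\pi^l(x_j)$ with $j\le n$ and $l$ bounded). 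So I would enumerate all finitely many choices of the numerical data $(k_a^S)_a$, $(k_\alpha^S,S(\alpha))_\alpha$, substitute them in, and reduce $U$ to a residual unification problem whose only unknowns are the second-order variables $\SF F^n$.

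For the residual problem I would check that it is now an instance of acyclic SOU in the sense of \cite{Jordi1998}: after the projection/sequence data is fixed, each $\SF F^n\F a_1\dots\F a_n$ becomes $\SF F^n$ applied to a fixed tuple of concrete type variables, so the equations have the shape $\SF F^n\vec Y = A$ with $A$ a type expression, and the no-cycle lemma guarantees no $\SF F$ occurs inside its own arguments or, transitively, inside a term it is equated to through a chain. Hence acyclic SOU decides solvability, and $S(\SF F)$ can be read off; combined with the finitely many numerical guesses this gives a decision procedure for the whole problem. The main obstacle I expect is the bounding argument for the tuple lengths $k_a^S$ and the interaction of the constraint set $E$ with it: one must rule out that a solution genuinely requires unboundedly long tuples, and must check that the grammar restrictions on $A(\rho_1,\dots,\rho_n)$ (only $\pi^l(\rho_i)$, $\To$, $\forall$) really do forbid the kind of self-nesting that makes full SOU undecidable. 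Verifying acyclicity rigorously — i.e. that $\Nda$-unification problems coming from type-checking never produce a variable cycle, which is ultimately a consequence of Lemma~\ref{lemma:stlc} — is the heart of the argument.
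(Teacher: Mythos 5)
Your overall architecture (bound the first-order data, then dispatch the residual second-order equations to an acyclic solver) is close in spirit to the paper's, but it rests on a claim that is false and whose correction is precisely the missing half of the paper's argument. You assert an ``acyclicity lemma'': that no well-formed $\Nda$-unification problem (or at least none arising from type-checking) can yield an equation of the form $\SF F\,\F a_{1}\dots\F a_{n}=u[\SF F]$. Neither version holds. The grammar of the sorts $\langle *\rangle$, $*$, $T(*)$ constrains the shape of individual expressions, not which equations may be written between them, so $U$ may perfectly well contain, say, $\SF F X=(\forall a.\SF F X)\To(\forall b.Y)$; and the problems generated from type-checking are cyclic exactly when the term contains a self-application (this is the whole point of the $\lambda x.xx$ example in Section \ref{sec:deci}). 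The correct statement, which the paper proves, is that a problem \emph{with} a variable cycle has \emph{no unifier}: one collapses $(U,E)$ to a first-order unification problem $U^{*}$ by erasing the arguments of the second-order variables and the quantifier prefixes, checks that any unifier of $(U,E)$ induces a unifier of $U^{*}$, and observes that a cycle in $(U,E)$ gives a cycle in $U^{*}$, which FOU cannot solve. This lemma is what lets the algorithm detect cycles and return $\mathtt{failure}$. Your procedure, as described, has no branch for cyclic inputs --- it would either fail to terminate or silently assume them away --- so it does not decide the problem stated in the theorem, which quantifies over arbitrary $(U,E)$.

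A secondary point: after guessing the numerical data you hand the residual equations to acyclic SOU as a black box. The paper instead first eliminates all arrow-shaped right-hand sides by a decomposition rule that splits each $\SF F$ into fresh $\SF F_{1},\SF F_{2}$ consistently across \emph{all} equations mentioning $\SF F$ (termination of this phase is exactly where acyclicity, i.e.~well-foundedness of $\stackrel{\simeq}{\leadsto}$, is used), and then solves the resulting ``simple'' problem by a directly bounded nondeterministic search, with the bound $K+N$ extracted from the indices occurring in $U$ and the number of equations. Your reduction might be made to work, but you would still have to check that the residual problem --- with its sequence variables, the binders $\forall a.\phi$, and the constraint set $E$ --- really is an instance of the acyclic SOU problem of \cite{Jordi1998}, which is not immediate; the paper avoids this by giving a self-contained algorithm.
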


A $\Nda$-unification problem $(U,E)$ is in \emph{normal form} if if contains no equation of the form $\Phi_{1}\To \Psi_{1}=\Phi_{2}\To \Psi_{2}$.
% and no equation of the form
%$(\forall a.\Phi a)\to \psi= (\forall a'.\Phi')\to \psi'$. 
Any unification problem can be put in normal form by 
repeatedly applying the following simplification rule:
%\emph{simplifying} all equations of the form $\Phi\to \Psi$ by applying the following rules:
\begin{align*}
%& \AXC{$U+ \{X=\Phi\to \Psi\}$}
%\UIC{$U+\{X=Y\}$}
%\DP\\
%& \AXC{$U+\{\pi^{l}(\F a)=\Phi\to \Psi\}$}
%\UIC{$U+\{ X=Y\}$}
%\DP \\
%& \AXC{$U+\{\SF F\F a_{1}\dots \F a_{n}=\Phi\to \Psi\}$}
%\UIC{$U\Big [ \SF F\vec{\F a} \mapsto (\SF F_{1}\vec{\F a}\to \SF F_{2}\vec{\F a})   \Big]  +\left  \{\begin{matrix}\SF F_{1}\F a_{1}\dots \F a_{n}= \Phi, \\ \SF F_{2}\F a_{1}\dots \F a_{n}= \Psi\end{matrix}\right\}$}
%\DP\\
%&
\AXC{$U+\{(\forall a_{1}.\phi_{1})\to(\forall b_{1}. \psi_{1})= (\forall a_{2}.\phi_{2})\to(\forall b_{2}. \psi_{2})\}$}
\UIC{$\big (U+\{\phi_{1}=\phi_{2}, \psi_{1}= \psi_{2}\}\big) \big[ a_{2}\mapsto a_{1}, b_{2}\mapsto b_{1} \big]$}
\DP
\end{align*}
%where at each application of the rule one adds to $E$ all constraints of the form $(a_{1}:k)$ (resp.~$(\alpha:a_{1})$) such that $(a_{2}:k)\in E$ (resp.~$(\alpha,a_{2})\in E$), and conversely one adds to $E$ all constraints of the form $(a_{2}:k)$ (resp.~$(\alpha:a_{2})$) such that $(a_{1}:k)\in E$ (resp.~$(\alpha,a_{1})\in E$), and similarly for $b_{1},b_{2}$.   

Given a $\Nda$-unification problem in normal form $(U,E)$, we say that
an equation $t=u$ \emph{can be deduced from $U$} if $t=u$ can be deduced from a finite set of equations in $U$ by applying standard first-order equality rules.
We say that two second-order variables $\SF F,\SF G$ are \emph{equivalent} (noted $\SF F\simeq \SF G$) if an equation of the form $ \SF F\F a_{1}\dots \F a_{n}=\SF G\F b_{1}\dots \F b_{n}$ can be deduced from $U$; we say that $\SF F$ is \emph{connected with $\SF G$} (noted $\SF F \leadsto \SF G$) if
an equation of the form $\SF F\F a_{1}\dots \F a_{n}= \Phi\To \Psi$, where $\SF U$ occurs in $\Phi\To \Psi$, can be deduced from $U$.
We say that $(U,E)$ has a \emph{variable cycle} if there exist variables $\SF F_{1},\dots, \SF F_{k}$ such that 
$\SF F_{1}\stackrel{\simeq}{\leadsto}\SF F_{2}\stackrel{\simeq}{\leadsto}\dots \stackrel{\simeq}{\leadsto}\SF F_{n}\stackrel{\simeq}{\leadsto}\SF F_{1}$ (where $\SF F\stackrel{\simeq}{\leadsto} \SF G$ means that $\SF F$ is connected with some variable equivalent to $\SF G$).

%
% a directed graph $G_{U,E}$ as follows: the nodes of $G_{U,E}$ are the second-order variables $\SF F$ occurring in $U$;
% an edge $\SF F\to \SF G$ exists if $\SF F\leadsto \simeq^{*}\SF G$.
%

\begin{lemma}
Let $(U,E)$ be a unification problem in normal form. 
If $(U,E)$ has a variable cycle, then it has no solution.

\end{lemma}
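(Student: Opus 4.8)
The plan is to show that a variable cycle forces, by an easy induction on the number of arrow-symbols, an equation of the form $\SF F\F a_{1}\dots \F a_{n}=\Phi$ in which the second-order variable $\SF F$ occurs \emph{strictly inside} $\Phi$; such an equation has no unifier because, once $S$ is fixed, the right-hand side has strictly more type constructors than the left-hand side. Concretely: suppose $\SF F_{1}\stackrel{\simeq}{\leadsto}\SF F_{2}\stackrel{\simeq}{\leadsto}\dots\stackrel{\simeq}{\leadsto}\SF F_{n}\stackrel{\simeq}{\leadsto}\SF F_{1}$. By definition of $\stackrel{\simeq}{\leadsto}$, for each $i$ there is a variable $\SF G_{i}$ with $\SF F_{i}\leadsto\SF G_{i}$ and $\SF G_{i}\simeq\SF F_{i+1}$ (indices mod $n$); unfolding $\leadsto$, the set $U$ yields an equation $\SF F_{i}\vec{\F a}=\Phi_{i}\To\Psi_{i}$ in which $\SF G_{i}$ occurs, and unfolding $\simeq$ yields $\SF G_{i}\vec{\F b}=\SF F_{i+1}\vec{\F c}$. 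Since equality deduction from $U$ is closed under substitution of equals, one can splice these together to deduce $\SF F_{1}\vec{\F a}=\Psi$ for a type-expression $\Psi$ in which $\SF F_{1}$ occurs and, crucially, occurs under at least one $\To$.

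First I would make precise the ``occurs strictly below a constructor'' invariant, by assigning to each subterm occurrence of $U$ a \emph{depth} (number of $\To$'s above it, where $\forall a.{-}$ does not count), and noting that $\SF F_{i}\leadsto\SF G_{i}$ gives an occurrence of $\SF G_{i}$ at depth $\geq 1$ relative to the head $\SF F_{i}$, while $\SF G_{i}\simeq\SF F_{i+1}$ preserves depth (the two heads sit at the top level of their respective sides). Chaining $n$ such steps around the cycle therefore produces a derived equation $\SF F_{1}\F a_{1}\dots\F a_{m}=\Psi$ where $\SF F_{1}$ reappears inside $\Psi$ at depth $\geq n\geq 1$. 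This is the combinatorial heart of the argument and is entirely analogous to the classical ``occurs check'' obstruction for first-order unification and to the variable-cycle obstruction for SOU discussed in \cite{Jordi1998}.

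Next I would derive the contradiction from such a self-referential equation. Fix any unifier $S$ of $(U,E)$. Define a size function $|{-}|$ on the types produced by $S$ counting occurrences of $\To$ (again ignoring $\forall$). Because $S$ is a homomorphism for $\To$ in the sense spelled out by the clause $S(\Phi\To\Psi)=S(\Phi)\To S(\Psi)$, and because the occurrence of $\SF F_{1}$ inside $\Psi$ sits below at least one genuine $\To$, we get $|S(\Psi)|\geq |S(\SF F_{1}\F a_{1}\dots\F a_{m})|+1$. The one point needing a small check here is that substituting into $\SF F_{1}$ really does insert an occurrence of the \emph{same} substituted type on both sides: this holds because $S(\SF F_{1}\vec{\F a})$ depends only on $S(\SF F_{1})$ and the $S(\F a_{i})_{l}$, and the clause for $S(\SF F^{n}\F a_{1}\dots\F a_{n})$ substitutes $\pi^{l}(\rho_{i})\mapsto S(\F a_{i})_{l}$ uniformly; some care is needed because different argument tuples $\vec{\F a}$ versus $\vec{\F c}$ may feed $\SF F_{1}$, so strictly one should track the head symbol's image $S(\SF F_{1})=\lambda\vec\rho.A$ and observe that $A$ itself, hence its arrow-count contribution, recurs. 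Combining with $S(t)=S(u)$ for the deduced equation $t=u$ gives $|S(\Psi)|=|S(\SF F_{1}\vec{\F a})|$, contradicting the strict inequality. Hence no unifier exists.

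\textbf{Main obstacle.} The delicate step is the bookkeeping in the chaining argument: one must verify that the $\stackrel{\simeq}{\leadsto}$-steps can genuinely be composed at the level of deducible equations (i.e.\ that substituting the equation $\SF G_{i}\vec{\F b}=\SF F_{i+1}\vec{\F c}$ into the occurrence of $\SF G_{i}$ inside $\Phi_{i}\To\Psi_{i}$ is a legitimate first-order equality deduction from $U$, respecting arities and the sort discipline $\langle*\rangle,*,T(*)$), and that the depth/size counting is insensitive to the $\forall a.{-}$ binders and to the projection terms $\pi^{l}(\cdot)$, which carry no $\To$. Once those are nailed down the contradiction is immediate, so I expect the write-up to consist mostly of setting up the depth function carefully and one clean induction on the cycle length.
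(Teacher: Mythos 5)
Your proof is correct, and the obstruction you exploit --- an occurs-check violation detected by counting $\To$-constructors --- is exactly the one the paper relies on, but you package it differently. The paper defines an erasure $t\mapsto t^{*}$ of the whole problem into a \emph{first-order} unification problem (every application $\SF F\,\F a_{1}\dots\F a_{n}$ collapses to a single first-order variable $x_{\SF F}$, sequence expressions collapse to a constant $c$, and $\forall$ is dropped), checks that any unifier $S$ of $(U,E)$ induces a first-order unifier $S^{*}$ of $U^{*}$ and that variable cycles are preserved by the erasure, and then invokes the classical occurs-check for first-order unification. You instead run the occurs-check directly via an arrow-counting function. Both hinge on the same key observation, which you correctly isolate: the number of $\To$'s in $S(\SF F\,\F a_{1}\dots\F a_{n})$ equals that of the body $A$ of $S(\SF F)=\lambda\vec\rho.A$, independently of the arguments, because the substitution $\pi^{l}(\rho_{i})\mapsto S(\F a_{i})_{l}$ replaces atoms by atoms (in the paper this independence is built into the erasure itself, since $\SF F\vec{\F a}$ is sent to $x_{\SF F}$ regardless of $\vec{\F a}$). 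One caveat on your write-up: the first paragraph claims the cycle can be spliced into a single \emph{deducible} equation $\SF F_{1}\vec{\F a}=\Psi$ with $\SF F_{1}$ occurring inside $\Psi$; as you yourself note, the argument tuple at the occurrence of $\SF G_{i}$ inside $\Phi_{i}\To\Psi_{i}$ need not match the one in the equation $\SF G_{i}\vec{\F b}=\SF F_{i+1}\vec{\F c}$, so this syntactic splicing is not literally a first-order equality deduction from $U$. But it is also not needed: applying $S$ link by link and chaining $|A_{\SF F_{i}}|\geq|A_{\SF G_{i}}|+1=|A_{\SF F_{i+1}}|+1$ around the cycle already gives $|A_{\SF F_{1}}|\geq|A_{\SF F_{1}}|+n$, the desired contradiction. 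Net comparison: your argument is self-contained (no appeal to facts about first-order unification), while the paper's reduction is shorter and reuses the standard occurs-check lemma off the shelf.
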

\begin{proof}
To prove the lemma we show that any unification problem $(U,E)$ yields a \emph{first-order} unification problem $U^{*}$ and that any unifier of $(U,E)$ yields a unifier of $U^{*}$. For the translation, we fix a constant $c$, and  we associate any second-order variable $\SF F$ with a first-order variable $x_{\SF F}$; any expression is translated into a first order expression by: 
\begin{align*}
\F a^{*}& = c \\
%(\alpha_{n,i}a_{1}\dots a_{n})^{*}& = c \\
\SF F^{n}\F a_{1}\dots \F a_{n}& =  x_{\SF F} \\
(\Phi\To \Psi)^{*}& = \Phi^{*} \To \Psi^{*} \\
(\forall a.\phi)^{*}& = \phi^{*}
\end{align*}

We finally let
$U^{*}= \{t^{*}=u^{*}\mid t=u\in U\}$. 
Observe that if $\SF F\simeq \SF G$ in $U$, then $x_{\SF F}= x_{\SF G}$ in $U^{*}$, and if $\SF F\leadsto \SF G$ in $U$, then $U^{*}$ contains an equation of the form $x_{\SF F}= t\To u$, where $x_{\SF G}$ occurs in $t\To u$.
Hence a variable cycle in $(U,E)$ induces a variable cycle in $U^{*}$.
%Recall that a first-order unification problem with a variable cycle admits no unifier.

For any substitution $S$ for $(U,E)$, we define a first-order substitution $S^{*}$ as follows:
given $\lambda \vec \rho.A$ we define $A^{*}$ by 
$X^{*}=c$, $(\pi^{l}(\rho_{i}))^{*}=c$, $(A\To B)^{*}=A^{*}\To B^{*}$ and $(\forall X.A)^{*}=A^{*}$. 
We let then $S^{*}(x_{\SF F})=S(\SF F)^{*}$.

One can easily check that if $S$ is a unifier for $(U,E)$, then $S^{*}$ is a unifier of $U^{*}$.
As a consequence, if $(U,E)$ has a variable cycle, so does $U^{*}$, and by well-known facts about first-order unification, $U^{*}$ has no unifier, and so neither $(U,E)$ does.
\end{proof}

Let us call a unification problem $(U,E)$ \emph{simple} if it contains no expression of the form $\Phi\To \Psi$. If $(U,E)$ has no variable cycle, then it can be reduced to a simple unification problem by applying the following rules:
\begin{align*}
& \AXC{$U+ \{X=\Phi\To \Psi\}$}
\UIC{$\{X=Y\}$}
\DP\qquad
& \AXC{$U+\{\pi^{l}(\F a)=\Phi\To \Psi\}$}
\UIC{$\{ X=Y\}$}
\DP 
\end{align*}
\begin{align*}
& \AXC{$U+\{\SF F^{n}\F a_{1}^{1}\dots \F a_{n}^{1}=(\forall c_{1}.\phi_{1})\To (\forall d_{1}.\psi_{1}),\dots, \SF F^{n}\F a_{1}^{r}\dots \F a_{n}^{r}=(\forall c_{r}.\phi_{r})\To (\forall d_{r}.\psi_{r})\}$}
\UIC{$U
\Big [ \SF F^{n}\vec{\F a} \mapsto (\SF F_{1}^{n+1}\vec{\F a}c\To \SF F_{2}^{n+1}\vec{\F a}d)   \Big]  
+\left  \{\begin{matrix}\SF F_{1}^{n+1}\F a_{1}^{1}\dots \F a_{n}^{1}c_{1}= \phi_{1},\dots, \SF F_{1}^{n+1}\F a_{1}^{r}\dots \F a_{n}^{r}c_{r}= \phi_{r} \\ \SF F_{2}^{n+1}\F a_{1}^{1}\dots \F a_{n}^{1}d_{1}= \psi_{1},\dots, \SF F_{2}^{n+1}\F a_{1}^{r}\dots \F a_{n}^{r}d_{r}= \psi_{r}\end{matrix}\right\}$}
\DP
%%&
%\AXC{$U+\{\Phi_{1}\to \Psi_{1}= \Phi_{2}\to \Psi_{2}\}$}
%\UIC{$U+\{\Phi_{1}=\Phi_{2}, \Psi_{1}= \Psi_{2}\}$}
%\DP
\end{align*}
Where in the first two rules we let $Y$ be any type variable distinct from $X$, and in the last rule we suppose that $U$ contains no equation of the form
$\SF F^{n}\F a_{1}\dots \F a_{n}=\Phi\To \Psi$. Observe that, by acycliclity, $\SF F$ cannot occur in either $\phi_{i}$ or $\psi_{i}$. 
 One can argue by induction on the well-founded preorder $\stackrel{\simeq}{\leadsto}$ that one can eliminate all terms of the form $\Phi\To \Psi$ by applying a finite number of instances of the rules above.

The last step to ensure decidability is the following:
\begin{proposition}
There is an algorithm that generates all unifiers of a simple unification problem, if there exists any, and returns $\mathtt{failure}$ otherwise.
 \end{proposition}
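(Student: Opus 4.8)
The plan is to show that, once variable cycles have been ruled out, a \emph{simple} problem is, up to a finite and harmless amount of branching, an ordinary first-order unification problem over a finite supply of ``atoms''. I would organize the argument in three stages, the first being the delicate one.

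\textbf{Stage 1: bounding the lengths.} A substitution $S$ carries, for every sequence variable $a$ and projection variable $\alpha$, a natural number $k_a^S$, resp.\ $k_\alpha^S$. The constraints in $E$ (the $(\alpha:a)$'s and $(a:k)$'s), together with the requirements $k_\alpha^S\ge\deg(\alpha)$ and the condition that every occurrence $\pi^u(\F a)$ in $U$ be in range, form a finite system of equalities and lower bounds over $\BB N$, which is always satisfiable (on each equality class take the maximum of the imposed lower bounds). I would then argue, by induction on the syntactically bounded nesting depth of the sequence expressions occurring in $U$, that if $(U,E)$ has any unifier then it has one in which all the $k_a^S,k_\alpha^S$ are bounded by a fixed $M$ computable from $U$: components of a projection variable with index exceeding $\deg(\alpha)$, and components of a sequence variable lying beyond the positions actually projected out, are never read and can be pruned. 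Thus one may fix, once and for all, finitely many candidate length assignments.

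\textbf{Stage 2: reduction to a flat problem.} With the lengths fixed, each sequence expression $\F a$ in $U$ denotes a tuple of \emph{atoms} of known length, each atom being either a rigid type variable (when it comes from an explicit sequence $\langle X_1\dots X_n\rangle$), or a fresh first-order unknown (from a sequence variable), or the $l$-th output component of a projection variable $\alpha$. For the last kind the definition of substitution forces $S(\alpha)_l$ to be a constant $\lambda\vec x.X$ or a projection $\lambda\vec x.\pi^{l'}(x_j)$; likewise, since no equation of $U$ contains $\To$, for every second-order variable $\SF F^n$ occurring in $U$ the value $S(\SF F)$ must be \emph{flat}, i.e.\ $\lambda\vec\rho.X$ or $\lambda\vec\rho.\pi^{l}(\rho_i)$ --- the sole exception being an $\SF F$ occurring only in equations of the form $\SF F\F a_1\dots\F a_n=\SF G\F b_1\dots\F b_n$ that link it to other, otherwise unconstrained, second-order variables, whose ``free cluster'' may be assigned an arbitrary type (one fresh variable being a canonical choice). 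I would then branch over: the length assignment; for each $\SF F^n$ its flat shape (the constant case ranging over the type variables occurring in $U$ plus one fresh variable); and for each output component of each $\alpha^n$ its shape. Each branch is finite, so there are finitely many. On a fixed branch, substituting the chosen shapes collapses $U$ to a first-order unification problem over the atoms, with the rigid type variables of $U$ as constants.

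\textbf{Stage 3: solving and enumerating.} On each branch I would run first-order unification on the resulting flat problem and discard the branch on $\mathtt{failure}$. If every branch fails then, by the completeness established in Stages~1--2, $(U,E)$ has no unifier and the algorithm returns $\mathtt{failure}$; otherwise each surviving branch yields a most general first-order solution which, read back through the translation of Stage~2, is a unifier of $(U,E)$, and composing this finite collection with the remaining open choices (the free-cluster assignments) produces all unifiers, as required. Soundness --- that each thing produced really satisfies the three clauses in the definition of unifier --- is a routine check. The step I expect to be the main obstacle is Stage~1: making rigorous that excess sequence- and projection-variable length can always be pruned without breaking any equation, and extracting the explicit bound $M$. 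The subtlety is that \emph{which} positions of a sequence are actually read can itself depend on the shapes chosen for the projection variables feeding into it, so the length bound and the shape analysis are mildly entangled; the remedy is to perform the induction on the nesting depth of sequence expressions, committing at each level only to the finitely many shapes compatible with $\deg(\alpha)$. The free-cluster phenomenon in Stage~2 is a minor but genuine point that must be flagged, so that ``all unifiers'' is not misread as ``finitely many unifiers''.
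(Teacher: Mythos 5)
Your proposal is correct and follows essentially the same route as the paper's proof: guess a flat shape (a constant or a projection) for each projection variable and second-order variable, reduce the residue to first-order consistency checking among atoms, and make the guessing finite by bounding the chosen projection indices and sequence lengths by a quantity computable from $(U,E)$ (the paper's bound is $K+N$ with $N\leq \mathsf{card}(U)$). The differences are organizational rather than substantive --- you fix the length assignment up front and then invoke first-order unification on each branch, whereas the paper eliminates equations non-deterministically and assigns the lengths $k_a^S,k_\alpha^S$ at the end --- and you are, if anything, more explicit than the paper about the ``free cluster'' of second-order variables constrained only against one another, for which ``generates all unifiers'' must be read up to the arbitrary choice on that cluster.
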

\begin{proof}
%$(U,E)$ contains equations of type $\langle *\rangle$ and type $*$, as well as constraints $E$. 

We first describe an algorithm which generates non-deterministically all unifiers. 
We will then show that the search space for unifiers can be restricted to a finite one, allowing thus also to check if no unifier exists.

The algorithm is as follows:

\begin{enumerate}
%\item eliminate all basic equations of the form $a=b$, $a=X$ or$\pi^{l}(a)=X$, by substitution;

\item 
%Let $K$ be the maximum number of distinct occurrences of an projection variable $\alpha$ in $U$. 
Proceed non-deterministically by applying the following rules:

	\begin{enumerate}
	\item if $\pi^{j}(\alpha^{n} a_{1}\dots a_{n}) = u$ occurs in $U$, then choose a natural number $q$, set $S(\alpha)_{j}=\lambda \vec x.\pi^{q}(a_{j})$, eliminate this equation and replace in all other equations any term  of the form $ \alpha^{n} b_{1}\dots b_{n}$ with $\pi^{q}(b_{j})$;\label{stepa}
%		\item if $\pi^{j}(\alpha_{n}a_{1}\dots a_{n}) = X$, then set $\alpha_{j}^{S}=X$ and replace any term in $U$ of the form $\pi^{l}( \alpha_{n})b_{1}\dots b_{n}$ by $X$;

				\item if $\pi^{j}(\alpha^{n} a_{1}\dots a_{n}) = u$, then set $S(\alpha)_{j}=X$, where $X$ is either a variable occurring in $U$ or a fresh variable, eliminate this equation and replace in all other equations any term  of the form $\pi^{l}( \alpha^{n} b_{1}\dots b_{n})$ by $X$;

	\item if $\SF F^{n} \F a_{1}\dots \F a_{n}=u$ is in $U$, then choose a natural number $q$ and set $S(\SF F)= \lambda\vec x.\pi^{q}(x_{j})$, eliminate this equation and replace in all other equations any term  of the form $ \SF F^{n} \F b_{1}\dots \F b_{n}$ by $\pi^{q}(\F b_{j})$;\label{stepc}

%		\item if $\SF F^{n}\F a_{1}\dots \F a_{n}=X$ is in $U$, then set $\alpha_{i}^{S}= \lambda\vec x.X$ and replace any term in $U$ of the form $ \SF F^{n}\F b_{1}\dots \F b_{n}$ by $X$;
				\item if $\SF F^{n}\F a_{1}\dots \F a_{n}=u$ is in $U$, then set $S(\SF F)= \lambda\vec x.X$, where $X$ is either a variable already occurring in $U$ or a fresh variable, eliminate this equation and replace in all other equations any term  of the form $ \SF F^{n} \F b_{1}\dots \F b_{n}$ by $X$.

	\end{enumerate}

\item Observe that any rule eliminates one equation, so after a finite number of steps one is left with only equations of the form
$\F a=\F b$, where $\F a, \F b$ are either type variables or of the form $\pi^{l}(b)$, for some sequence variable $b$.
If from such equations one can deduce either $X=Y$ for distinct type variables $X,Y$, or
$\pi^{l}(a)=\pi^{r}(b)$, where either $a\neq b$ or $a=b$ and $l\neq r$, then return $\mathtt{local\ failure}$;

%  
% number of equations  application of the rule eliminates a projection or second-order variable, hence after a finite number of steps, step 1. is terminated.
%
%\item 
%
%after all projection and second-order variables are eliminated, eliminate new basic equations of form $a=b$, $a=X$ or $\pi^{l}(a)=X$ by substitution; 

\item
otherwise, set $k_{a}^{S}$ any number greater or equal to all expression $\pi^{l}(a)$ occurred during the computation, and 
set $k_{\alpha}^{S}$ is a similar way. Adjust $k_{a}^{S}, k_{\alpha}^{S}$, if possible, in order to satisfy the constraints in $E$. If this is not possible (i.e.~if $E$ contains $(a:k)$ and an expression $\pi^{l}(a)$, with $l>k$, occurred in the computation, or if $E$ further contains $(\alpha:a)$ and an expression
$\pi^{l}(\alpha^{r} a_{1}\dots a_{r})$, with $l>k$, occurred during the computation), return $\mathtt{local\ failure}$. Otherwise, for all $l\leq k_{\alpha}^{S}$ such that $\pi^{l}(a)$ never occurred, choose $S(\alpha)_{l}$ randomly;

\item if some non-deterministic branch generated by steps 2.~and 3.~ends without returning $\mathtt{local\ failure}$, then return $\mathtt{success}$.

%if all branches return $\mathtt{success}$ (observe that they are finitely many), return $\mathtt{failure}$.

\end{enumerate}

It is clear that any succeeding path yields a unifier for $(U,E)$. Conversely, if $S$ is a unifier of $(U,E)$, then one can easily see, by induction on the number of equations of $U$, that $S$ is produced by some succeeding path of the algorithm.

Now suppose that $S$ is a unifier and that $S$ is produced through a finite number of computation steps $s_{1},\dots, s_{N}$, where $N\leq \mathsf{card}(U)$. 
Let $K$ be the smallest natural number such that for all sequence $\langle X_{1},\dots, X_{r}\rangle$ in $U$ and equation of the form $\pi^{l}(\F a)=\phi$ in $U$, $r, l\leq K$, and for all constraint of the form $(a:k)$, $k\leq K$.
Then we can replace each choice of a natural number $q$ at steps \ref{stepa} and \ref{stepc} by some $q'\leq K+N$ (in fact we need no more than $N$ different values distinct from any index already occurring somewhere in $U$).

Thus we can define a new algorithm which runs all (finitely many) computations with search-space restricted to $K+N$, and returns $\mathtt{success}$ if one of them succeeds, and returns $\mathtt{failure}$ if all of them 
 return a $\mathtt{local\ failure}$.
\end{proof}

\subparagraph*{Type-checking $\Nda$ by second-order unification}

A \emph{type-checking problem} is a triple $(\Gamma, t, A)$ where $\Gamma$ is a term context, $t$ is a $\lambda$-term with $FV(t)\subseteq \Gamma$ and $A$ is a type. 
A \emph{$\Nda$-solution} of a type-checking problem is a type derivation in $\Nda$ of $\Gamma \vdash t:A$. We wish to prove the following:

\begin{theorem}
For any type-checking problem $(\Gamma, t,A)$, there exists a $\Nda$-unification problem $\B V(\Gamma, t ,A)$ such that 
$(\Gamma,t,A)$ has a solution in $\Nda$ iff $\B V(\Gamma, t,A)$ has a unifier.
\end{theorem}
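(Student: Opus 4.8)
The plan is to define $\B V(\Gamma,t,A)$ by recursion on $t$, following a syntax-directed presentation of $\Nda$-derivations, and to prove the equivalence by exhibiting two mutually inverse translations between canonical $\Nda$-derivations of $\Gamma\vdash t:A$ and unifiers of $\B V(\Gamma,t,A)$. The enabling preliminary is a normal-form theorem for $\Nda$-derivations: since every $\forall$I and $\forall$E step leaves the underlying $\lambda$-term untouched, one can permute them so that each $\forall$E step immediately follows a $\mathrm{Var}$ step, absorbing all the witnesses used for a single variable occurrence, and each $\forall$I step serves only to remove the leading quantifier block from the goal type of some subterm; any ``generalize-then-reinstantiate'' pattern is collapsed using the substitution lemma, which stays inside $\Nda$ precisely because witnesses are type variables, so that substituting a variable for a variable keeps every other witness atomic. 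In such a derivation the data left to be guessed are, for each subterm, the type it receives---a priori unknown, hence a second-order variable $\SF F$---and, for each variable occurrence, the atomic witnesses used to instantiate its quantifier prefix, whose admissible values are exactly the type variables in scope at that occurrence, namely those introduced by the $\forall$I steps on the path leading to it; this dependence of a witness on the variables in scope is exactly what the skolem term $\alpha^n\F a_1\dots\F a_n$---a projection variable applied to the sequence variables recording the blocks in scope---is meant to express.

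Concretely, $\B V(\Gamma,t,A)$ is produced by a traversal carrying a context assigning to each term variable a type scheme $\forall a.\phi$ of sort $T(*)$, a list $\F a_1,\dots,\F a_n$ of sequence variables in scope, and a goal of sort $T(*)$, initialized to $\forall a_A.\overline A$ with the bound block of $A$ encoded through projections $\pi^l(a_A)$ and with $(a_A:p)$ added to $E$ for $p$ the length of the leading quantifier block of $A$. For $t=\lambda x.s$ with goal $\forall a.\phi$ we first strip $\forall a$ (a $\forall$I step), pushing $a$ on the in-scope list: if $\phi$ is an arrow $\Phi\To\Psi$ we record $\Phi$ as the type of $x$ and recurse on $s$ with goal $\Psi$; if $\phi$ is a second-order application we add the decomposition equation $\SF F\F a_1\dots\F a_n=(\forall c.\SF F_1\F a_1\dots\F a_n c)\To(\forall d.\SF F_2\F a_1\dots\F a_n d)$ with $\SF F_1,\SF F_2$ fresh and continue; any other shape yields an unsatisfiable equation, reflecting a type-checking failure. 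For $t=h\,s_1\dots s_m$ (with $h$ either a term variable, or a $\lambda$-abstraction which is elaborated recursively against a fresh scheme variable), after stripping the goal's leading block to a residual $*$-expression $\phi_0$ we take the scheme $\forall a_h.\phi_h$ recorded for the head, instantiate $\forall a_h$ by a fresh projection variable $\alpha$ applied to the current in-scope sequences (adding $(\alpha:a_h)$, and $(a_h:k)$ when the head is a variable of $\Gamma$ with a length-$k$ prefix), and then process the arguments left to right: force the current $*$-expression to be an arrow $\Phi_i\To\Psi_i$ (by a decomposition equation when it is a metavariable application, by a failure otherwise), recurse on $s_i$ with goal $\Phi_i$, and move to $\Psi_i=\forall d_i.\psi_i$, whose block is stripped by another skolemized witness (corresponding to a $\forall$E on the compound term $h\,s_1\dots s_{i-1}$); after the last argument the residual is equated with $\phi_0$. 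Variables introduced by $\mathrm{Abs}$ receive fresh second-order variables as their types, and all arity constraints needed so that the resulting $\Nda$-substitution meets the degree conditions of its definition are collected into $E$.

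The equivalence is then proved by two translations, each a straightforward induction matching a step of the recursion defining $\B V$ with an inference step. Given a canonical $\Nda$-derivation of $\Gamma\vdash t:A$, one reads off the type assigned to each subterm, the variables introduced by its $\forall$I blocks, and the witnesses chosen at each $\mathrm{Var}$ step, and checks that the induced substitution---sending each $\SF F$ to the $\lambda$-abstracted type expression recording the assigned type as a function of the relevant blocks, each $\alpha$ to the tuple of chosen witnesses (each of which is a fixed variable or one of the variables in scope, hence a constant or a projection in the solution language), and each sequence variable to the tuple of introduced variables---satisfies every equation and constraint of $\B V(\Gamma,t,A)$. Conversely, from a unifier $S$ one rebuilds a derivation along the same recursion, reading the type of each subterm off $S(\SF F)$ and the witnesses off $S(\alpha)$; the facts that $S$ respects $E$ and that the solution language admits only variables and projections as components of $S(\alpha)$ guarantee that every instantiation step is a legal atomic one, so the result is a genuine $\Nda$-derivation. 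The main obstacle is not the induction, which is mechanical once the set-up is fixed, but pinning down that set-up: proving the normal-form theorem for $\Nda$-derivations and verifying that the ``force-an-arrow'' decomposition equations together with the skolemization of witnesses capture exactly the canonical derivations---excluding no legal one and admitting no spurious one---while keeping the sequence-variable lists, their arities, and the constraints in $E$ consistent across the two translations.
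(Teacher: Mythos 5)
Your proposal follows essentially the same route as the paper's proof: unknown subterm types become second-order variables, atomic witnesses become projection variables skolemized over the sequence variables recording the quantifier blocks in scope (the paper threads these via the lifting operation $\B U(t)b$ rather than an explicit in-scope list, and uses binary applications rather than spines, but the content is identical), and the equivalence is established through a syntax-directed normal form for derivations (the paper's ``synthetic typing rules'' with the relation $\preceq$) together with two inductive translations between canonical derivations and unifiers. The proposal is correct and matches the paper's argument in both structure and the key ideas.
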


The first step is to associate with each term $t$ finite sets of sequence variables, projection variables and second-order variables as follows (we suppose that no variable occurs both free and bound in $t$, and that any bound variable is bound exactly once):

%For each subterm $u$ of $t$, we define the \emph{depth of $t$ in $u$}, noted $d_{t}(u)$, as follows:
%if $t=u$, then $d_{t}(u)=0$, if $t=\lambda x.t'$, where $u$ is a subterm of $t'$, then $d_{t}(u)=d_{t'}(u)$; if $t=t_{1}t_{2}$, where $u$ is a subterm of $t_{i}$, then $d_{t}(u)=d_{t_{i}}(u)+1$. 
%Moreover, for each variable $x$, we define the \emph{depth of $x$ in $t$}, noted $d_{t}(x)$, as follows:
%if $t$ is free in $x$, then $d_{t}(x)=0$; otherwise, let $\lambda x.u$ the unique subterm of $t$ where $x$ is bound, and let $d_{t}(x)=d_{t}(\lambda x.u)$. 
%
%We now define sequence variables, projection variables and second-order variables as follows:
%
%\lambda \vec x. t_{1}\dots t_{k}$ and $u$ is a sub-term of $t_{i}$, then 
%$d_{t}(u)=d_{t_{i}}(u)+1$. 

\begin{itemize}
\item with each variable $x$ in $t$, we associate two sequence variables $a_{x}, b_{x}$, a projection variable $\alpha^{1}_{x}$, and two second-order variables $\SF F^{1}_{x}, \SF G^{1}_{x}$;

\item with each subterm of $t$ of the form $uv$, we similarly  associate two sequence variables $a_{uv}, b_{uv}$, a projection variable $\alpha_{uv}^{1}$ and two second-order variables $\SF F^{2}_{uv}, \SF G^{1}_{uv}$;

\item with each subterm of $t$ of the form $\lambda x.u$, we associate a sequence variable $b_{\lambda x.u}$, and a second order variable $\SF G^{1}_{\lambda x.t}$.

\end{itemize}

Given a set of equations $U$ and a sequence variable $a$ not occurring in $U$, we let $Ua$ be the set of equations obtained by replacing all terms $\alpha^{n} a_{1}\dots a_{n}$ by $\alpha^{n+1} a_{1}\dots a_{n}a$ and all term
$\SF F^{n} \F a_{1}\dots \F a_{n}$ by $\SF F^{n+1}\F a_{1}\dots \F a_{n} a$.

We define a set of equations $\B U(t)$, by induction on $t$
 as follows:
\begin{itemize}

\item $\B U(x)$ is formed by the equation
\begin{align*}
%\Sigma(x)& = \forall a_{x}.\SF F_{x} \\
%\SF F&= \forall b_{x}.\SF G_{x} \\
\SF F_{x} (\alpha_{x}b_{x} )& = \SF G_{x}b_{x}
\end{align*}
 %a_{y_{1}} \dots a_{y_{n}}

\item $\B U( \lambda x.t)$ is formed by $\B U(t)b_{\lambda x.t}$
%\Sigma\cup\{ x: \forall a_{x}.\SF F_{x}\}, t, \forall b_{\lambda x.t}.\SF G_{t})$
 plus the equations
\begin{align*}
\SF G_{\lambda x.t}b_{\lambda x.t}& = (\forall a_{x}.\SF F_{x}a_{x}\vec b b_{\lambda x.t}) \To \forall b_{t}.\SF G_{t}b_{t}b_{\lambda x.t}
\end{align*}

\item $\B U( tu)$ is formed by $\B U(t)b_{tu}, \B U(u)b_{tu}$ plus the equations:
\begin{align*}
\SF G_{t}b_{t}b_{tu} & = ( \forall b_{u}.\SF G_{u}b_{u}b_{tu})\To (\forall a_{tu}. \SF F_{tu}a_{tu}b_{tu})\\
\SF F_{tu} (\alpha_{tu}b_{tu}) b_{tu}& = \SF G_{tu}b_{tu}
\end{align*}

\end{itemize}
We let $\B V(\Gamma, t ,A)=(\B U(\Gamma, t,A), \B E(\Gamma, t, A))$, where  $\B U(\Gamma, t, A)$ is the union of $\B U( t)$ and all equations $\forall a_{x}.\SF F_{x}=\Gamma(x)$ and $\forall b_{t}.\SF G_{t}b_{t}=A$. 
$\B E(\Gamma,t,A)$ is formed by all constraints of the form $(\alpha_{x}: a_{x})$ and $(\alpha_{tu}: b_{t})$, as well as all constraints of the form $(a_{x}:k)$, where $\Gamma(x)= \forall X_{1}\dots X_{k}.C$, 
all constraints of the form $(b_{u}:0)$ where $t$ contains a subterm of the form $uv$, and the constraint 
$(b_{t}, h)$, where $A=\forall X_{1}\dots X_{h}.A'$.

To show that solving $\B V(\Gamma, t,A)$ is equivalent to checking if $\Gamma \vdash t:A$, 
as in \cite{IEEE},  we first define synthetic typing rules for Curry-style $\Nda$ as shown in Fig.~\ref{fig:synthetic}, 
where $A\preceq B$ holds when $A=\forall X_{1}\dots X_{n}.A$ and $B=A[X_{1}\mapsto Y_{1},\dots, X_{n}\mapsto Y_{n}]$.

\begin{figure}
\begin{center}
\fbox{
\begin{minipage}{\textwidth}
$$
\begin{matrix} \ \\ 
\AXC{$\Gamma(x)=A$}
\AXC{$A\preceq B$}
\RL{$\vec X \notin FV(\Gamma)$}
\BIC{$\Gamma \vdash x: \forall \vec X:B$}
\DP 
\\
\ \\
\AXC{$\Gamma, x:A\vdash t: B$}
\RL{$\vec X \notin FV(\Gamma)$}
\UIC{$\Gamma \vdash \lambda x.t: \forall \vec X.A\To B$}
\DP
\\
\ \\
\AXC{$\Gamma \vdash t:A\To B$}
\AXC{$\Gamma \vdash u:A$}
\AXC{$B\preceq C$}
\RL{$\vec X \notin FV(\Gamma)$}
\TIC{$\Gamma \vdash tu: \forall \vec X.C$}
\DP\\ \ \\
\end{matrix}
$$
\end{minipage}
}
\end{center}
\caption{Synthetic typing rules for Curry-style $\Nda$.}
\label{fig:synthetic}
\end{figure}

One can check by induction on $t$ that a synthetic type derivation of $\Gamma \vdash t:A$ yields a unifier of $\B V(\Gamma, t,A)$.
Conversely, we show that from a unifier $S$ for $\B V(\Gamma, t,A)$ we can construct a synthetic typing derivation of $\Gamma\vdash t:A$. We argue by induction on $t$:
\begin{itemize}

\item if $t=x$, then we have $\Gamma(x)=\forall X_{1}\dots X_{N}.S(\SF F_{x})\vec X$, where $N=k_{a_{x}}^{S}$, 
$A= \forall Y_{1}\dots Y_{P}.S(\SF G_{x})\vec Y$, where $P=k_{b_{x}}^{S}$, and moreover, 
$S(\SF F_{x})  (S(\alpha_{x})_{1}\vec Y)\dots (S(\alpha_{x})_{N}\vec Y)= S(\SF G_{x})\vec Y$
(using the fact that $k_{\alpha_{x}}^{S}= k_{a_{x}}^{S}=N$). 
Observe that $(S(\alpha_{x})_{j}\vec Y)$ is a variable, and we deduce then that 
$\Gamma(x) \preceq S( \SF G_{x})\vec Y$; since we can suppose that $\vec Y$ does not occur in $\Gamma$, we deduce then that 
$$
\AXC{$\Gamma(x)=\forall \vec X.S(\SF F_{x})\vec X$}
\AXC{$\forall \vec X.S(\SF F_{x})\vec X\preceq S( \SF G_{x})\vec Y$}
\RL{$\vec Y\notin FV(\Gamma)$}
\BIC{$\Gamma \vdash x: A$}
\DP
$$

\item if $t=\lambda x.u$, then we have that $A=\forall X_{1}\dots X_{N}.(\forall Y_{1}\dots Y_{P}.S(\SF F_{x})\vec Y\vec X)\To \forall Z_{1}\dots Z_{Q}.S(\SF G_{u})\vec Z\vec X$, where $N=k_{b_{\lambda x.t}}^{S}$, $P=k_{a_{x}}^{S}$, $Q=k_{b_{u}}^{S}$ and where we can suppose that the $X_{i}$ do not occur free in $\Gamma$; since  since $\SF U(\lambda x.t)=\SF U(t)a_{\lambda x.t}$ we deduce that  $S$ unifies 
$\B V( \Gamma\cup\{x:A_{1}\}, u, A_{2}))$. By I.H. we deduce then the existence of a type derivation of $\Gamma , x:A_{1}\vdash u: A_{2}$, and since the $X_{i}$ do not occur in $\Gamma$ we finally have
$$
\AXC{[I.H.]}
\noLine
\UIC{$\Gamma, x:A_{1}\vdash u:A_{2}$}
\RL{$\vec X\notin FV(\Gamma)$}
\UIC{$\Gamma\vdash t:A$}
\DP
$$

\item if $t=uv$, then we have that $A=\forall X_{1}\dots X_{n}.S(\SF G_{uv})\vec X$, 
 $S(\SF G_{u})\vec X=(\forall Y_{1}\dots Y_{P}.S(\SF G_{v})\vec Y\vec X)\To (\forall Z_{1}\dots Z_{Q}.S(\SF F_{uv})\vec Z\vec X)$ and that 
$S(\SF F_{uv})(S(\alpha_{uv})_{1}\vec X)\dots (S(\alpha_{uv})_{N}\vec X)\vec X= S(\SF G_{uv})\vec X$, where 
$N=k_{b_{uv}}^{S}$, $P=k_{b_{u}}^{S}$ and $Q=k_{a_{uv}}^{S}$, and where we use the fact that 
$k_{b_{u}}^{S}=0$. Moreover, for any choice of the variables $\vec X$, we have that 
$S$ unifies $\B V( \Gamma, u, (\forall Y_{1}\dots Y_{P}.S(\SF G_{v})\vec Y\vec X)\to\forall Z_{1}\dots Z_{Q}. S(\SF F_{uv})\vec Z\vec X)$ and
$\B V (\Gamma, v , \forall Y_{1}\dots Y_{P}.S(\SF G_{u})\vec Y\vec X)$; by choosing the $\vec X$ so that they do not occur free in $\Gamma$, using the I.H. and the fact that $k_{\alpha_{uv}}^{S}= k_{a_{uv}}^{S}=Q$, we deduce then

\medskip
\resizebox{\textwidth}{!}{$
\AXC{[I.H.]}
\noLine
\UIC{$\Gamma\vdash u:  (\forall Y_{1}\dots Y_{P}.\SF G_{v}^{S}\vec Y\vec X)\to( \forall Z_{1}\dots Z_{Q}. S(\SF F_{uv})\vec Z\vec X)$}
\AXC{[I.H.]}
\noLine
\UIC{$\Gamma\vdash v:  \forall Y_{1}\dots Y_{P}.S(\SF G_{v})\vec Y\vec X$}
\AXC{$ \forall Z_{1}\dots Z_{Q}.S(\SF F_{tu}) \preceq S(\SF G_{tu})\vec X$}
\RL{$\vec X\notin FV(\Gamma)$}
\TIC{$\Gamma \vdash t: A$}
\DP
$}

\end{itemize}

%
%\section{System $\Nda$ with constants and atomic axioms}\label{appC}
%\input{axioms}
%\input{constants}

\section{Bounded Multiplication in $\ML$}\label{appD}

We show that the numerical functions representable in $\ML$ are closed under bounded multiplication: if the function $g:\BB N\to \BB N$ is representable, then also the function $f: \BB N\to \BB N$ defined by $f(n)=\prod_{k\leq n}g(k)$ is representable.

The first step is to define an encoding of pairs inside $\STLC$. Let $A,B$ be two simple types, and let $\langle A,B\rangle:= (A\To B\To o)\To o$. We let $\langle t,u\rangle$ be the term $\lambda f.ftu$. We define the projections 
$\SF P_{1}^{A}:\langle A,B\rangle \To A$ and $\SF P_{2}^{B}:\langle A,B\rangle \To B$  by essentially following the predicative encoding of products in $\Nda$. We first define a term $\SF T^{C}: \langle A,B\rangle \To (A\To B\To C)\To C$ by induction on a simple type $C$ as follows:
\begin{itemize}
\item if $C=o$, then $\SF T^{C}= \lambda gh.gh $;
\item if $C=C_{1}\To C_{2}$, then $\SF T^{A}= \lambda ghz. \SF T^{A_{2}}g (\lambda ab.habz)$.
\end{itemize}
We finally let $\SF P_{1}^{A}:= \SF T^{A}\lambda xy.x$ and $\SF P_{2}^{B}= \SF T^{B}\lambda xy.y$.
This definition can be straightforwardly extended to $n$-ary products $\langle A_{1},\dots, A_{n}\rangle$, with projection functions $\SF P_{i,n}^{A_{i}}: \langle A_{1},\dots, A_{n}\rangle\To A_{i}$.

We furthermore exploit the following standard result (see \cite{Leivant1991}):
\begin{lemma}\label{lemma:ml}
For all $\beta$-normal term $t$, $\vdash t: \Nat \To \Nat$ holds in $\ML$ iff there exists $k\in \BB N$, a term $t^{*}$ and simple types $A_{1},\dots, A_{k}$ such that:
\begin{itemize}
\item $\vdash t^{*}:\Nat[A_{1}]\To \dots \To \Nat[A_{k}]\To \Nat[o]$ holds in $\STLC$;
\item $t^{*}x\dots x\simeq_{\beta} tx$,

\end{itemize}
where $\Nat[A]=(A\To A)\To (A\To A)$.
\end{lemma}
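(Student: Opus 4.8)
The plan is to match $\ML$-derivations at type $\Nat\To\Nat$ with $\STLC$-derivations, exploiting two observations about $\ML$: \textbf{(a)} a variable of type $\Nat$ can only be instantiated, via the $\forall$E-rule, at a \emph{quantifier-free} witness, since otherwise the conclusion of $\forall$E would have rank $\geq 2$; and \textbf{(b)} a $\beta$-normal $\lambda$-term is $\mathsf{let}$-free, so in an $\ML$-derivation of $x\mapsto\Nat\vdash u:\Nat$ the hypothesis $x\mapsto\Nat$ is the only source of polymorphism, apart from the last rule which --- unless $u=x$ --- must be $\forall$I deriving $u:\Nat$ from $x\mapsto\Nat\vdash u:\Nat[X]$ for a fresh $X$ (recall $\Nat=\forall X.\Nat[X]$ and that the only rules producing a $\forall$-type are $\mathrm{Var}$ and $\forall$I).

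$(\Rightarrow)$ A closed $\beta$-normal term of arrow type is an abstraction, so $t=\lambda x.u$ and $x\mapsto\Nat\vdash u:\Nat[X]$ in $\ML$ with $X$ fresh (the case $u=x$ is trivial). By \textbf{(a)}, the $j$-th occurrence of $x$ in $u$ is instantiated at some quantifier-free type $B_{j}$; replacing, for each $j$, that occurrence by a fresh variable $z_{j}$ of type $\Nat[B_{j}]$, and discarding the hypothesis $x\mapsto\Nat$, turns the derivation --- by \textbf{(b)}, no polymorphism being left --- into an $\STLC$-derivation of $z_{1}\mapsto\Nat[B_{1}],\dots,z_{m}\mapsto\Nat[B_{m}]\vdash u':\Nat[X]$, where $u'$ is $u$ with the occurrences of $x$ so replaced. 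Setting $t^{*}:=\lambda z_{1}\cdots z_{m}.\,u'$ and substituting the base type $o$ for the variable $X$ yields a closed simply typed term with $\vdash t^{*}:\Nat[A_{1}]\To\dots\To\Nat[A_{k}]\To\Nat[o]$ in $\STLC$, where each $A_{j}=B_{j}[o/X]$ is simple, and $t^{*}x\cdots x\simeq_{\beta}u\simeq_{\beta}tx$, as required.

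$(\Leftarrow)$ The hypotheses force $t$ to be, up to $\beta$, a closed abstraction $\lambda x.u$ with $u$ the $\beta$-normal form of $t^{*}x\cdots x$. Relabelling the base type $o$ by a fresh type variable $X$ (legitimate in Curry style) gives $\vdash t^{*}:\Nat[A_{1}[X/o]]\To\dots\To\Nat[A_{k}[X/o]]\To\Nat[X]$ in $\STLC$, hence in $\ML$ (every $\STLC$-derivation is an $\ML$-derivation, all its judgements having rank $0$). Starting from $x\mapsto\Nat$, one applies $\forall$E $k$ times with the quantifier-free witnesses $A_{j}[X/o]$ (each resulting judgement of rank $0$), applies $t^{*}$, and closes with $\forall$I on $X$, obtaining $x\mapsto\Nat\vdash t^{*}x\cdots x:\Nat$ with all judgements of rank $\leq 1$. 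By subject reduction for $\ML$ we get $x\mapsto\Nat\vdash u:\Nat$, i.e.\ $\vdash t:\Nat\To\Nat$ in $\ML$. The delicate point is direction $(\Rightarrow)$: one has to normalise the given $\ML$-derivation so that $x\mapsto\Nat$ really is its unique polymorphic hypothesis, and then verify that dereferencing the uses of $x$ produces a bona fide $\STLC$-derivation --- the rank bound being exactly what guarantees that all the instantiating witnesses are simple types.
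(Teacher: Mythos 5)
The paper does not actually prove this lemma: it is invoked as a ``standard result'' with a pointer to \cite{Leivant1991}, so there is no in-paper argument to compare yours against, and your proof supplies the missing reasoning. Its two pillars are the right ones: the rank restriction of $\ML$ forces every $\forall$E on the hypothesis $x\mapsto\Nat$ to use a quantifier-free witness (otherwise the conclusion $(C\To C)\To(C\To C)$ would exceed rank $1$), and a $\beta$-normal term is $\mathsf{let}$-free, so $x$ is the only source of polymorphism. The $(\Leftarrow)$ direction is complete as written. In the $(\Rightarrow)$ direction, the step you yourself flag as delicate is the one that genuinely needs to be carried out: you must (i) check that every occurrence of $x$ in $u$, other than the degenerate case $u=x$, really is consumed by a $\forall$E --- an occurrence of $x$ in argument position at type $\Nat$ is impossible because the applied function would need a type $\Nat\To D$ of rank $\geq 2$, and an occurrence in head position must first be instantiated to an arrow type; and (ii) eliminate intermediate $\forall$I/$\forall$E detours on other subterms, which is possible because, by the rank restriction, a judgement whose type starts with $\forall$ can only be followed by $\forall$E, by another $\forall$I, or be the final conclusion, so each such detour normalizes away by substituting the witness into the subderivation. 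With those two observations made explicit, the extraction of the $\STLC$ derivation of $z_1\mapsto\Nat[B_1],\dots,z_m\mapsto\Nat[B_m]\vdash u':\Nat[X]$ is sound. One cosmetic slip: when collapsing to the single base type $o$ you should substitute $o$ for \emph{all} type variables occurring in the $B_j$, not only for $X$. Modulo these points the proof is correct and is the argument the citation is standing in for.
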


Let $g: \BB N \to \BB N$ be a function represented in $\ML$ by some term $t_{g}: \Nat \To \Nat$. By Lemma \ref{lemma:ml} we deduce that there exist simple types $A_{1},\dots, A_{n}$ and a term $t^{*}_{g}$ such that $t^{*}_{g}: \Nat[A_{1}]\To \dots \To \Nat[A_{n}]\To \Nat[o]$ in $\STLC$, and $t^{*}_{g} x\dots x\simeq_{\beta\eta}t_{g}x$. 
Let now $f(n)= \prod_{k\leq n}g(k)$. 

Let $C= \langle \Nat[A_{1}],\dots, \Nat[A_{n}]\rangle$ and 
$D= \langle \Nat[o], C, \Nat[o]\rangle$. 
We define the term $t_{f}$ by: 
$$t_{f}= \lambda x. \SF P_{1,3}^{\Nat[o]}\Big ( 
x 
\big (  
\lambda h. \langle v_{1}[h], v_{2}[h], v_{3}[h]\rangle\big)
%
%
% \SF P_{3,3}h)\times (\SF P_{1,3}h),  
% \langle \SF P_{2,3}(\SF P_{1,n}(h))+1,\dots,  \SF P_{2,3}(\SF P_{n,n}(h))+1                   \rangle 
%\big), 
%t^{*}_{g} (\SF P_{2,3}(\SF P_{1,n}(h)))\dots (\SF P_{2,3}(\SF P_{n,n}(h)))        
%\Big)
\langle t^{*}_{g} \B 0\dots \B 0, \langle \B 0,\dots, \B 0\rangle, t_{g}^{*}\B 0\dots \B 0\rangle
\Big)
\ : \ \Nat[D]\To \Nat[o]
$$
where 
\begin{align*}
v_{1}[h]& = ( \SF P^{\Nat[o]}_{3,3}h)\cdot (\SF P_{1,3}^{\Nat[o]}h)\\
v_{2}[h]& = \left\langle \SF P_{1,n}^{\Nat[A_{1}]}(\SF P_{2,3}^{C}(h))+1,\dots,  \SF P_{n,n}^{\Nat[A_{n}]}(\SF P_{2,3}^{C}(h))+1  \right\rangle 
\\
v_{3}[h]& =t^{*}_{g} \Big (\SF P_{1,n}^{\Nat[A_{1}]}(\SF P_{2,3}^{C}(h))+1\Big )\dots\Big (\SF P_{n,n}^{\Nat[A_{n}]}(\SF P_{2,3}^{C}(h))+1\Big )        
\end{align*}
Observe that $t_{f}\B n$ computes the triple $\langle \prod_{k\leq n} t_{g}\B k, \langle \B n, \dots, \B n \rangle, t_{g}\B n\rangle$ and then extracts its first element.
Since $t_{f}$ has type $\Nat[D]\To \Nat[o]$ in $\STLC$, we deduce, again by Lemma \ref{lemma:ml}, that $t_{f}$ has type $\Nat \To \Nat $ in $\ML$.

\section{Proofs from Section \ref{sec:equivalence}}\label{appE}
% !TEX root = Undecidability.tex

\begin{proof}[Proof of Lemma \ref{lemma2}]
%We will prove by simultaneous induction the following claims:

\begin{enumerate}
\item We must consider two cases:
\begin{enumerate}
\item if $\rr{\TT C}[\ ]= \bb{x_{i}}$, then $\rr{\TT C}[\bb{u_{A}}]= \rr{\TT C}[\bb{v_{A}}]=\bb{x_{i}}$.
\item if $\rr{\TT C}[\ ]= \rr{\TT E}[\ ]\bb{\TT C_{1}[\ ]\TT C_{2}[\ ]}$, for some $\rr{\TT E}[\ ]\in \BB G_3$ and $\TT C_{i}[\ ]\in\BB G_{1}$, then by the induction hypothesis
$\rr{\TT E}[\bb{u_{A}}]\simeq_{\beta\eta} \rr{\TT E}[\bb{v_{A}}]$ and
$\rr{\TT C_{i}}[\bb{u_{A}}]\simeq_{\beta\eta} \rr{\TT E}[\bb{v_{A}}]$, so we can conclude.

%
%$$
%\rr{\TT C}[u_{A}]=
%\Lambda X.\lambda x_1x_{2}. u_{A} \ \Lambda X.\lambda y_1y_{2}.y_{2}\star \ X x_1x_{2}
%\simeq_{\beta}\Lambda X.\lambda y_1y_{2}.y_{2}\star
%$$
%$$
%\rr{\TT C}[v_{A}]=
%\Lambda X.\lambda x_1x_{2}. v_{A} \ \Lambda X.\lambda y_1y_{2}.y_{2}\star \ X x_1x_{2}
%\simeq_{\beta}\Lambda X.\lambda y_1y_{2}.(\lambda d.y_{2}\star)\star
%\simeq_{\beta}
%\Lambda X.\lambda y_1y_{2}.y_{2}\star
%$$
\end{enumerate}

\item We must consider two cases:
\begin{enumerate}
\item if $\rr{\TT D}[\ ]= z\Lambda W_{i}.\lambda w_{i}.\rr{\TT F}[\ ]$ for some $\rr{\TT F}[\ ]\in \TT G_{4}$, then 
by the induction hypothesis
$\rr{\TT F}[\bb{u_{A}}]\simeq_{\beta\eta} \rr{\TT F}[\bb{v_{A}}]\simeq_{\beta\eta}w_{i}$, whence
$\TT D[u_{A}]\simeq_{\beta} z\SF{id}\simeq_{\beta}\TT D[v_{A}]$.

\item If $\rr{\TT D}[\ ]= \rr{\TT E}[\ ] \bb W_{i}\TT D_{1}[\ ]\TT D_{2}[\ ]$ where $\rr{\TT E}[\ ]\in \BB G_3$ and $\TT D_{i}[\ ] \in \TT G_{2}$, then by the induction hypothesis
$\rr{\TT E}[\bb{u_{A}}]\simeq_{\beta\eta} \rr{\TT E}[\bb{v_{A}}]\simeq_{\beta}\B b$, where $\B b\in \{\B t, \B f\}$,  since a closed $\beta$-normal term of type $\SF{Bool}$ must be either $\B t$ or $\B f$, 
 and 
$\rr{\TT D_{i}}[\bb{u_{A}}]\simeq_{\beta\eta} \rr{\TT D_{i}}[\bb{v_{A}}]\simeq_{\beta}z_{i}\SF{Id}$.
By letting $j=1$ if $\B b=\B t$ and $j=2$ if $\B b=\B f$, we  deduce then
$\TT D[u_{A}]\simeq_{\beta} \B b \TT D_{1}[u_{A}]\TT D_{2}[u_{A}]\simeq_{\beta}
\TT D_{j}[u_{A}]\simeq_{\beta}z_{i}\SF{Id}$ and we similarly deduce $\TT D[v_{A}]\simeq_{\beta} z_{i}\SF{Id}$.
\end{enumerate}

\item if $\TT E[\ ]\in \{\B t, \B f\}$, the claim is immediate; otherwise, if 
$\rr{\TT E}[\ ]=x\Lambda Y.\lambda y.\lambda z.\rr{\TT D}[\ ]$, where $\TT D\in \BB G_{2}$, 
% and $\TT C_{i}\in \BB G_{1}$, 
then by the induction hypothesis $\Lambda Y.\lambda y.\lambda z.\rr{\TT D}[\bb{u_{A}}]\simeq_{\beta\eta}\Lambda Y.\lambda y.\lambda z. \rr{\TT D}[\bb{v_{A}}]\simeq_{\beta\eta}\Lambda Y.\lambda y.\lambda z.z\SF{Id} =
\bb{\iota_{2}(\mathsf{Id})}$
%and $\TT C_{i}[u_{A}]\simeq_{\beta}\TT C_{i}[v_{A}]$, 
so we can compute
%\begin{equation*}
%\begin{split}
$\rr{\TT E}[\bb{u_{A}}] \simeq_{\beta}
u_{A}\iota_{2}(\SF{Id}) \simeq_{\beta} \B f$
%
%
%\Lambda Z.\lambda x x'. (u_{A} \iota_{2}(\SF{Id})\TT C_{1}[u_{A}]\TT C_{2}[u_{A}]\\
%& \simeq_{\beta}
%\Lambda Z.\lambda x x'. \B f\TT C_{1}[u_{A}]\TT C_{2}[u_{A}] \simeq_{\beta}
%\Lambda Z.\lambda x x'. \TT C_{2}[u_{A}]
%\end{split}
%\end{equation*}
% \bb{u_{A}}\rr{\TT D}[\bb{u_{A}}]\simeq_{\beta} \bb{\B f}
and
%\begin{equation*}
%\begin{split}
$\rr{\TT E}[\bb{v_{A}}] \simeq_{\beta}
\IOor_{\Bool}[\iota_{2}(\SF{Id})] (\lambda x. \B t)(\lambda x.\B f)
\simeq_{\beta} \B f$.
%
%\Lambda Z.\lambda x x'. (u_{A} \iota_{2}(\SF{Id})\TT C_{1}[u_{A}]\TT C_{2}[u_{A}] \\
%& \simeq_{\beta}
%\Lambda Z.\lambda x x'. \SF{Case}_{\SF{Bool}}[\iota_{2}(\SF{Id}] 
%\TT C_{1}[u_{A}]\TT C_{2}[u_{A}] \\
% &\simeq_{\beta}\Lambda Z.\lambda x x'. \B f\TT C_{1}[u_{A}]\TT C_{2}[u_{A}] 
%\simeq_{\beta}
%\Lambda Z.\lambda x x'. \TT C_{2}[u_{A}]
%\end{split}
%\end{equation*}

%
%\begin{equation*}
%\begin{split}
%\rr{\TT E}[\bb{v_{A}}] & \simeq_{\beta} \bb{v_{A}}\rr{\TT D}[\bb{v_{A}}]\simeq_{\beta} 
%\bb{ \mathsf{Case}_{\Bool}}[ \rr{\TT D}[\bb{v_{A}}]] \ \bb{\lambda x.\B t} \ \bb{\lambda x.\B f} \\
%& \simeq_{\beta\eta}
%\bb{ \mathsf{Case}_{\Bool}}[ \bb{\iota_{2}[\mathsf{Id}]}] \ \bb{\lambda x.\B t} \ \bb{\lambda x.\B f} 
%\simeq_{\beta\eta}^{[\text{Prop. }\ref{prop:case}]}
%\bb{(\lambda x.\B f)\mathsf{Id}} 
%\simeq_{\beta}\bb{ \B f}
%\end{split}
%\end{equation*}

\item  We must consider two cases:
\begin{enumerate}
\item If $\rr{\TT F}[\ ]=w$ then the claim is trivially true;
\item If $\rr{\TT F}[\ ]= \TT E[\ ]W \TT F_{1}[\ ] \TT F_{2}[\ ]$, with $\TT E[\ ]\in \BB G_{3}$ and $\TT F_{i}[\ ]\in \BB G_{4}$, then by the induction hypothesis 
$\TT E[u_{A}]\simeq_{\beta} \TT E[v_{A}]\simeq_{\beta}\B b$, where $\B b\in \{\B t, \B f\}$, and 
$\TT F_{i}[u_{A}]\simeq_{\beta}\TT F_{i}[v_{A}]= w_{i}$. Then we deduce 
$\TT F[u_{A}]\simeq_{\beta} \B b \TT F_{1}[u_{A}]\TT F_{2}[u_{A}]\simeq_{\beta} \B b w_{i}w_{i}\simeq_{\beta} w_{i}$ and we similarly deduce $\TT F[v_{A}]\simeq_{\beta}w_{i}$.
%
%
%\bb{\Lambda Z.\lambda z.}\rr{\TT E}[\ ]\bb{zz}$ for some $\rr{ \TT E}[\ ]\in \BB G_{3}$, then by the induction hypothesis $\rr{\TT E}[\bb{u_{A}}]\simeq_{\beta\eta} \rr{\TT E}[\bb{v_{A}}]\simeq_{\beta\eta}\bb{\B f}$, whence
%we can conclude by applying some $\beta$-reduction steps. 
\end{enumerate}
\end{enumerate}
\end{proof}

We now adapt the argument the other contextual equivalence relation $\simeq_{\Nat}^{\Nda}$.

We define $A^{*}$ and the terms $u_{A},v_{A}$ as in the previous case.
%, and we define two terms $\bb{u_{A}}, \bb{v_{A}}: (A^{*}\widetilde + \widetilde\top)\To \Nat$ as follows:
%\begin{equation*}
%\begin{split}
%\bb{u_{A}}  &  = \bb{ \lambda x. \B{0} } \\
%\bb{v_{A} } &  = \bb{ \lambda x.  \IOor_{\Nat}[\ ]   (\lambda x. \B 0)( \lambda x. \B 1)}
%\end{split}
%\end{equation*}
We will use the families of contexts $\BB G_{1}$-$\BB G_{4}$ from the other case and a new family $\BB G_{5}$ defined as follows:

\begin{minipage}{0.63\textwidth}
\begin{equation*}
\begin{split}
\BB G_{5}: \qquad \rr{\TT B}[\ ] & ::=   g \mid f \TT B[\ ]   \mid\rr{\TT E}[\ ]U \bb{\TT B[\ ]\TT B[\ ]} %\\
%\BB G_{1}: \qquad \rr{\TT C}[\ ] & =   \bb{x_{i}} \mid f_{i}\TT C[\ ]   \mid\rr{\TT E}[\ ] Y_{i}\bb{\TT C[\ ]\TT C[\ ]} \\
%\BB G_{2}:\qquad \rr{\TT D}[\ ] & =  \bb{ z}\Lambda W.\lambda w.\rr{\TT F}[\ ]\mid \bb{\TT E}[\ ]\bb{W} \TT D[\ ] \TT D[\ ]\\
%\BB G_{3}: \qquad \rr{\TT E}[\ ]&  =   \B t\mid \B f\mid
%% \Lambda Z.\lambda x.\lambda x'. ([\ ] \Lambda Y.\lambda y.\lambda z.\rr{\TT D}[\ ])\TT C_{1}[\ ]\TT C_{2}[\ ]
%x(\Lambda Y.\lambda y.\lambda z.\rr{\TT D}[\ ])
%  \\
%\BB G_{4}: \qquad \rr{\TT F}[\ ]&= w\mid \TT E[\ ]W\TT F[\ ]\TT F[\ ]
%\bb{ \mathsf{Id}} \mid\bb{ \Lambda Z.\lambda z.\TT E}[ \  ]\bb{Zzz}
\end{split}
\end{equation*}
%\end{minipage}
%\begin{minipage}{0.33\textwidth}
%\begin{equation*}
%\begin{split}
%&  (\rr{\TT E}[\ ]\in \BB G_{3})%\\
%%&  (\rr{\TT E}[\ ]\in \BB G_{3})\\
%% & (\rr{\TT E}[\ ]\in \BB G_{3}, \rr{\TT F}\in \BB G_{4}) \\
%%& (\rr{\TT D}[\ ]\in \BB G_{2}
%%%, \TT C_{i}[\ ]\in \BB G_{1}
%%)\\
%%& (\rr{\TT E}[\ ]\in \BB G_{3})
%\end{split}
%\end{equation*}
\end{minipage}

%\begin{figure}%, depending on two free variables $x_1,x_{2}$:
%\caption{Sets of contexts $\BB G_{1}-\BB G_{4}$.}
%\label{fig:contexts}
%\end{figure}

\begin{lemma}\label{lemma2bis}
For all $\rr{\TT B}[ \ ]\in \BB G_5$, $\rr{\TT B}[\bb{u_{A}}]\simeq_{\beta\eta}\rr{\TT C}[\bb{v_{A}}]$.
\end{lemma}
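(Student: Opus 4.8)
The plan is to prove Lemma~\ref{lemma2bis} by a straightforward structural induction on the context $\rr{\TT B}[\ ]\in\BB G_5$, in the same spirit as the proof of Lemma~\ref{lemma2}, and invoking that lemma --- specifically its third clause, on the $\BB G_3$-contexts --- as a black box for the embedded boolean-valued subcontexts. The point that keeps this short is that $\simeq_{\beta\eta}$ is a congruence, i.e.~closed under application, type application and $\lambda$- and $\Lambda$-abstraction; so all one has to do is descend through the three productions of the grammar defining $\BB G_5$ and push the induction hypothesis through, without reducing any term.

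Concretely, I would argue by cases on the last production used to form $\rr{\TT B}[\ ]$. If $\rr{\TT B}[\ ]=g$, then $\rr{\TT B}[\bb{u_{A}}]=g=\rr{\TT B}[\bb{v_{A}}]$ and there is nothing to prove. If $\rr{\TT B}[\ ]=f\,\rr{\TT B'}[\ ]$ with $\rr{\TT B'}[\ ]\in\BB G_5$, then the induction hypothesis gives $\rr{\TT B'}[\bb{u_{A}}]\simeq_{\beta\eta}\rr{\TT B'}[\bb{v_{A}}]$, whence $f\,\rr{\TT B'}[\bb{u_{A}}]\simeq_{\beta\eta}f\,\rr{\TT B'}[\bb{v_{A}}]$ by closure of $\simeq_{\beta\eta}$ under application. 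If $\rr{\TT B}[\ ]=\rr{\TT E}[\ ]\,U\,\rr{\TT B_{1}}[\ ]\,\rr{\TT B_{2}}[\ ]$ with $\rr{\TT E}[\ ]\in\BB G_3$ and $\rr{\TT B_{1}}[\ ],\rr{\TT B_{2}}[\ ]\in\BB G_5$, then Lemma~\ref{lemma2}(3) gives $\rr{\TT E}[\bb{u_{A}}]\simeq_{\beta\eta}\rr{\TT E}[\bb{v_{A}}]$, the induction hypothesis gives $\rr{\TT B_{i}}[\bb{u_{A}}]\simeq_{\beta\eta}\rr{\TT B_{i}}[\bb{v_{A}}]$ for $i=1,2$, and composing these three equivalences through the congruence (with one use of closure under type application, for the witness $U$) yields $\rr{\TT B}[\bb{u_{A}}]\simeq_{\beta\eta}\rr{\TT B}[\bb{v_{A}}]$. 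One could equally evaluate the embedded $\BB G_3$-context first, as is done in the proof of Lemma~\ref{lemma2}, but this is unnecessary here, since only $\simeq_{\beta\eta}$-equality --- and no specific normal form --- is being asserted.

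I do not anticipate any genuine obstacle in Lemma~\ref{lemma2bis} itself; the induction is on the well-founded structure of the grammar generating $\BB G_5$ and each case strictly decreases context size. The one point I would check carefully is that the $\BB G_3$-contexts occurring inside a $\BB G_5$-context are exactly those covered by Lemma~\ref{lemma2} --- which is immediate from the mutual definition of the families $\BB G_1$-$\BB G_5$ --- and that the substituents $\bb{u_{A}},\bb{v_{A}}$ interact with a $\BB G_5$-context only through such embedded $\BB G_3$-contexts, so that no case beyond the three above arises. The substantive work for the $\simeq_{\Nat}^{\Nda}$ argument lies not in Lemma~\ref{lemma2bis} but in the $\Nat$-analogue of Lemma~\ref{lemma1} --- classifying $\beta$-normal separating contexts of type $(A^{*}\widetilde +\widetilde \top)\To\Nat$ as lying in $\BB G_5$ unless $A$ is inhabited in $\Ndac$ --- which is obtained by adjoining one further clause to the simultaneous induction proving Lemma~\ref{lemma1}; Lemma~\ref{lemma2bis} is then precisely the companion needed to turn that classification into the conclusion $\rr{\TT K}[\bb{u_{A}}]\simeq_{\beta\eta}\rr{\TT K}[\bb{v_{A}}]$, and thence into the undecidability of $\simeq_{\Nat}^{\Nda}$.
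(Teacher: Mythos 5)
Your proof is correct and follows essentially the same route as the paper's: a structural induction on the grammar of $\BB G_{5}$, with the base case trivial, the $f\,\TT B'[\ ]$ case by the induction hypothesis and congruence, and the $\TT E[\ ]U\,\TT B_{1}[\ ]\TT B_{2}[\ ]$ case by combining Lemma~\ref{lemma2} with the induction hypothesis. Your observation that mere congruence of $\simeq_{\beta\eta}$ suffices in the third case (no need to evaluate the $\BB G_{3}$-subcontext to a boolean normal form, since no specific normal form is asserted) is a valid and slightly cleaner reading of the step the paper leaves implicit.
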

\begin{proof}
%We prove by simultaneous induction the following claims:
%\begin{enumerate}
%\item for all $\rr{\TT C}[\ ]\in \BB G_1$, $\rr{\TT C}[\bb{u_{A}}]\simeq_{\beta\eta}\rr{\TT C}[\bb{v_{A}}]$.
%\item for all $\rr{\TT D}[\ ]\in \BB G_2$, $\rr{\TT D}[\bb{u_{A}}]\simeq_{\beta\eta}\rr{\TT D}[\bb{v_{A}}]\simeq_{\beta\eta} \bb{z_{i} \mathsf{Id}}$.
%\item for all $\rr{\TT E}[\ ]\in \BB G_3$, $\rr{\TT E}[\bb{u_{A}}]\simeq_{\beta\eta}\rr{\TT E}[\bb{v_{A}}]$.
%%\bb{\B f}$.
%\item for all $\rr{\TT F}[\ ]\in \BB G_4$, $\rr{\TT F}[\bb{u_{A}}]\simeq_{\beta\eta}\rr{\TT F}[\bb{v_{A}}]\simeq_{\beta\eta} \bb{w_{i}}$.
%\item for all $\rr{\TT B}[\ ]\in \BB G_5$, $\rr{\TT B}[\bb{u_{A}}]\simeq_{\beta\eta}\rr{\TT B}[\bb{v_{A}}]$.
%
%
%
%
%\end{enumerate}
%The only new case is the last one, while the other can be treated as in the proof of Lemma \ref{lemma2}.
There are three cases:
\begin{enumerate}[a.]
\item if $\TT B[\ ]=g_{i}$, then $\TT B[u_{A}]=\TT B[v_{A}]=g_{i}$;
\item if $\TT B[\ ]=f\TT B'[\ ]$, where $\TT B'[\ ]\in \BB G_{5}$, then by I.H. $\TT B'[u_{a}]\simeq_{\beta\eta}\TT B'[v_{a}]$, so $\TT B[u_{A}]=f\TT B'[u_{A}]\simeq_{\beta\eta}f\TT B'[v_{A}]=\TT B[v_{A}]$;
\item if $\TT B[\ ]= \TT E[\ ]U  \TT B_{1}[\ ]\TT B_{2}[\ ]$, where $\TT E[\ ]\in \TT G_{3}$ and $\TT B_{i}[\ ]\in \TT G_{5}$, then we can similarly conclude using Lemma \ref{lemma2} and the I.H.
\end{enumerate}
\end{proof}

\begin{lemma}\label{lemma1bis}
Let $\rr{\TT K}[\ ]: (A^{*}\widetilde +\widetilde \top)\To \Bool \vdash^{f:U\To U,g\mapsto U} U$ be a $\beta$-normal term context. Then either $ A$ is provable in $\Ndac$ or $\rr{\TT K}[\ ]\in\BB G_5$.
\end{lemma}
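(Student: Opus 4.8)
The plan is to mirror exactly the structure of the proof of Lemma \ref{lemma1}, replacing the role of $\Bool$ as the target type by the type $U\To U$ together with a free variable $g\mapsto U$, and the role of the family $\BB G_1$ by $\BB G_5$. The main claim to establish, by contradiction, is: either for some contexts $\Gamma,\Theta,\Delta,\Sigma$ of the same shape as in Lemma \ref{lemma1} there exists a context $\rr{\TT H}[\ ]: (A^{*}\widetilde +\widetilde \top)\To\Bool \vdash^{\Gamma,\Theta,\Delta,\Sigma} A^{*}$ --- in which case we extract, exactly as before, an inhabitant of $A$ in $\Ndac$ by applying the substitution $\theta$ collapsing all the auxiliary variables (and $Y$) to $\clubsuit$ and feeding in $\star$ --- or else $\rr{\TT K}[\ ]\in\BB G_5$. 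The difference is only in the ``base'' case of the simultaneous induction: here the relevant judgement has target $U$ with a distinguished hypothesis $f\mapsto U\To U$ in scope, so the list of claims to prove simultaneously becomes: (1') if the target is $U$ (in the context extended by $f:U\To U,g:U$), then $\rr{\TT K}[\ ]\in\BB G_5$; (2) if the target is some $Y_i$, then $\rr{\TT K}[\ ]\in\BB G_2$; (3) if the target is $\Bool$ and the context is an elimination context, then $\rr{\TT K}[\ ]\in\BB G_3$; (4) if the target is some $W_i$, then $\rr{\TT K}[\ ]\in\BB G_4$. Claims (2), (3), (4) are verbatim the ones already proved in Lemma \ref{lemma1}, since their proofs only used the shapes of the types $A^*\widetilde+\widetilde\top$, $\widetilde\top$, $\Bool$ and never referred to the outermost target; so I would simply invoke that argument and concentrate on (1').

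For (1'), I would do the standard case analysis on a $\beta$-normal elimination/introduction context of target type $U$. Since $U$ is a type variable, $\rr{\TT K}[\ ]$ must be an elimination context, headed by a variable whose return type is $U$. The variables in scope with return type $U$ are: $g$ itself (giving $\rr{\TT K}[\ ]=g$, hence in $\BB G_5$ via the first production); $f:U\To U$ (giving $\rr{\TT K}[\ ]=f\,\rr{\TT K'}[\ ]$ with $\rr{\TT K'}[\ ]$ of target $U$, so by the induction hypothesis $\rr{\TT K'}[\ ]\in\BB G_5$ and hence $\rr{\TT K}[\ ]\in\BB G_5$ via the second production); and the hole itself, of type $(A^{*}\widetilde +\widetilde \top)\To \Bool$, applied to an argument of type $A^*\widetilde+\widetilde\top$ and then eliminated --- but here the elimination of a $\Bool$ at type $U$ takes the form $\rr{\TT E}[\ ]\,U\,\rr{\TT K_1}[\ ]\,\rr{\TT K_2}[\ ]$ with $\rr{\TT E}[\ ]$ an elimination context of target $\Bool$ and $\rr{\TT K_i}[\ ]$ of target $U$; by the induction hypothesis $\rr{\TT E}[\ ]\in\BB G_3$ and $\rr{\TT K_i}[\ ]\in\BB G_5$, hence $\rr{\TT K}[\ ]\in\BB G_5$ via the third production. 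No other head variable can produce target $U$ (the variables in $\Gamma,\Delta,\Sigma$ have return types among the $Z_i,Y_i$, and those in $\Theta$ the $W_i$, all distinct from $U$), so the case analysis is exhaustive and claim (1') follows. The main claim then follows from (1') by taking $\Gamma=\Theta=\Delta=\Sigma=\emptyset$ and the ambient context $f:U\To U,g:U$, and the statement of the lemma follows as in Lemma \ref{lemma1}.

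The part that needs the most care is checking that the shape of an elimination context of target $\Bool$ is forced to be $x(\Lambda Y.\lambda y.\lambda z.\rr{\TT D}[\ ])$ --- i.e.\ that the only way to produce a $\Bool$ as an \emph{elimination} context, given the hypotheses available, is by eliminating the hole at $A^*\widetilde+\widetilde\top$, whose destructor (written via $\IOor$) forces a $\Lambda$-abstraction over $Y$ followed by two $\lambda$-abstractions of types $A^*\To Y$ and $\widetilde\top\To Y$; and that, in turn, the target $Y$ is then handled by claim (2). But this is exactly claim (3) of Lemma \ref{lemma1}, already established there, so in the write-up I would simply say that claims (2)--(4) and the extraction of an inhabitant from $\rr{\TT H}[\ ]$ are proved exactly as in Lemma \ref{lemma1}, and give in full detail only the new base case (1').
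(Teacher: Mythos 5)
Your proposal matches the paper's proof essentially verbatim: the paper runs the same simultaneous induction with an extra ambient context $\Phi=\{f\mapsto U\To U,\,g\mapsto U\}$, adds a fifth claim for target type $U$ handled by exactly your three cases ($g$, then $f\,\rr{\TT K'}[\ ]$, then $\rr{\TT K'}[\ ]\,U\,\rr{\TT K_1}[\ ]\rr{\TT K_2}[\ ]$ with the head in $\BB G_3$ by the induction hypothesis), and defers the remaining claims and the extraction of an inhabitant of $A$ to the proof of Lemma \ref{lemma1}. The only cosmetic difference is that the paper retains the $Z_i$-target claim among the simultaneous claims while you drop it, which is harmless since the other claims do not depend on it.
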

\begin{proof}
We will prove the following claim: either there exists contexts $\Gamma,\Theta,\Delta, \Sigma$, where
\begin{equation}\label{eq:delta}
\begin{split}
\Phi& =  \{f\mapsto U\To U, g:U\}\\
\Gamma& = \{ x_{1}\mapsto Z_{1},x'_{1}\mapsto Z_{1},\dots, x_{p}\mapsto Z_{p},x'_{p}\mapsto Z_{p}\}  \\
\Theta & = \{ w_{1}\mapsto W_{1},\dots, w_{q}\mapsto W_{q}\}\\
 \Delta& =\{ y_{1}\mapsto A^{*}\To Y_{1},\dots, y_{r}\mapsto A^{*}\To Y_{r}\} \\
 \Sigma & =\{ z_{1}\mapsto \widetilde \top \To Y_{1},\dots, z_{r}\mapsto  \widetilde \top\To Y_{r}\}
\end{split}
\end{equation}
for some $p,q,r\in \BB N$ and variables $U,Z_{1},\dots, Z_{p},W_{1},\dots, W_{q},Y_{1},\dots, Y_{r}$  pairwise distinct and disjoint from $A$, 
and a context $\rr{\TT H}[\ ]: (A^{*}\widetilde +\widetilde \top)\To \Bool \vdash^{\Phi,\Gamma,\Theta,\Delta, \Sigma} A^{*}$, or $\rr{\TT K}[\ ]\in \BB G_1$.
If the main claim is true we can deduce the statement of the lemma as follows: suppose $\rr{\TT K}[\ ]\notin \BB G_1$.
Then let $\theta$ be the substitution sending all variables in $\Phi,\Gamma,\Theta,\Delta,\Sigma$ plus $Y$ onto $\clubsuit$ and being the identity on all other variables.
Then $\rr{\TT H\theta}[\ ]: ( (\clubsuit \To A)\widetilde + \widetilde \top)\To \Bool \vdash^{\Delta\theta, f\mapsto \clubsuit\To \clubsuit, g\mapsto \clubsuit}\mapsto  \clubsuit \To A$. Then we have $\Phi\theta,\Gamma\theta,\Theta\theta,\Delta\theta,\Sigma\theta\vdash \bb t:A$, where 
$
\bb t= \bb{\TT H\theta}[ \lambda x. \B t ]\bb{\star}
$ and we can conclude that $\vdash \bb{t'}:A$ holds where $\bb{t'}$ is obtained from $\bb t$ by 
substituting the variables in $\Phi$ by $\lambda x.x$ and $\star$, those in $\Gamma,\Theta$ by $\star$ and those in $\Delta$ and $\Sigma$ by $\lambda x.\star$.

Let us then prove the main claim.
Suppose that for no $\Phi,\Gamma,\Theta,\Delta, \Sigma$ there exists a context $\rr{\TT H}[\ ]: (A^{*}\widetilde +\widetilde \top)\To \Bool \vdash^{\Phi,\Gamma,\Theta,\Delta,\Sigma} A^{*}$. We will show by simultaneous induction the following claims:
\begin{enumerate}

\item for all $\Phi,\Gamma,\Theta,\Delta, \Sigma$ as in Eq.~\eqref{eq:delta}, if $\rr{\TT K}[\ ]: (A^{*}\widetilde +\widetilde \top)\To \Bool \vdash^{\Phi,\Gamma,\Theta,\Delta, \Sigma} Z_{i}$, then $\rr{\TT K}[\ ]\in \BB G_1$;
\item for all $\Phi,\Gamma,\Theta,\Delta, \Sigma$ as in Eq.~\eqref{eq:delta}, if $\rr{\TT K}[\ ]: (A^{*}\widetilde +\widetilde \top)\To \Bool \vdash^{\Phi,\Gamma,\Theta,\Delta, \Sigma} Y_{i}$, then $\rr{\TT K}[\ ]\in \BB G_2$;

\item for all $\Phi,\Gamma,\Theta,\Delta, \Sigma$ as in Eq.~\eqref{eq:delta}, if $\rr{\TT K}[\ ]: (A^{*}\widetilde +\widetilde \top)\To \Bool \vdash^{\Phi,\Gamma,\Theta,\Delta, \Sigma} \Bool$ and $\rr{\TT K}[\ ]$ is an elimination context, then $\rr{\TT E}[\ ]\in \BB G_3$;

\item for all $\Phi,\Gamma,\Theta,\Delta, \Sigma$ as in Eq.~\eqref{eq:delta}, if $\rr{\TT K}[\ ]: (A^{*}\widetilde +\widetilde \top)\To \Bool \vdash^{\Phi,\Gamma,\Theta,\Delta, \Sigma} W_{i}$, then $\rr{\TT K}[\ ]\in \BB G_4$;
\item for all $\Phi,\Gamma,\Theta,\Delta, \Sigma$ as in Eq.~\eqref{eq:delta}, if $\rr{\TT K}[\ ]: (A^{*}\widetilde +\widetilde \top)\To \Bool \vdash^{\Phi,\Gamma,\Theta,\Delta, \Sigma} U$, then $\rr{\TT K}[\ ]\in \BB G_5$.

\end{enumerate}
The main claim then follows from 1. by taking $\Gamma=\{x\mapsto Z,x'\mapsto Z\} $ and $\Theta=\Delta=\Sigma=\emptyset$.

We argue for each case separately. The last case is new, while the other four can be treated as in the proof of Lemma \ref{lemma1}. There exist three possibilities for $\TT K[\ ]$:

\begin{enumerate}[a.]
\item $\rr{\TT K}[\ ]=\bb{g}$, hence $\rr{\TT K}[\ ]\in \BB G_5$;
\item $\rr{\TT K}[\ ]=f\TT K'[\ ]$, then by I.H. $\TT K'[\ ]\in \BB G_{5}$, so $\TT K[\ ]\in \BB G_{5}$;
\item $\rr{\TT K}[\ ]=  \rr{\TT K'}[\ ] \bb{U  \TT K_{1}[\ ]\TT K_{2}[\ ]}$, where $\rr{\TT K'}[\ ]: (A^{*}\widetilde +\widetilde \top)\To \Bool \vdash^{\Phi,\Gamma,\Theta,\Delta,\Sigma} \Bool$,  
 $\rr{\TT K_{i}}[\ ]: (A^{*}\widetilde +\widetilde \top)\To \Bool \vdash^{\Phi,\Gamma,\Theta,\Delta,\Sigma} U$, 
  %$\rr{\TT B_{2}}[\ ]: (A^{*}\widetilde +\widetilde \top)\To \Bool \vdash^{\Phi,\Gamma,\Theta,\Delta,\Sigma} U$, 
  and where $\TT K'[\ ]$ is an elimination context. By the induction hypothesis then $\rr{\TT K'}[\ ]\in \BB G_3, \TT K_{i}[\ ]\in \BB G_{5}$, hence $\rr{\TT K}[\ ]\in \BB G_5$.

\end{enumerate}
\end{proof}

\begin{proposition}\label{prop:sbobis}
$\bb{u_{A}}\not\simeq_{\Nat}^{\Nda}\bb{v_{A}}$ iff $A$ is provable in $\Ndac$.
\end{proposition}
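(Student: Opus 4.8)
The plan is to follow the proof of Proposition~\ref{prop:sbo} almost line for line, substituting Lemma~\ref{lemma1bis} for Lemma~\ref{lemma1} and Lemma~\ref{lemma2bis} for Lemma~\ref{lemma2}. The terms $\bb{u_{A}},\bb{v_{A}}$ and the type $A^{*}=Y\To A[Y/\clubsuit]$ are the same ones used in the $\Bool$-case, so I would establish the two implications of the biconditional independently.

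First, suppose $A$ is not provable in $\Ndac$; the goal is $\bb{u_{A}}\simeq_{\Nat}^{\Nda}\bb{v_{A}}$. I would take an arbitrary context $\rr{\TT K}[\ ]:(A^{*}\widetilde + \widetilde\top)\To\Bool\vdash^{\emptyset}\Nat$, assume it $\beta$-normal, and observe that, since $\Nat=\forall U.(U\To U)\To(U\To U)$, $\eta$-expanding at the top produces $\rr{\TT K}[\ ]=\bb{\Lambda U.\lambda f.\lambda g.}\,\rr{\TT K'}[\ ]$ with $\rr{\TT K'}[\ ]:(A^{*}\widetilde + \widetilde\top)\To\Bool\vdash^{f\mapsto U\To U,\,g\mapsto U}U$ still $\beta$-normal. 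Lemma~\ref{lemma1bis} together with the hypothesis on $A$ then forces $\rr{\TT K'}[\ ]\in\BB G_{5}$, whence Lemma~\ref{lemma2bis} gives $\rr{\TT K'}[\bb{u_{A}}]\simeq_{\beta\eta}\rr{\TT K'}[\bb{v_{A}}]$ and therefore $\rr{\TT K}[\bb{u_{A}}]\simeq_{\beta\eta}\rr{\TT K}[\bb{v_{A}}]$. Since $\rr{\TT K}[\ ]$ was arbitrary, this is the claim.

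Conversely, suppose $\vdash t:A$ holds in $\Ndac$; the goal is to exhibit a $\Nda$-context separating $\bb{u_{A}}$ and $\bb{v_{A}}$. I would set $\bb{t^{*}}=\bb{\lambda y.t}[\bb y/\bb{\star}]$, so that $\vdash\bb{t^{*}}:A^{*}$ holds in $\Nda$ (the constant $\star$ having been abstracted away). Reusing the $\Bool$-separating context of Proposition~\ref{prop:sbo}, put $\rr{\TT K_{0}}[\ ]=x(\bb{\iota_{1}(t^{*})})$, which satisfies $\rr{\TT K_{0}}[\bb{u_{A}}]\simeq_{\beta}\bb{\B f}$ and $\rr{\TT K_{0}}[\bb{v_{A}}]\simeq_{\beta\eta}\bb{\B t}$. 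Since $\Nda$ forbids instantiating $\Bool$ with $\Nat$, I would post-compose with an atomic conditional,
$$
\rr{\TT K}[\ ]\ =\ \bb{\Lambda U.\lambda f.\lambda g.}\ \rr{\TT K_{0}}[\ ]\,U\,(fg)\,g\ :\ (A^{*}\widetilde + \widetilde\top)\To\Bool\vdash^{\emptyset}\Nat ,
$$
which is a legal context of $\Nda$ since its only new type instantiation is the atomic $\rr{\TT K_{0}}[\ ]U$. Then $\rr{\TT K}[\bb{v_{A}}]\simeq_{\beta\eta}\bb{\Lambda U.\lambda f.\lambda g.\,\B t\,U\,(fg)\,g}\simeq_{\beta}\bb{\B 1}$ and $\rr{\TT K}[\bb{u_{A}}]\simeq_{\beta\eta}\bb{\Lambda U.\lambda f.\lambda g.\,\B f\,U\,(fg)\,g}\simeq_{\beta}\bb{\B 0}$, and $\bb{\B 0},\bb{\B 1}$ are distinct $\beta\eta$-normal forms of type $\Nat$; hence $\rr{\TT K}[\ ]$ separates the two terms.

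Granting Lemmas~\ref{lemma1bis} and~\ref{lemma2bis}, the remaining work is essentially bookkeeping; the two points that need care are that the top-level $\eta$-expansion of a $\Nat$-valued context yields precisely the judgement shape demanded by Lemma~\ref{lemma1bis} (in particular, that $\rr{\TT K'}[\ ]$ stays $\beta$-normal), and that the conditional used in the second implication introduces no non-atomic instantiation, so that $\rr{\TT K}[\ ]$ is a context of $\Nda$ rather than merely of $\Fone$. All the genuine combinatorial content has already been packaged into Lemmas~\ref{lemma1bis} and~\ref{lemma2bis}.
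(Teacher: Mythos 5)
Your proof is correct and follows exactly the route the paper intends (the paper itself only says ``similar to Proposition~\ref{prop:sbo}''): the non-provable direction via Lemmas~\ref{lemma1bis} and~\ref{lemma2bis} applied to the body of the $\eta$-expanded $\Nat$-context, and the provable direction by post-composing the $\Bool$-separating context with an atomic conditional to land in $\B 0$ versus $\B 1$. The only cosmetic caveat is that after the top-level $\eta$-expansion one may need a few $\beta$-steps before $\rr{\TT K'}[\ ]$ is literally $\beta$-normal, but this is the same implicit normalization the paper uses in Proposition~\ref{prop:sbo}.
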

\begin{proof}
The argument is similar to the proof of Proposition \ref{prop:sbo}.
%Suppose $\vdash t:A$ is derivable in $\Ndac$, then by letting $\bb{t^{*}}= \bb{\lambda y. t}[\bb y/\bb{\star}]$ we deduce
%$\vdash\bb{t^{*}}: A^{*}$ and we can show $\bb{u_{A}}\not\simeq_{\Bool}^{\Nda}\bb{v_{A}}$, by letting
%$
%\rr{\TT E}[\ ] =  [\ ] \bb{\iota_{1}[t^{*}]}
%$, 
%since $\rr{\TT E}[ \bb{u_{A}}]\simeq_{\beta}\bb{\B f} $ and
%$\rr{\TT E}[\bb{v_{A}}]\simeq_{\beta}
%\bb{\mathsf{Case}_{\Bool}[\iota_{1}[t^{*}]] \ \lambda x.\B t \ \lambda x.\B f} \simeq_{\beta\eta}
%\bb{(\lambda x.\B t) t^{*}} \simeq_{\beta} \bb{\B t}$.
%
%Conversely, suppose $A$ is not provable in $\Ndac$.
%Any context $\rr{\TT C}[\ ]: (A^{*}\widetilde +\widetilde \top)\to \Bool\vdash^{\emptyset} \Bool$ can be written, up to $\eta$-equivalence, as $\rr{\TT C}[\ ]=\bb{ \Lambda Z.\lambda x_1x_{2}.}\rr{\TT C'}[\ ]$, with
% $\rr{\TT C'}[\ ]: (A^{*}\widetilde +\widetilde \top)\to \Bool\vdash^{\bb{x_{1}}:Z, \bb{x_{2}}:Z} \Bool$. As we can suppose $\rr{\TT C}[\ ]$ to be $\beta$-normal, by Lemma \ref{lemma1}, it must be $\rr{\TT C'}[\ ]\in \BB G_{1}$. Hence, by Lemma \ref{lemma2} we deduce that $\rr{\TT C}[\bb{u_{A}}]\simeq_{\beta\eta} \rr{\TT C}[\bb{v_{A}}]$. Moreover, as a closed term of type $\Bool$ must be either $\B t$ or $\B f$, from $\rr{\TT C}[\bb{u_{A}}]\simeq_{\beta\eta} \rr{\TT C}[\bb{v_{A}}]$ we can deduce
% $\rr{\TT C}[\bb{u_{A}}]\simeq_{\beta} \rr{\TT C}[\bb{v_{A}}]$.
%We have thus shown that $\bb{u_{A}}\simeq_{\Bool}^{\Nda}\bb{v_{A}}$.
%
\end{proof}

\begin{theorem}
The congruence $\simeq_{\Nat}^{\Nda}$ is undecidable.
%The $\mathrm{(CE)}$ problem for any of the systems $\Nda_{\alpha,\beta}$ is undecidable.
\end{theorem}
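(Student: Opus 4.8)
The plan is to obtain undecidability of $\simeq_{\Nat}^{\Nda}$ as an immediate corollary of Proposition \ref{prop:sbobis}, mirroring exactly the way the undecidability of $\simeq_{\Bool}^{\Nda}$ follows from Proposition \ref{prop:sbo}. First I would note that the assignment $A \mapsto (\bb{u_A},\bb{v_A})$ is \emph{effective}: given a type $A$ of $\Ndac$ one picks a fresh variable $Y$, forms $A^{*}=Y\To A[Y/\clubsuit]$, and writes down the two closed $\Nda$-terms $\bb{u_A},\bb{v_A}$ of type $(A^{*}\widetilde + \widetilde\top)\To\Bool$ exactly as in Appendix \ref{appE}. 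This takes finitely many steps, so $A\mapsto(\bb{u_A},\bb{v_A})$ is a computable reduction.

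Next, by Proposition \ref{prop:sbobis}, $\bb{u_A}\not\simeq_{\Nat}^{\Nda}\bb{v_A}$ holds precisely when $A$ is provable (inhabited) in $\Ndac$; equivalently, $\bb{u_A}\simeq_{\Nat}^{\Nda}\bb{v_A}$ holds precisely when $A$ is \emph{not} inhabited in $\Ndac$. Hence a decision procedure for the congruence $\simeq_{\Nat}^{\Nda}$ (restricted, say, to inputs of the shape $(\bb{u_A},\bb{v_A})$) would yield a decision procedure for non-inhabitation in $\Ndac$, and therefore — taking complements — for the type inhabitation problem (TI) of $\Ndac$.

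It then remains to recall that (TI) for $\Ndac$ is undecidable. This was already observed: the embedding of the undecidable dyadic fragment $\monuno$ into $\Nda$ from Section \ref{sec:undeci} (Proposition \ref{prop:urzy0} together with Theorem \ref{th:urzy}, yielding Corollary \ref{cor:urzy}) goes through verbatim for $\Ndac$, since $\Ndac$ is a conservative extension of $\Nda$ by a fresh type constant $\clubsuit$ and term constant $\star:\clubsuit$, and the types occurring in the target of the embedding do not mention $\clubsuit$, so the extra constants cannot create new inhabitants of $\clubsuit$-free types. Combining this with the reduction above gives the theorem.

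The genuine difficulty is not in this last deduction but in Proposition \ref{prop:sbobis}, and specifically in its ``exhaustiveness'' half, Lemma \ref{lemma1bis}: one must show that every $\beta$-normal term context $\TT K[\ ] : (A^{*}\widetilde + \widetilde\top)\To\Bool \vdash^{f:U\To U,\, g\mapsto U} U$ either already lies in the family $\BB G_5$ — in which case it fails to separate $\bb{u_A}$ from $\bb{v_A}$ by Lemma \ref{lemma2bis} — or, after the substitution collapsing all auxiliary variables to $\clubsuit$, encodes an inhabitant of $A$ in $\Ndac$. This forces the simultaneous induction over the five context families $\BB G_1$–$\BB G_5$ and a careful case analysis, at each return type, of which head variables and which elimination forms are available; it is the part that in the $\simeq_{\Bool}^{\Nda}$ case occupies the bulk of Lemma \ref{lemma1}, and here needs only the extra clause for the new family $\BB G_5$ and the base type $U$.
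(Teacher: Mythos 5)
Your proposal is correct and follows exactly the paper's route: the theorem is an immediate corollary of Proposition \ref{prop:sbobis} (the effective reduction $A\mapsto(\bb{u_A},\bb{v_A})$ from type inhabitation in $\Ndac$ to $\simeq_{\Nat}^{\Nda}$) together with the undecidability of (TI) for $\Ndac$ inherited from Section \ref{sec:undeci}, and you rightly locate the real work in Lemma \ref{lemma1bis} and Lemma \ref{lemma2bis}, which the paper likewise treats as the substantive content. Nothing further is needed.
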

%\begin{proof}
%By Proposition \ref{prop:sbobis}, if $\simeq_{\Nat}^{\Nda}$ were decidable, we would be able to decide type inhabitation in $\Ndac$, which is impossible.\end{proof}

}

\end{document}